\theoremstyle{plain}
\newtheorem{lemma}{Lemma}
\newtheorem{proposition}[lemma]{Proposition}
\newtheorem{theorem}[lemma]{Theorem}
\newtheorem{definition}[lemma]{Definition}
\theoremstyle{definition}
\newtheorem*{remark}{Remark}
\newcommand{\1}{\mathbb{1}}
\newcommand{\Hom}{\mathrm{Hom}}
\newcommand{\End}{\mathrm{End}}
\newcommand{\SO}{\mathrm{SO}}
\newcommand{\SL}{\mathrm{SL}}
\newcommand{\SU}{\mathrm{SU}}
\newcommand{\Spin}{\mathrm{Spin}}
\newcommand{\Cl}{C\ell}
\newcommand{\fsl}{\mathfrak{sl}}
\newcommand{\fso}{\mathfrak{so}}
\newcommand{\stab}{\mathfrak{stab}}
\newcommand{\fg}{\mathfrak{g}}
\newcommand{\fh}{\mathfrak{h}}
\newcommand{\fk}{\mathfrak{k}}
\newcommand{\fa}{\mathfrak{a}}
\newcommand{\fp}{\mathfrak{p}}
\newcommand{\fX}{\mathfrak{X}}
\newcommand{\RR}{\mathbb{R}}
\newcommand{\ZZ}{\mathbb{Z}}
\newcommand{\CC}{\mathbb{C}}
\newcommand{\eD}{\mathscr{D}}
\newcommand{\eE}{\mathscr{E}}
\newcommand{\eL}{\mathscr{L}}
\newcommand{\sbar}{\overline{s}}
\newcommand{\diag}{\mathrm{diag}}
\newcommand{\be}{\boldsymbol{e}}
\newcommand{\ev}{\operatorname{ev}}
\newcommand{\im}{\operatorname{im}}
\newcommand{\vol}{\operatorname{vol}}
\newcommand{\diver}{\operatorname{div}}
\newcommand{\one}{\omega^{(1)}}
\newcommand{\two}{\omega^{(2)}}
\newcommand{\ttwo}{\widetilde\omega^{(2)}}
\newcommand{\three}{\omega^{(3)}}
\DeclareMathOperator{\id}{Id}
\DeclareMathOperator{\Mat}{Mat}
\DeclareMathOperator{\AdS}{AdS}
\DeclareMathOperator{\Sph}{S}
\DeclareMathOperator{\Ric}{Ric}
\DeclareMathOperator{\NW}{NW}
\DeclareSymbolFont{greekletters}{OML}{cmr}{m}{it}
\DeclareMathSymbol{\varsigma}{\mathalpha}{greekletters}{"26}
\definecolor{orange}{rgb}{1.0, 0.5, 0.0}
\begin{document}
\title{Killing superalgebras for Lorentzian four-manifolds}
\author[de~Medeiros]{Paul de Medeiros}
\author[Figueroa-O'Farrill]{José Figueroa-O'Farrill}
\author[Santi]{Andrea Santi}
\address[PdM]{Department of Mathematics and Natural Sciences,
  University of Stavanger, 4036 Stavanger, Norway}
\address[JMF,AS]{Maxwell Institute and School of Mathematics, The University
  of Edinburgh, James Clerk Maxwell Building, Peter Guthrie Tait Road,
  Edinburgh EH9 3FD, Scotland, UK}
\thanks{EMPG-16-08}
\begin{abstract}
  We determine the Killing superalgebras underpinning field theories
  with rigid unextended supersymmetry on Lorentzian four-manifolds by
  re-interpreting them as filtered deformations of $\ZZ$-graded
  subalgebras with maximum odd dimension of the $N{=}1$ Poincaré
  superalgebra in four dimensions. Part of this calculation involves
  computing a Spencer cohomology group which, by analogy with a
  similar result in eleven dimensions, prescribes a notion of Killing
  spinor, which we identify with the defining condition for bosonic
  supersymmetric backgrounds of minimal off-shell supergravity in four
  dimensions. We prove that such Killing spinors always generate a Lie
  superalgebra, and that this Lie superalgebra is a filtered
  deformation of a subalgebra of the $N{=}1$ Poincaré superalgebra in
  four dimensions. Demanding the flatness of the connection defining
  the Killing spinors, we obtain equations satisfied by the maximally
  supersymmetric backgrounds. We solve these equations, arriving at
  the classification of maximally supersymmetric backgrounds whose
  associated Killing superalgebras are precisely the filtered
  deformations we classify in this paper.
\end{abstract}
\maketitle
\tableofcontents

\section{Introduction}
\label{sec:introduction}

A number of impressive exact results \cite{Pestun:2007rz,
  Kapustin:2009kz, Drukker:2010nc, Jafferis:2010un, Jafferis:2011zi,
  Kallen:2012cs, Hosomichi:2012he, Kallen:2012va, Kim:2012ava} obtained
in recent years via supersymmetric localisation have motivated a more
systematic exploration of quantum field theories with rigid
supersymmetry in curved space. A critical feature in many of these
calculations is the non-trivial rôle played by certain non-minimal
curvature couplings which regulate correlation functions, so a clear
understanding of the general nature of such couplings would be
extremely useful.

Several isolated examples of curved backgrounds which support rigid
supersymmetry, like spheres and anti-de~Sitter spaces (also various
products thereof), have been known for some time
\cite{Shuster:1999zf,Blau:2000xg}. Beyond these examples, the most
systematic strategy for identifying curved backgrounds which support
some amount of rigid supersymmetry has hereto been that pioneered by
Festuccia and Seiberg in \cite{Festuccia:2011ws}. In four dimensions,
they described how a large class of rigid supersymmetric non-linear
sigma-models in curved space can be obtained by taking a decoupling
limit (in which the Planck mass goes to infinity) of the corresponding
locally supersymmetric theory coupled to minimal off-shell
supergravity. In this limit, the gravity supermultiplet is effectively
frozen out, leaving only the fixed bosonic supergravity fields as data
encoding the geometry of the supersymmetric curved background.
Following this paradigm, several other works explored the structure of
rigid supersymmetry for field theories in various dimensions on curved
manifolds in both Euclidean and Lorentzian signature \cite{Jia:2011hw,
  Samtleben:2012gy, Klare:2012gn, Dumitrescu:2012ha, Cassani:2012ri,
  Liu:2012bi, deMedeiros:2012sb}.

A well-established feature of supersymmetric supergravity backgrounds
is that they possess an associated rigid Lie superalgebra
\cite{MR1642707,
  AFHS,GMT1,GMT2,PKT,JMFKilling,FOPFlux,NewIIB,ALOKilling,
  FigueroaO'Farrill:2004mx, FigueroaO'Farrill:2007ar,
  FigueroaO'Farrill:2007ic, FigueroaO'Farrill:2008ka,
  FigueroaO'Farrill:2008if, Kuzenko:2015lca} that we shall refer to as
the Killing superalgebra of the background. Indeed, with respect to an
appropriate superspace formalism, the construction described in
\cite[§6.4]{MR1642707} (and reviewed in
\cite{Kuzenko:2015lca}) explains how this Killing superalgebra may be
construed in terms of the infinitesimal rigid superisometries of a
given background supergeometry. The even part of the Killing
superalgebra contains the Killing vectors which preserve the
background, whereas the odd part is generated by the rigid
supersymmetries supported by the background. The image of the odd-odd
bracket for the Killing superalgebra spans a Lie subalgebra of Killing
vectors which preserve the background. This Lie subalgebra, together
with the rigid supersymmetries, generate an ideal of the Killing
superalgebra, which we call the Killing ideal of the background. The
utility of this construction is that it often allows one to infer
important geometrical properties of the background directly from the
rigid supersymmetry it supports. For example, in dimensions six, ten
and eleven, it was proved in \cite{FigueroaO'Farrill:2012fp,
  Figueroa-O'Farrill:2013aca} that any supersymmetric supergravity
background possessing more than half the maximal amount of
supersymmetry is necessarily (locally) homogeneous.

As a rule, the interactions in a non-linear theory with a local
(super)symmetry may be constructed unambiguously by applying the
familiar Noether procedure to the linearised version of the
theory. Indeed, this is the canonical method for deriving interacting
gauge theories in flat space, supergravity theories and their locally
supersymmetric couplings to field theory supermultiplets. However,
depending on the complexity of the theory in question, it may not be
the most wieldy technique and it is sometimes preferable to proceed
with some inspired guesswork, perhaps based on the assumption of a
particular kind of symmetry (e.g., conformal coupling in a conformal
field theory). Either way, the guiding principle is to deform (in some
sense) the free theory you know in the most general way that is
compatible with the symmetries you wish to preserve.

One way to motivate the construction we shall describe in this paper
is as an attempt to streamline the procedure for deducing which curved
backgrounds support rigid supersymmetry directly in terms of their
associated Killing superalgebras. Instead of applying the Noether
method to obtain some complicated local supergravity coupling, taking
a rigid limit, looking for supersymmetric backgrounds and then
computing the Killing superalgebras of those backgrounds, our strategy
will be to simply start with the unextended Poincaré superalgebra
(without R-symmetry) and obtain all the relevant Killing superalgebras
directly as filtered deformations (see below for the definition) of
its subalgebras. As expected for the deformation problem of an
algebraic structure, there is a cohomology theory which governs the
infinitesimal deformations. In this case this is a generalised Spencer
cohomology theory, studied in a similar context by Cheng and Kac in
\cite{MR1688484,MR2141501}. In the present work, we shall apply this
philosophy to the unextended Poincaré superalgebra on $\RR^{1,3}$,
following a similar analysis on $\RR^{1,10}$ pioneered in
\cite{Figueroa-O'Farrill:2015efc,Figueroa-O'Farrill:2015utu} which
yielded what might be considered a Lie-algebraic derivation of
eleven-dimensional supergravity.

Let us describe more precisely the problem we set out to solve.  Let
$(V,\eta)$ denote the Lorentzian vector space on which
four-dimensional Minkowski space is modelled, $\fso(V)$ the Lie
algebra of the Lorentz group and $S$ its spinor representation.  The associated
$N{=}1$ Poincaré superalgebra $\fp$ has underlying vector
space $\fso(V) \oplus S \oplus V$ and Lie brackets, for all $A,B\in
\fso(V)$, $s \in S$ and $v,w \in V$, given by
\begin{equation}
  \label{eq:poincare}
  [A,B] = AB - BA \qquad [A,s] = \sigma(A)s \qquad
  [A,v] = Av \qquad\text{and}\qquad [s,s] = \kappa(s,s)~,
\end{equation}
where $\sigma$ is the spinor representation of $\fso(V)$ and
$\kappa : \odot^2 S \to V$ is such that $\kappa(s,s) \in V$ is the
Dirac current of $s$.  (This and other relevant notions are defined in
the Appendix.)  The Poincaré superalgebra is $\ZZ$-graded by assigning
degrees $0$, $-1$ and $-2$ to $\fso(V)$, $S$ and $V$, respectively and
the $\ZZ_2$ grading is compatible with the $\ZZ$ grading, in that the
parity is the degree mod $2$.  More precisely, the even subalgebra is
the Poincaré algebra $\fp_{\bar 0} = \fso(V) \oplus V$ and the odd
subspace is $\fp_{\bar 1} = S$.  By a $\ZZ$-graded subalgebra $\fa$
of $\fp$ we mean a Lie subalgebra $\fa = \fa_0 \oplus \fa_{-1} \oplus
\fa_{-2}$, with $\fa_i \subset \fp_i$.

Now recall that a Lie superalgebra $\fg$ is said to be \emph{filtered}, if
it is admits a vector space filtration
\begin{equation*}
  \fg^\bullet~:\qquad \cdots \supset \fg^{-2} \supset \fg^{-1} \supset
  \fg^0 \supset \cdots~,
\end{equation*}
with $\cup_i \fg^i = \fg$ and $\cap_i \fg^i = 0$, which is compatible
with the Lie bracket in that $[\fg^i, \fg^j] \subset \fg^{i+j}$.
Associated canonically to every filtered Lie superalgebra
$\fg^\bullet$ there is a graded Lie superalgebra $\fg_\bullet =
\bigoplus_i \fg_i$, where $\fg_i = \fg^i/\fg^{i+1}$.
It follows from the fact that $\fg^\bullet$ is filtered that
$[\fg_i,\fg_j] \subset \fg_{i+j}$, hence $\fg_\bullet$ is graded.

We say that a Lie superalgebra $\fg$ is a \emph{filtered deformation} of
$\fa < \fp$ if it is filtered and its associated graded superalgebra
is isomorphic (as a graded Lie superalgebra) to $\fa$.  If we do not
wish to mention the subalgebra $\fa$ explicitly, we simply say that
$\fg$ is a \emph{filtered subdeformation} of $\fp$.

The problem we address in this note is the classification of filtered
subdeformations $\fg$ of $\fp$ for which $\fg_{-1} = S$ (and hence
$\fg_{-2} = V$).

This paper is organised as follows. In
Section~\ref{sec:spencer-cohomology} we define and calculate the
Spencer cohomology group $H^{2,2}(\fp_-,\fp)$ of the Poincaré
superalgebra. This is the main cohomological calculation upon which
the rest of our results are predicated. In particular we use it to
extract the equation satisfied by the Killing spinors, recovering in
this way the form of the (old minimal off-shell) supergravity Killing
spinor equation. We will also use this cohomological calculation as a
first step on which to bootstrap the calculation of infinitesimal
subdeformations of the Poincaré superalgebra. We give two proofs of
the main result in Section~\ref{sec:spencer-cohomology}
(Proposition~\ref{prop:killspin}): a traditional combinatorial proof
using gamma matrices and a representation-theoretic proof exploiting
the equivariance under $\fso(V)$. In
Section~\ref{sec:killing-superalgebra} we prove that the (minimal
off-shell) supergravity Killing spinors generate a Lie superalgebra,
and that this Lie superalgebra is a filtered subdeformation of
$\fp$. These results are contained in Theorem~\ref{thm:brackets} in
Section~\ref{sec:KSA} and Proposition~\ref{prop:KSAisFD} in
Section~\ref{sec:ksa-as-filtered}, respectively.  In
Section~\ref{sec:zero-curvature-eqns} we classify, up to local
isometry, the geometries admitting the maximum number of Killing
spinors. We do this by solving the zero curvature equations for the
connection relative to which the Killing spinors are parallel, and
this is done by first solving for the vanishing of the Clifford trace
of the curvature: this simplifies the calculation and might be of
independent interest.  Section~\ref{sec:maxim-supersymm-back} contains
the result of the classification of maximally supersymmetric
backgrounds up to local isometry:  apart from Minkowski space and
$\AdS_4$, we find the Lie groups admitting a Lorentzian bi-invariant
metric.  In Section~\ref{sec:maxim-fil-def} we finish the determination of
maximally supersymmetric filtered subdeformations of $\fp$ and recover
in this way the Killing superalgebras of the maximally supersymmetric
backgrounds found in Section~\ref{sec:maxim-supersymm-back}. In the
case of a Lie group with bi-invariant metric, we note that the Killing
ideal is a filtered deformation of $\fa=S\oplus V$ and also explicitly
describe all other associated maximally supersymmetric filtered
subdeformations of $\fp$.  The main result there is
Theorem~\ref{thm:final} in Section~\ref{sec:summary}.  Finally, in
Section~\ref{sec:conclusions}, we offer some conclusions.

Given the nature of this problem, it is inevitable that we shall
recover some known results and observations which it would be remiss
of us not to contextualise. In particular, in addition to $\RR^{1,3}$, our classification of
Killing superalgebras for maximally supersymmetric backgrounds yields,
up to local isometry, the following
conformally flat Lorentzian geometries:
\begin{itemize}
  \item $\AdS_4$;
  \item $\AdS_3 \times \RR$,  with $\AdS_3$ identified with $\SL(2,\RR)$
    with its bi-invariant metric;
  \item $\RR \times \Sph^3$, with $\Sph^3$ identified with $\SU(2)$ with its
    bi-invariant metric; and
  \item $\NW_4$, a symmetric plane wave isometric to the Nappi-Witten
    group with its bi-invariant metric.
\end{itemize}
We prove that the geometries above are indeed realised as the
maximally supersymmetric backgrounds of minimal off-shell supergravity
in four dimensions, in Lorentzian signature. That is, we do \emph{not}
assume the form of the supergravity Killing spinor equation
from the outset---we actually derive it via Spencer cohomology!  It
therefore follows that the first three geometries above are precisely
the maximally supersymmetric backgrounds obtained in
\cite{Festuccia:2011ws}. Indeed, the classification of maximally
supersymmetric backgrounds of minimal off-shell supergravity in four
dimensions has been discussed in various other contexts in the recent
literature, e.g., see \cite[§2.1]{Liu:2012bi},
\cite[§§4.2-3]{Kuzenko:2012vd}, \cite{Kuzenko:2015lca},
\cite[p.~2]{Butter:2015tra}, \cite[pp.12-13]{Kuzenko:2015rfx}. The
$\NW_4$ background is rarely mentioned explicitly---perhaps because,
unlike the other maximally supersymmetric Lorentzian backgrounds, it
has no counterpart in Euclidean signature---but it is noted in
\cite[p.~2]{Butter:2015tra} as a plane wave limit, albeit in the
context of $N=2$ supergravity backgrounds. It is also worth pointing
out that \cite[§2.1]{Liu:2012bi} contains several useful identities
(e.g., integrability conditions and covariant derivatives of Killing
spinor bilinears) that we also encounter in our construction of the
Killing superalgebra for minimal off-shell supergravity backgrounds.

\section{Spencer cohomology}
\label{sec:spencer-cohomology}

In this section we define and calculate the (even) Spencer cohomology
of the Poincaré superalgebra.  This calculation has two purposes.
The first is to serve as a first step in the classification of
filtered subdeformations of the Poincaré superalgebra which is
presented in Section~\ref{sec:maxim-fil-def}.  The second is to derive
the equation satisfied by the Killing spinors which, as we show in
Section~\ref{sec:killing-superalgebra}, generate the filtered
subdeformation.  The main result, whose proof takes
the bulk of the section, is Proposition~\ref{prop:killspin}.

\subsection{Preliminaries}

Let $\fp=\fp_{-2}\oplus\fp_{-1}\oplus\fp_{0}$, where $\fp_{-2}=V$,
$\fp_{-1}=S$ and $\fp_{0}=\fso(V)$, be the Poincaré superalgebra and
$\fp_{-}=\fp_{-2}\oplus\fp_{-1}$ the negatively graded part of $\fp$.
We will now determine some Spencer cohomology groups associated to
$\fp$. We recall that the cochains of the Spencer complex of $\fp$ are
linear maps $\wedge^p \fp_- \to \fp$ or, equivalently, elements of
$\wedge^p \fp_-^*\otimes\fp$, where $\wedge^ \bullet$ is meant here
in the super sense, and that the degree in $\fp$ is extended to the
space of cochains by declaring that $\fp_p^*$ has degree $-p$. The
spaces in the complexes of even cochains of small degree are given in
Table~\ref{tab:even-cochains-small}, although for $d=4$ there are
cochains also for $p=5,6$ which we omit.

\begin{table}[h!]
  \centering
  \caption{Even $p$-cochains of small degree}
  \label{tab:even-cochains-small}
  \begin{tabular}{c|*{5}{>{$}c<{$}|}}
    \multicolumn{1}{c|}{} & \multicolumn{5}{c}{$p$} \\\hline
    deg & 0 & 1 & 2 & 3 & 4 \\\hline
    0 & \fso(V) & \begin{tabular}{@{}>{$}c<{$}@{}} S \to S\\ V \to
                    V\end{tabular} & \odot^2 S \to V & & \\\hline
    2 & & V \to \fso(V) & \begin{tabular}{@{}>{$}c<{$}@{}}
                            \wedge^2 V \to V\\ V \otimes S \to S \\
                            \odot^2 S \to \fso(V) \end{tabular}
         & \begin{tabular}{@{}>{$}c<{$}@{}} \odot^3 S \to S\\
             \odot^2 S \otimes V \to V\end{tabular} & \odot^4 S \to V
        \\\hline
    4 & & & \wedge^2 V \to \fso(V) & \begin{tabular}{@{}>{$}c<{$}@{}}
                                        \odot^2 S \otimes V \to
                                        \fso(V) \\ \wedge^2 V \otimes
                                        S \to S \\ \wedge^3 V \to
                                        V \end{tabular} & \begin{tabular}{@{}>{$}c<{$}@{}}
                                        \odot^4 S \to \fso(V) \\
                                        \odot^3 S \otimes V \to S \end{tabular} \\\hline
  \end{tabular}
\end{table}

Let $C^{d,p}(\fp_-,\fp)$ be the space of $p$-cochains of degree
$d$.  The Spencer differential
\begin{equation*}
  \partial: C^{d,p}(\fp_-,\fp) \to C^{d,p+1}(\fp_-,\fp)
\end{equation*}
is the Chevalley--Eilenberg differential for the Lie superalgebra
$\fp_-$ relative to its module $\fp$ with respect to the adjoint
action. For $p=0,1,2$ and $d\equiv 0 \pmod 2$ it is explicitly given
by the following expressions:
\begin{align}
  \begin{split}\label{eq:Spencer0}
    &\partial : C^{d,0}(\fp_-,\fp) \to C^{d,1}(\fp_-,\fp)\\
    &\partial\zeta(X) = [X,\zeta]~,
  \end{split}
  \\
  \begin{split}\label{eq:Spencer1}
    &\partial : C^{d,1}(\fp_-,\fp)\to C^{d,2}(\fp_-,\fp)\\
    &\partial\zeta(X,Y) = [X,\zeta(Y)] - (-1)^{xy} [Y,\zeta(X)] - \zeta([X,Y])~,
  \end{split}
\\
  \begin{split}\label{eq:Spencer2}
    &\partial:C^{d,2}(\fp_-,\fp) \to C^{d,3}(\fp_-,\fp)\\
    &\partial\zeta(X,Y,Z) = [X,\zeta(Y,Z)]+(-1)^{x(y+z)}[Y,\zeta(Z,X)] + (-1)^{z(x+y)} [Z,\zeta(X,Y)] \\
    & {} \qquad\qquad\qquad - \zeta([X,Y],Z) - (-1)^{x(y+z)} \zeta([Y,Z],X) -(-1)^{z(x+y)} \zeta([Z,X],Y)~,
  \end{split}
\end{align}
where $x,y,\dots$ are the parity of elements $X,Y,\dots$ of $\fp_-$
and $\zeta\in C^{d,p}(\fp_-,\fp)$ with $p=0,1,2$ respectively. 

In this section we shall be interested in the groups
$H^{d,2}(\fp_-,\fp)$ with $d>0$ and even.  We first recall some basic
definitions. A $\mathbb Z$-graded Lie superalgebra
$\fa=\bigoplus\fa_p$ with negatively graded part
$\fa_-=\bigoplus_{p<0}\fa_p$ is called \emph{fundamental} if $\fa_-$ is
generated by $\fa_{-1}$ and \emph{transitive} if for any $X\in\fa_p$
with $p\geq 0$ the condition $[X,\fa_-]=0$ implies $X=0$.

\begin{lemma}
  \label{lem:p4}
  The Poincaré superalgebra $\fp=\fp_{-2}\oplus\fp_{-1}\oplus\fp_{0}$
  is fundamental and transitive. Moreover $H^{d,2}(\fp_-,\fp)=0$ for
  all even $d>2$.
\end{lemma}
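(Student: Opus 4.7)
\emph{Fundamentality and transitivity.} Fundamentality requires $\fp_{-2} = V$ to be generated by brackets of $\fp_{-1} = S$; this amounts to surjectivity of the Dirac-current map $\kappa : \odot^2 S \to V$ of \eqref{eq:poincare}, which is a standard fact in Lorentzian signature $(1,3)$ recalled in the Appendix. Transitivity is equally direct: any $A \in \fp_0 = \fso(V)$ with $[A, \fp_-] = 0$ must satisfy $Av = 0$ for all $v \in V$, forcing $A = 0$ by faithfulness of the defining representation of $\fso(V)$ on $V$.

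\emph{Reduction by degree count.} For the cohomology statement I would first trim the list of relevant $d$. A $2$-cochain of degree $d$ decomposes into pieces
$\odot^2 S \to \fp_{d-2}$, $V \otimes S \to \fp_{d-3}$, $\wedge^2 V \to \fp_{d-4}$,
subject to parity matching. Because $\fp$ vanishes in positive degree, every even $d \geq 6$ forces all three targets to be zero, so $C^{d,2}(\fp_-,\fp) = 0$ and the cohomology is trivial there. Thus only $d = 4$ remains; matching parities reduces the cochain space to $C^{4,2}(\fp_-,\fp) = \Hom(\wedge^2 V, \fso(V))$, and since $\fp$ has no positive-degree part one also has $C^{4,1}(\fp_-,\fp) = 0$. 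Consequently $H^{4,2}(\fp_-,\fp) = \ker\bigl(\partial\big|_{C^{4,2}}\bigr)$.

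\emph{The case $d=4$.} The decisive step is to test a cocycle $\zeta : \wedge^2 V \to \fso(V)$ on a mixed tuple $(v_1, v_2, s)$ with $v_i \in V$ and $s \in S$. Exploiting that $[v,v'] = [v,s] = 0$ in $\fp$ and that $\zeta$ is zero on $V \otimes S$, the Spencer differential \eqref{eq:Spencer2} collapses to
\begin{equation*}
  \partial\zeta(v_1, v_2, s) \;=\; [s,\zeta(v_1, v_2)] \;=\; -\sigma(\zeta(v_1, v_2))\, s,
\end{equation*}
so the cocycle condition forces $\sigma(\zeta(v_1,v_2)) = 0$ for all $v_1, v_2$, and faithfulness of the spinor representation then yields $\zeta \equiv 0$. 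The one subtlety, and the only place where the argument could be misled, is the temptation to test $\partial\zeta$ only on the pure-vector tuple $(v_1, v_2, v_3)$: this gives only the first-Bianchi-type identity $\zeta(v_2,v_3)v_1 + \zeta(v_3,v_1)v_2 + \zeta(v_1,v_2)v_3 = 0$, whose solution space on $V \otimes V \to \fso(V)$ is nontrivial. The true restrictive power of cocyclicity lives in the odd direction, where injectivity of $\sigma$ does the work.
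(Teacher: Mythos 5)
Your proof is correct, and the overall strategy is the same as the paper's: establish fundamentality and transitivity directly, dispose of even degrees $d \geq 6$ by degree count, and reduce the $d=4$ case to showing $\ker\partial|_{C^{4,2}} = 0$. Where you differ from the paper is in which component of the cocycle condition does the killing. The paper evaluates $\partial\zeta$ on $(s_1,s_2,v)$, getting $\partial\zeta(s_1,s_2,v) = -\zeta(\kappa(s_1,s_2),v)$, and then invokes surjectivity of $\kappa$ (the fundamentality just established) to conclude $\zeta = 0$. You evaluate on $(v_1,v_2,s)$, getting $\partial\zeta(v_1,v_2,s) = -\sigma(\zeta(v_1,v_2))s$, and invoke faithfulness of the spin representation $\sigma$. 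Both close the argument cleanly; yours makes the proof self-contained without appealing back to fundamentality, while the paper's reuses a fact already on the table. Your warning about the pure-vector tuple $(v_1,v_2,v_3)$ giving only the first-Bianchi identity is a correct and useful observation, and it is worth noting that the paper's choice of $(s_1,s_2,v)$ also avoids it. Your remark that $C^{4,1}(\fp_-,\fp) = 0$ is true but unnecessary: since $\operatorname{im}\partial \subset \ker\partial$ always, showing $\ker\partial|_{C^{4,2}} = 0$ already forces $H^{4,2} = 0$.
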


\begin{proof}
  The first claim is a direct consequence of the fact that $\kappa(S,S)=V$
  and that the natural action of $\fso(V)$ on $V$ is faithful.  For
  any $\zeta\in C^{4,2}(\fp_-,\fp)=\Hom(\wedge^2 V,\fso(V))$ one has
  \begin{align*}
    &\partial\zeta(s_1,s_2,v_1)=-\zeta(\kappa(s_1,s_2),v_1)\\
    &\partial\zeta(v_1,v_2,s_1)=-\sigma( \zeta(v_1,v_2) ) s_1\\
    &\partial\zeta(v_1,v_2,v_3)=-\zeta(v_2,v_3)v_1-\zeta(v_3,v_1)v_2-\zeta(v_1,v_2)v_3
  \end{align*}
  where $s_1,s_2\in S$ and $v_1,v_2,v_3\in V$. The first equation
  implies $\operatorname{Ker}\partial|_{C^{4,2}(\fp_-,\fp)}=0$, since
  $\fp$ is fundamental, and therefore $H^{4,2}(\fp_-,\fp)=0$. Finally
  $C^{d,2}(\fp_-,\fp)=0$ and $H^{d,2}(\fp_-,\fp)=0$ for degree
  reasons, for all even $d>4$.
\end{proof}

Note that the space of cochains $C^{d,p}(\fp_-,\fp)$ is an
$\fso(V)$-module and the same is true for the spaces of cocycles and
coboundaries, as $\partial$ is $\fso(V)$-equivariant. This implies
that each cohomology group $H^{d,p}(\fp_-,\fp)$ is an
$\fso(V)$-module, in a natural way. It remains to compute
\begin{equation*}
  H^{2,2}(\fp_-,\fp) = \frac{\ker\partial: C^{2,2}(\fp_-,\fp) \to
    C^{2,3}(\fp_-,\fp)}{\partial C^{2,1}(\fp_-,\fp)}
\end{equation*}
and, in particular, to describe its $\fso(V)$-module structure. We consider the decomposition
\begin{equation*}
  C^{2,2}(\fp_-,\fp)=\Hom(\wedge^2 V, V)\oplus \Hom(V \otimes S, S)
\oplus \Hom(\odot^2S, \fso(V))
\end{equation*}
into the direct sum of $\fso(V)$-submodules and write any $\zeta\in
C^{2,2}(\fp_-,\fp)$ accordingly; i.e., $\zeta=\alpha+\beta+\gamma$ with
\begin{equation*}
  \begin{split}
    \alpha &\in \Hom(\wedge^2V,V)\phantom{ccccccccccccccc}\\
    \beta &\in\Hom(V \otimes S, S)\\
    \text{and}\qquad\gamma &\in \Hom(\odot^2S, \fso(V))\;.
  \end{split}
	\end{equation*}
We denote the associated $\fso(V)$-equivariant projections by
\begin{equation}
  \label{eq:projectors}
  \begin{split}
    \pi^{\alpha} &: C^{2,2}(\fp_-,\fp)\to \Hom(\wedge^2 V, V)\\
    \pi^{\beta} &: C^{2,2}(\fp_-,\fp)\to \Hom(V \otimes S, S)\\
    \text{and}\qquad\pi^{\gamma} &: C^{2,2}(\fp_-,\fp)\to \Hom(\odot^2S,
    \fso(V))\;.
  \end{split}
\end{equation}

\begin{lemma}
  \label{lem:iso}
  The component
  $\partial^\alpha=\pi^\alpha\circ\partial : \Hom(V,\fso(V))
  \longrightarrow \Hom(\wedge^2 V, V)$
  of the Spencer differential $\partial$ is an isomorphism. In
  particular,
  $\ker \partial|_{C^{2,2}(\fp_-,\fp)}=\partial
  \Hom(V,\fso(V))\oplus\mathscr H^{2,2}$,
  where $\mathscr H^{2,2}$ is the kernel of $\partial$ acting on
  $\Hom(V \otimes S, S) \oplus \Hom(\odot^2S, \fso(V))$, and every
  cohomology class $[\alpha + \beta + \gamma] \in H^{2,2}(\fp_-,\fp)$
  has a unique cocycle representative with $\alpha = 0$.
\end{lemma}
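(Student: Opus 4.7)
The statement has three parts: the isomorphism claim for $\partial^\alpha$, the decomposition of $\ker\partial$, and the uniqueness-of-representative statement. The substantive work lies entirely in the first; the other two follow by formal linear algebra on the direct-sum decomposition $C^{2,2}(\fp_-,\fp) = \Hom(\wedge^2 V, V) \oplus \Hom(V\otimes S, S) \oplus \Hom(\odot^2 S, \fso(V))$ already introduced in the text.

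First, I would make $\partial^\alpha$ explicit. A cochain $\zeta\in C^{2,1}(\fp_-,\fp) = \Hom(V,\fso(V))$, extended by zero on $S$, has, by \eqref{eq:Spencer1} and using $[V,V] = 0$ together with the natural action of $\fso(V)$ on $V$,
\begin{equation*}
\partial^\alpha\zeta(v,w) \;=\; \zeta(v)w - \zeta(w)v~.
\end{equation*}
Source and target have the same dimension, $\dim V\cdot\dim\fso(V) = \dim(\wedge^2 V)\cdot\dim V$, so to conclude that $\partial^\alpha$ is an isomorphism it suffices to establish injectivity.

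The hard part is exactly this injectivity. If $\partial^\alpha\zeta = 0$, then $\zeta(v)w = \zeta(w)v$ for all $v,w\in V$, and combining this symmetry with the antisymmetry $\eta(\zeta(u)x,y) = -\eta(x,\zeta(u)y)$ of the individual elements $\zeta(u)\in\fso(V)$, the standard three-swap cycling argument (identical to the one proving uniqueness of the Levi-Civita connection) yields
\begin{equation*}
\eta(\zeta(v)w,u) = -\eta(\zeta(v)w,u) \qquad \text{for all } u,v,w\in V,
\end{equation*}
forcing $\zeta(v)w = 0$ by non-degeneracy of $\eta$, and hence $\zeta = 0$. Together with the dimension count this proves that $\partial^\alpha$ is an isomorphism.

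The two remaining statements are now formal consequences. Given any cocycle $\xi = \alpha+\beta+\gamma$, the element $\zeta := (\partial^\alpha)^{-1}(\alpha)\in\Hom(V,\fso(V))$ produces a coboundary $\partial\zeta$ whose $\alpha$-component matches $\alpha$, so $\xi - \partial\zeta$ is a cocycle with vanishing $\alpha$-part and therefore lies in $\mathscr H^{2,2}$. The sum $\partial\Hom(V,\fso(V)) + \mathscr H^{2,2}$ is direct: an element of $\mathscr H^{2,2}$ has zero $\alpha$-component by definition, so if $\partial\zeta'\in\mathscr H^{2,2}$ then $\partial^\alpha\zeta' = 0$, and injectivity of $\partial^\alpha$ forces $\zeta' = 0$. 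The same injectivity yields the uniqueness statement at the level of cohomology: two cocycle representatives of a class $[\xi]$ both having vanishing $\alpha$-component differ by some $\partial\zeta'$ whose $\alpha$-component must also vanish, whence $\zeta' = 0$ and the representatives coincide.
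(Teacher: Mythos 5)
Your proof is correct and follows essentially the same route as the paper: the paper writes the identical formula $\partial^\alpha\psi(v_1,v_2)=\psi(v_1)v_2-\psi(v_2)v_1$ and then defers the isomorphism claim to classical references, whereas you supply the standard dimension count plus the Levi-Civita-style cyclic permutation argument for injectivity, which is exactly what those references contain. The deduction of the direct-sum decomposition of $\ker\partial$ and the uniqueness of the $\alpha=0$ representative is handled the same way in both.
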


\begin{proof}
  The image of $\psi\in\Hom(V,\fso(V))$ under $\partial^\alpha$ is given by
  \begin{equation*}
    \partial^\alpha\psi(v_1,v_2) =
    \psi(v_1)v_2-\psi(v_2)v_1
  \end{equation*}
  where $v_1,v_2 \in V$ and the first claim of the lemma follows from
  classical arguments (see \cite{MR891190}; see also
  e.g., \cite{MR3056953, Figueroa-O'Farrill:2015efc}).
  
  Now for any given $\alpha \in \Hom(\wedge^2V,V)$, there is a unique
  $\psi \in \Hom(V,\fso(V))$ such that
  $\partial \psi = \alpha + \widetilde\beta + \widetilde\gamma$, for
  some $\widetilde\beta \in \Hom(V \otimes S, S)$ and
  $\widetilde\gamma \in \Hom(\odot^2S,\fso(V))$. Hence, given any
  cocycle $\zeta=\alpha + \beta + \gamma$, we may add the coboundary
  $\partial(-\psi)$ without changing its cohomology class and
  resulting in the cocycle
  $(\beta - \widetilde\beta) + (\gamma - \widetilde\gamma)$, which has
  no component in $\Hom(\wedge^2 V ,V)$. This proves the last claim
  of the lemma. The decomposition
  $ \ker\partial|_{C^{2,2}(\fp_-,\fp)}=\partial
  \Hom(V,\fso(V))\oplus\mathscr H^{2,2} $ is clear.
\end{proof}

\subsection{The cohomology group \texorpdfstring{$H^{2,2}(\fp_-,\fp)$}{H22(p-,p)}}

Lemma~\ref{lem:iso} gives a canonical identification
$H^{2,2}(\fp_-,\fp)\cong\mathscr H^{2,2}$ of
$\fso(V)$-modules. Furthermore it follows from
equation~\eqref{eq:Spencer2} that $\beta + \gamma$ is an element of
$\mathscr{H}^{2,2}$ if and only if the following pair of equations are
satisfied:
\begin{equation}
  \label{eq:cc1}
  \gamma(s,s)v = - 2 \kappa(s,\beta(v,s)) \qquad\forall\; s\in S, v\in V~,
\end{equation}
and
\begin{equation}
  \label{eq:cc2}
  \sigma( \gamma(s,s) ) s = -\beta(\kappa(s,s),s) \qquad \forall\; s\in S~.
\end{equation}
Note that \eqref{eq:cc1} fully expresses $\gamma$ in terms of $\beta$,
once the integrability condition that $\gamma$ takes values in
$\fso(V)$ has been taken into account. The solution of the
integrability condition and of equation \eqref{eq:cc2} is the content
of the following

\begin{proposition}
  \label{prop:killspin}
  Let   $\beta+\gamma\in \Hom(V\otimes S, S)
  \oplus\Hom(\odot^2S,\fso(V))$.  Then $\partial(\beta+\gamma)=0$ if
  and only if there exist $a,b\in\mathbb R$ and $\varphi\in V$ such
  that
  \begin{enumerate}[label=(\roman*)]
  \item $\beta(v,s)=v\cdot(a+b\vol)\cdot
    s-\frac{1}{2}(v\cdot\varphi+3\varphi\cdot v)\cdot\vol\cdot s$,
  \item $\gamma(s,s)v = - 2 \kappa(s,\beta(v,s))$,
  \end{enumerate}
  for all $v\in V$ and $s\in\ S$. In particular there is a canonical
  identification
  \begin{equation*}
    H^{2,2}(\fp_-,\fp)\simeq \mathscr H^{2,2}\simeq 2\mathbb R\oplus V
  \end{equation*}
  of $\fso(V)$-modules.
\end{proposition}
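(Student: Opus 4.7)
The plan is to parametrise the most general $\fso(V)$-equivariant $\beta\in\Hom(V\otimes S,S)$ in Clifford-algebraic terms, impose the cocycle conditions~\eqref{eq:cc1}--\eqref{eq:cc2}, and solve the resulting algebraic system. First I would write $\beta(v,s)=M(v)\cdot s$ for a linear map $M:V\to\End(S)\cong\Cl(V)$ and use the basis $\{1,\vol,\gamma_\mu,\gamma_\mu\vol,\gamma_{\mu\nu}\}$ of $\Cl(V)$ (exploiting Hodge duality in four dimensions) to decompose $M(v)$ into scalar, pseudoscalar, vector, axial-vector and $\fso(V)$-valued components. Each component is an $\fso(V)$-equivariant map out of $V$, so classical invariant theory cuts the freedom down to a finite list of building blocks: $v$ and $v\cdot\vol$ themselves, together with $v\cdot\varphi$, $\varphi\cdot v$ and their $\vol$-twists built from an auxiliary vector $\varphi\in V$.

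Equation~\eqref{eq:cc1} then defines $\gamma$ in terms of $\beta$, subject to the integrability requirement that $v\mapsto-2\kappa(s,\beta(v,s))$ be $\eta$-skew for every $s\in S$. Using the defining relation $\eta(\kappa(s_1,s_2),v)=\langle s_1,v\cdot s_2\rangle$ of the Dirac current and the symmetry pattern of the Clifford bilinears $\langle\cdot,\Gamma\cdot\rangle$ on four-dimensional Majorana spinors, this constraint eliminates several of the candidate terms in $M(v)$ by a short computation and leaves an ansatz already of the shape asserted in~(i), modulo numerical coefficients.

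The heart of the argument is then the cubic cocycle condition~\eqref{eq:cc2},
\begin{equation*}
  \sigma(\gamma(s,s))\,s \;=\; -\beta(\kappa(s,s),s) \qquad \forall\,s\in S.
\end{equation*}
Substituting the surviving parametrisation turns both sides into $\fso(V)$-equivariant cubic maps $\odot^3 S\to S$; applying the Majorana Fierz identity to $\kappa(s,s)\cdot s$ and unfolding $\gamma(s,s)$ via~\eqref{eq:cc1} converts the identity into a finite linear system in the Clifford coefficients of $M$. Solving this system fixes the numerical coefficients appearing in~(i) -- in particular the factor $-\tfrac{1}{2}$ and the relative $1:3$ ratio between $v\cdot\varphi$ and $\varphi\cdot v$ -- while leaving free exactly two scalars $a,b\in\RR$ and one vector $\varphi\in V$. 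Combined with Lemma~\ref{lem:iso}, this yields the isomorphism $H^{2,2}(\fp_-,\fp)\simeq\mathscr H^{2,2}\simeq 2\RR\oplus V$ of $\fso(V)$-modules.

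The main obstacle is controlling this cubic step: the four-dimensional Lorentzian Fierz rearrangement requires careful tracking of signs, of the sign of $\vol^2$, and of the symmetry or antisymmetry of $\langle\cdot,\Gamma\cdot\rangle$ across the Clifford basis. A cleaner representation-theoretic cross-check, which I would also record, is to decompose $C^{2,2}(\fp_-,\fp)$ and $C^{2,3}(\fp_-,\fp)$ into $\fso(V)$-isotypic components and to apply Schur's lemma to the Spencer differential on each component; its rank can then be computed on a single representative in each isotypic piece, and the resulting kernel (after subtracting the coboundaries from $\Hom(V,\fso(V))$, as in Lemma~\ref{lem:iso}) collapses to precisely $2\RR\oplus V$, matching the explicit answer.
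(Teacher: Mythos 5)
Your proposal tracks the paper closely: the paper in fact gives \emph{both} of the proofs you sketch, first a direct gamma-matrix computation decomposing $\beta_\mu$ by Clifford degree, solving the symmetrised part of \eqref{eq:cc1} (i.e.\ the integrability condition~\eqref{eq:cc3}) and then the cubic condition~\eqref{eq:cc2} via a Fierz-type argument, and second the representation-theoretic version you flag at the end, decomposing $\Hom(V,\End S)$ into $\fso(V)$-isotypic pieces and testing the Spencer differential on representatives. So in substance your two-stage plan and the cross-check are the paper's own.

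One point of terminology, though, deserves correction because as written it would derail a naive reader: you open by saying you will ``parametrise the most general $\fso(V)$-\emph{equivariant} $\beta\in\Hom(V\otimes S,S)$.'' The cocycles $\beta$ are \emph{not} individually $\fso(V)$-equivariant; the answer $\mathscr H^{2,2}\cong 2\RR\oplus V$ contains a whole copy of $V$, and those cocycles (the $\varphi$-terms) sweep out an $\fso(V)$-orbit rather than sitting fixed. Restricting literally to equivariant $\beta$ would yield only the $a$- and $b$-terms and lose the $\varphi$-terms entirely. What you actually do, and what the paper does, is parametrise \emph{all} of $\Hom(V,\Cl(V))$ by Clifford degree and then decompose that full cochain space into $\fso(V)$-irreducibles before applying the cocycle conditions; the auxiliary vector $\varphi$ then appears as a coordinate on the isotypic copies of $V$, not as part of an invariant-theory cut-down. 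Similarly, the intermediate ansatz after \eqref{eq:cc1} is not yet ``of the shape asserted in (i) modulo numerical coefficients'': it still carries two independent vectors (in the paper's notation, $\varphi^\nu\Gamma_{\mu\nu}\Gamma_5+\beta^{(4)}_\mu\Gamma_5$ with $\beta^{(4)}$ free), and the cubic condition~\eqref{eq:cc2} is what ties $\beta^{(4)}=2\varphi$ and collapses $2V$ to $V$. With these two imprecisions corrected, your outline matches the paper's argument step for step.
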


\begin{proof}
  We find it convenient to work relative to an $\eta$-orthonormal basis
  $(\be_\mu)$ for $V$.  In particular the formalism of
  Section~\ref{sec:gamma-matrices} is in force, as is the Einstein
  summation convention.

  Let us contract the cocycle condition \eqref{eq:cc1} with $w
  \in V$.  The left-hand side becomes
  \begin{equation}
    \eta(w,\gamma(s,s)(v)) = \gamma(s,s)_{\mu\nu}w^\mu v^\nu~,
  \end{equation}
  whereas the right-hand side becomes
  \begin{equation}
    -2 \eta(w,\kappa(s,\beta(v,s))) = -2 \left<s, w\cdot \beta(v,s)
      \right> = -2 w^\mu v^\nu \sbar \Gamma_\mu\beta_\nu s~,
  \end{equation}
  where we have introduced $\beta_\mu =
  \beta(\be_\mu,-)$.  In summary, the first cocycle condition becomes
  \begin{equation}
    w^\mu v^\nu \left(\gamma(s,s)_{\mu\nu} + 2 \sbar \Gamma_\mu
      \beta_\nu s\right) = 0~,
  \end{equation}
  which must hold for all $v,w \in V$, so that they can be abstracted
  to arrive at
  \begin{equation}
    \label{eq:cc2p}
    \gamma(s,s)_{\mu\nu} + 2 \sbar \Gamma_\mu \beta_\nu s = 0~.
  \end{equation}

  Symmetrising $(\mu\nu)$ we obtain the ``integrability condition''
  \begin{equation}
    \label{eq:cc3}
    \sbar \Gamma_{(\mu} \beta_{\nu)} s = 0~,
  \end{equation}
  whereas skew-symmetrising $[\mu\nu]$ and using that
  $\gamma(s,s)_{\mu\nu} = - \gamma(s,s)_{\nu\mu}$, we arrive at
  \begin{equation}
    \label{eq:cc4}
    \gamma(s,s)_{\mu\nu} = - 2 \sbar \Gamma_{[\mu} \beta_{\nu]}s~.
  \end{equation}
  Notice that, as advertised, this last equation simply expresses $\gamma$ in terms of
  $\beta$.  Acting on $s \in S$,
  \begin{equation}
    \begin{split}
      \sigma( \gamma(s,s) )s &= -\tfrac14 \gamma(s,s)_{\mu\nu} \Gamma^{\mu\nu}s \\
      &= \tfrac12 ( \sbar \Gamma_\mu \beta_\nu s ) \Gamma^{\mu\nu} s~,
    \end{split}
  \end{equation}
  and inserting this equation into the second cocycle condition~\eqref{eq:cc2}, we arrive at
  \begin{equation}
    \label{eq:cc5}
    ( \sbar \Gamma^\mu s ) \beta_\mu s + \tfrac12 ( \sbar\Gamma_\mu
    \beta_\nu s ) \Gamma^{\mu\nu}s = 0~.
  \end{equation}
  So we must solve equations~\eqref{eq:cc3} and \eqref{eq:cc5} for
  $\beta$.

  Since $\End(S) \cong \Cl(V) \cong \wedge^\bullet V$ (where the
  first isomorphism is one of algebras and the second one of vector
  spaces), we may write
  \begin{equation}
    \beta_\mu = \beta_\mu^{(0)} \1 + \beta_{\mu\nu}^{(1)}\Gamma^\nu +
    \tfrac12 \beta_{\mu\nu\rho}^{(2)} \Gamma^{\nu\rho} +
    \beta_{\mu\nu}^{(3)} \Gamma^\nu\Gamma_5 + \beta_\mu^{(4)}\Gamma_5~,
  \end{equation}
  with $\beta_\mu^{(i)}\in\wedge^i V$ so that
  \begin{equation}
    \label{eq:prelhscc3}
    \sbar \Gamma_\mu\beta_\nu s = \beta_\nu^{(0)} \sbar \Gamma_\mu s +
    \beta ^{(1)}_\nu{}^\rho \sbar \Gamma_{\mu\rho} s -
    \beta_{\nu\mu\rho}^{(2)} \sbar \Gamma^\rho s + \tfrac12
    \epsilon_{\mu\sigma\tau\rho} \beta^{(3)}_{\nu}{}^\rho
    \sbar \Gamma^{\sigma\tau} s~,
  \end{equation}
  where we have used the last of the duality
  equations~\eqref{eq:duality} and the symmetry
  relations~\eqref{eq:symmetry}.

  Inserting this into equation~\eqref{eq:cc3}, which must be true for all $s \in S$, we get that the terms which depend
  on $\sbar \Gamma^\rho s$ and $\sbar \Gamma^{\rho\sigma} s$ must vanish
  separately and we arrive at two equations:
  \begin{equation}
    \label{eq:cc3-1}
    \beta^{(0)}_\mu \eta_{\nu\rho} + \beta^{(0)}_\nu
    \eta_{\mu\rho} - \beta^{(2)}_{\mu\nu\rho} - 
    \beta^{(2)}_{\nu\mu\rho} = 0~,
  \end{equation}
  and
  \begin{equation}
    \label{eq:cc3-2}
    \eta_{\mu\rho}\beta^{(1)}_{\nu\sigma} + 
    \eta_{\nu\rho}\beta^{(1)}_{\mu\sigma} -
    \eta_{\mu\sigma}\beta^{(1)}_{\nu\rho} - 
    \eta_{\nu\sigma}\beta^{(1)}_{\mu\rho} +
    \beta^{(3)}_\mu{}^\tau \epsilon_{\nu\tau\rho\sigma} +
    \beta^{(3)}_\nu{}^\tau \epsilon_{\mu\tau\rho\sigma} = 0~.
  \end{equation}
  Tracing this last equation with $\eta^{\mu\nu}$, we learn that
  \begin{equation}
  \label{eq:cc3-2additional}
    \beta^{(1)}_{[\rho\sigma]} = - \tfrac12
    \beta^{(3)}{}^{\mu\nu} \epsilon_{\mu\nu\rho\sigma}~,
  \end{equation}
  whereas tracing \eqref{eq:cc3-2} with $\eta^{\nu\sigma}$ and using \eqref{eq:cc3-2additional}, results in
  \begin{equation}
    \beta^{(1)}_{\mu\rho} = a \eta_{\mu\rho} \qquad\text{and}\qquad
    \beta^{(3)}_{[\mu\rho]} = 0~,
  \end{equation}
  for $a = \tfrac14 \eta^{\mu\nu}\beta^{(1)}_{\mu\nu} \in \RR$.

  Substituting the expressions above back into equation~\eqref{eq:cc3-2}, we find
  \begin{equation}
    \beta^{(3)}_\mu{}^\tau \epsilon_{\nu\tau\rho\sigma} +
    \beta^{(3)}_\nu{}^\tau \epsilon_{\mu\tau\rho\sigma} = 0~.
  \end{equation}
  Multiplying by $\tfrac12 \epsilon^{\alpha\beta\rho\sigma}$, and using
  the identities \eqref{eq:epseps}, we obtain
  \begin{equation}
    -\delta_\mu^\alpha \beta^{(3)}_\nu{}^\beta + \delta_\mu^\beta
    \beta^{(3)}_\nu{}^\alpha - \delta_\nu^\alpha \beta^{(3)}_\mu{}^\beta
    + \delta_\nu^\beta \beta^{(3)}_\mu{}^\alpha = 0~.
  \end{equation}
  Tracing the expression above with $\eta^{\nu\beta}$, we arrive at
  \begin{equation}
    \beta^{(3)}_{\mu\alpha} = b \eta_{\mu\alpha}~,
  \end{equation}
  for $b = \tfrac14 \eta^{\mu\nu} \beta^{(3)}_{\mu\nu} \in \RR$.

  Tracing equation~\eqref{eq:cc3-1} with $\eta^{\mu\nu}$ gives
  \begin{equation}
    2 \beta^{(0)}_\rho - 2 \eta^{\mu\nu} \beta^{(2)}_{\mu\nu\rho} = 0~,
  \end{equation}
  while tracing it with $\eta^{\nu\rho}$ gives
  \begin{equation}
    5 \beta^{(0)}_\mu + \eta^{\nu\rho} \beta^{(2)}_{\nu\rho\mu} = 0~.
  \end{equation}
  These two equations together imply
  \begin{equation}
    \beta^{(0)}_\mu = 0~,
  \end{equation}
  which, when inserted into equation~\eqref{eq:cc3-1}, yields
  \begin{equation}
    \label{eq:presolcc3}
    \beta^{(2)}_{(\mu\nu)\rho} = 0~.
  \end{equation}
  This implies $\beta^{(2)}_{\mu\nu\rho} = \beta^{(2)}_{[\mu\nu\rho]}$ (i.e., $\beta^{(2)} \in \wedge^3V$ ), so
  that it can be parametrised by $\varphi \in V$ such that
  \begin{equation}
    \label{eq:beta2lambda3}
    \beta^{(2)}_{\mu\nu\rho} = \epsilon_{\mu\nu\rho\sigma} \varphi^\sigma~.
  \end{equation}
  In summary, the general solution of equation~\eqref{eq:cc3} is
  \begin{equation}
    \label{eq:solcc3}
    \beta_\mu = \Gamma_\mu  (a + b \Gamma_5) + \varphi^\nu
    \Gamma_{\mu\nu} \Gamma_5+
    \beta^{(4)}_\mu \Gamma_5~,
  \end{equation}
  where we have used the the last of the
  identities~\eqref{eq:duality}.

  Next we solve the second cocycle condition~\eqref{eq:cc5}.
  Using the expression for $\beta_\mu$ given in
  equation~\eqref{eq:solcc3}, we can rewrite the first term of
  equation~\eqref{eq:cc5} as follows:
  \begin{equation}
    ( \sbar \Gamma^\mu s ) \left( \Gamma_\mu (a+ b \Gamma_5) + \varphi^\nu
    \Gamma_{\mu\nu} \Gamma_5 + \beta^{(4)}_\mu \Gamma_5 \right) s,
  \end{equation}
  where, using that the Dirac current of $s$ Clifford annihilates $s$
  (see Proposition~\ref{prop:clifford}), the first term vanishes.
  Similarly, using $\Gamma_{\mu\nu} = - \Gamma_\nu \Gamma_\mu
  - \eta_{\mu\nu}$ and again the fact that $( \sbar\Gamma^\mu s )
  \Gamma_\mu s = 0$, the first term in equation~\eqref{eq:cc5} becomes
  \begin{equation}
    ( \sbar \Gamma^\mu s ) \left( \beta^{(4)}_\mu - \varphi_\mu \right)
    \Gamma_5 s~.
  \end{equation}
  We now rewrite the second term in equation~\eqref{eq:cc5} by
  inserting the expression for $\beta_\nu$ in
  equation~\eqref{eq:solcc3} into equation~\eqref{eq:prelhscc3} to
  obtain
  \begin{equation}
    \tfrac12 ( \sbar \Gamma_\mu \beta_\nu s ) \Gamma^{\mu\nu} s = \tfrac12
    ( \sbar \Gamma_{\mu\nu} (a + b \Gamma_5) s ) \Gamma^{\mu\nu} s - ( \sbar
    \Gamma^\mu s ) \varphi_\mu \Gamma_5 s~,
  \end{equation}
  where we have again used $\Gamma_{\mu\nu} = - \Gamma_\nu
  \Gamma_\mu - \eta_{\mu\nu}$ and the fact that $( \sbar\Gamma^\mu
  s ) \Gamma_\mu s = 0$.  The first term on the right-hand side vanishes
  by virtue of the fact that the Dirac $2$-form of $s$ and its dual both Clifford
  annihilate $s$ (see Proposition~\ref{prop:clifford}).  In
  summary, equation~\eqref{eq:cc5} becomes
  \begin{equation}
    ( \sbar \Gamma^\mu s ) \left( \beta^{(4)}_\mu - 2 \varphi_\mu \right)
    \Gamma_5 s  = 0~,
  \end{equation}
  for all $s \in S$, whose general solution is
  \begin{equation}
    \beta^{(4)}_\mu = 2 \varphi_\mu~.
  \end{equation}
  Inserting this into equation~\eqref{eq:solcc3}, we arrive at
  \begin{equation*}
    \beta_\mu = \Gamma_\mu  (a + b \Gamma_5) + \varphi^\nu
    \Gamma_{\mu\nu} \Gamma_5 + 2 \varphi_\mu \Gamma_5~,
  \end{equation*}
  which can be rewritten as
  \begin{equation*}
    \beta_\mu = \Gamma_\mu  (a + b \Gamma_5) - \tfrac12 \varphi^\nu
    \left( \Gamma_\mu \Gamma_\nu + 3 \Gamma_\nu
      \Gamma_\mu\right) \Gamma_5~,
  \end{equation*}
  from where the result follows.
\end{proof}

\begin{proof}[Alternative proof]
  It may benefit some readers to see an alternative proof of this result,
  which exploits the equivariance under $\fso(V)$.

  Let us consider the first cocycle condition~\eqref{eq:cc1}.
  Given $\beta\in\Hom(V,\End(S))$ and any $v\in V$ we let $\beta_v\in
  \End(S)$ to be defined by $\beta_v s = \beta(v,s)$ and rewrite
  \eqref{eq:cc1} as $\gamma(s,s)v = -2\kappa(s,\beta_v s)$.  Taking
  the inner product with $v$ and using \eqref{eq:DiracCurrent} and
  \eqref{eq:symplectic} we arrive at
  \begin{equation}
    \label{eq:symmetric-endo}
    0 = \left<s, v \cdot \beta_v s\right>\;,
  \end{equation}
  for all $s\in S$, $v\in V$.  In other words, for all $v\in V$, the
  endomorphism $v\cdot \beta_v$ of
  $S$ is in $\wedge^2 S=\wedge^0V\oplus\wedge^3 V\oplus\wedge^4 V$ 
  or, equivalently, it is fixed by the anti-involution $\varsigma$ defined by
  the symplectic form on $S$. We claim that the solution space of equation
  \eqref{eq:symmetric-endo} is an $\fso(V)$-submodule of
  $\Hom(V,\End(S))$. To see this, it is convenient to consider the
  $\fso(V)$-equivariant map
  \begin{equation*}
    \Upsilon: \Hom(V,\End(S)) \to \Hom(\odot^2V,\End(S))
  \end{equation*}
  which sends $\beta$ to $\Upsilon(\beta)$ given by
  \begin{equation*}
    \Upsilon(\beta)(v,w) = v\cdot \beta_w + w \cdot \beta_v~,
  \end{equation*}
  for all $v,w \in V$. We consider also the natural decompositions
  into $\fso(V)$-submodules
  \begin{equation}
    \begin{split}
      \label{eq:split}
      \Hom(V,\End(S)) &\cong \bigoplus_{p=0}^4 \Hom(V,\wedge^p V)\;,\\
      \Hom(\odot^2 V,\End(S)) &\cong \bigoplus_{q=0}^4\Hom(\odot^2V,\wedge^q V)\;,
    \end{split}
  \end{equation}
  which are induced by the usual identification
  $\End(S)=\bigoplus_{p=0}^4\wedge^p V$.  This allows us to write any
  elements $\beta\in\Hom(V,\End(S))$ and $\theta\in
  \Hom(\odot^2 V,\End(S))$ as $\beta=\beta_0+\cdots+\beta_4$ and
  $\theta= \theta_0 + \cdots +\theta_4$, where
  $\beta_p\in\Hom(V,\wedge^p V)$ and $\theta_q\in\Hom(\odot^2 V,
  \wedge^q V)$. The claim then follows from the fact that
  equation~\eqref{eq:symmetric-endo} is equivalent to $\Upsilon(\beta)_q =
  0$ for $q=1,2$.

  In Table~\ref{tab:betacomps} below we list the decomposition of
  $\Hom(V,\wedge^p V)$ for $p=0,1,\dots,4$ into irreducible
  $\fso(V)$-modules, with $(V \otimes \wedge^p V)_0$ denoting the
  kernel of Clifford multiplication
  $V \otimes \wedge^p V \to \wedge^{p-1}V \oplus \wedge^{p+1}V$.
  \begin{table}[h!]
    \centering
    \caption{Irreducible components of $\Hom(V,\wedge^pV)$ for
      $p=0,\dots,4$.}
    \label{tab:betacomps}
    \begin{tabular}{>{$}c<{$}|>{$}l<{$}}
      \multicolumn{1}{c|}{p} & \multicolumn{1}{c}{$\Hom(V,\wedge^p V)$}\\\hline
      0 & \wedge^1 V\\
      1 & \wedge^0 V \oplus \wedge^2 V \oplus (V\otimes\wedge^1 V)_0\\
      2 & 2\wedge^1 V \oplus (V \otimes
          \wedge^2V)_0\\
      3 & \wedge^0 V \oplus \wedge^2 V \oplus (V \otimes
          \wedge^1V)_0\\
      4 & \wedge^1 V
    \end{tabular}
  \end{table}

  From the first decomposition in \eqref{eq:split} we immediately infer that $\Hom(V,\End(S))$ is the direct sum of five 
  different isotypical components, namely 
  \begin{equation}
    \label{eq:isotypic1}
    2\wedge^0 V\;,\qquad 4\wedge^1 V\;,\qquad 2\wedge^2 V\;,
  \end{equation}
  and 
  \begin{equation}
    \label{eq:isotypic2}
    2(V\otimes \wedge^1 V)_0\;,\qquad  (V\otimes \wedge^2 V)_0\;.
  \end{equation}
  Note now that for any $\Theta, \Theta'\in\wedge^2 S$
  the element $\beta\in\Hom(V,\End(S))$ defined by
  \begin{equation*}
    \beta_v s=v\cdot\Theta\cdot s+\Theta'\cdot v\cdot s
  \end{equation*}
  satisfies 
  \begin{equation*}
    \begin{split}
      \varsigma(v\cdot\beta_v) &=
      -\eta(v,v)\varsigma(\Theta)+\varsigma(v\cdot\Theta'\cdot v)\\
      &= -\eta(v,v)\Theta+v\cdot\varsigma(\Theta')\cdot v\\
      &=-\eta(v,v)\Theta+v\cdot \Theta'\cdot v\\
      &=v\cdot\beta_v
    \end{split}
  \end{equation*}
  and it is therefore a solution of \eqref{eq:symmetric-endo}. If
  instead $\Theta, \Theta'\in\odot^2 S=\wedge^1V\oplus\wedge^2 V$ a
  similar computation yields
  $\varsigma(v\cdot\beta_v)=-v\cdot\beta_v$. In summary we get
  that the solution space of equation \eqref{eq:symmetric-endo}
  contains an $\fso(V)$-module isomorphic to
  \begin{equation}
    \label{eq:solmodule}
    \wedge^0 V \oplus 2 \wedge^3 V \oplus \wedge^4 V\;,
  \end{equation}
  where, say, $\Theta\in\wedge^0V\oplus\wedge^3 V\oplus\wedge^4 V$,
  $\Theta'\in\wedge^3 V$ and that there exists another submodule which is
  isomorphic to
  \begin{equation}
    \label{eq:notsolmodule}
    2\wedge^1 V\oplus2\wedge^2 V
  \end{equation}
  and formed by elements which do not satisfy
  \eqref{eq:symmetric-endo}. Note that the direct sum of
  \eqref{eq:solmodule} and  \eqref{eq:notsolmodule} gives all the
  isotypical components \eqref{eq:isotypic1} in $\Hom(V,\End(S))$.

  We now turn to the remaining isotypical components \eqref{eq:isotypic2}. 
  We first recall that $\Hom(V,\End(S))$ contains a single irreducible
  submodule of type $(V\otimes\wedge^2 V)_0$. We fix an orthonormal
  basis $(\be_{\mu})$ of $V$, consider the element
  \begin{equation*}
    \beta=\be_1^{\flat}\otimes \be_2\wedge \be_3+\be_2^\flat\otimes
    \be_1\wedge \be_3\in (V\otimes\wedge^2 V)_0
  \end{equation*}
  and evaluate
  \begin{equation*}
    \begin{split}
      \tfrac{1}{2}\Upsilon(\beta)(\be_1+\be_2,\be_1+\be_2) &= (\be_1+\be_2)\cdot \beta_{\be_1+\be_2}\\
      &=-(\be_1+\be_2)\cdot (\be_2\wedge \be_3)-(\be_1+\be_2)\cdot (\be_1\wedge \be_3)\\
      &=\imath_{\be_2}(\be_2\wedge \be_3)+\imath_{\be_1}(\be_1\wedge \be_3)\\
      &=-2 \be_3\;.
    \end{split}
  \end{equation*}
  In other words $\Upsilon(\beta)_1\neq 0$, which implies that
  $(V\otimes\wedge^2 V)_0$ is not included in the solution space of
  equation \eqref{eq:symmetric-endo}.  Finally any irreducible
  submodule in $\Hom(V,\End(S))$ isomorphic to $(V\otimes\wedge^1
  V)_0$ is given by the image into $\Hom(V,\wedge^1
  V)\oplus\Hom(V,\wedge^3 V)$ of an $\fso(V)$-equivariant embedding
  $\xi\mapsto (r_1\xi,r_2\xi)$,  $\xi\in(V\otimes\wedge^1 V)_0$,
  where $r_1,r_2\in\mathbb R$. For instance the image of
  $\xi=\be_1\otimes \be_2+ \be_2\otimes \be_1\in(V\otimes\wedge^1 V)_0$ is
  \begin{equation*}
    \begin{split}
      \beta &= r_1(\be_1^\flat\otimes \be_2+ \be_2^\flat\otimes
      \be_1)+r_2(\be_1^\flat \otimes \star \be_2+\be_2^\flat\otimes \star
      \be_1)\\
      &=r_1(\be_1^\flat\otimes \be_2+ \be_2^\flat\otimes \be_1)+r_2(-\be_1^\flat
      \otimes \be_0\wedge \be_1\wedge \be_3+\be_2^\flat\otimes \be_0\wedge
      \be_2\wedge \be_3)
    \end{split}
  \end{equation*}
  and we have
  \begin{equation*}
    \begin{split}
      \tfrac{1}{2}\Upsilon(\beta)(\be_1,\be_1)&=\be_1\cdot \beta_{\be_1}\\
      &=-r_1(\be_1\cdot \be_2)+r_2(\be_1\cdot \be_0\wedge \be_1\wedge \be_3 )\\
      &=-r_1 \be_1\wedge \be_2-r_2 \be_0\wedge \be_3\;.
    \end{split}
  \end{equation*}
  It follows that $\Upsilon(\beta)_2\neq 0$ unless $r_1=r_2=0$ and that
  the solution space of \eqref{eq:symmetric-endo} does not contain any
  submodule isomorphic to $(V\otimes\wedge^1 V)_0$ either.

  In summary we just showed that $\beta\in\Hom(V,\End(S))$ solves
  \eqref{eq:symmetric-endo} if and only if there exist reals $a,b$ and
  vectors $\varphi_1,\varphi_2$ such that
  \begin{equation}
    \label{eq:solfirstcoc}
    \beta_v s=v\cdot (a+b\vol)\cdot s+(v\cdot\varphi_1+\varphi_2\cdot
    v)\cdot\vol\cdot s\;,
  \end{equation}
  for all $v\in V$ and $s\in S$. 

  We now turn to equation \eqref{eq:cc2}, with $\beta$ as in
  \eqref{eq:solfirstcoc} and $\gamma$ expressed in terms of $\beta$
  using \eqref{eq:cc1}. We remark that from the above discussion we
  already know that $\mathscr H^{2,2}$ is identified with an
  $\fso(V)$-submodule of $2\wedge^0 V\oplus 2\wedge^1 V$.

  At this point it is convenient to fix an $\eta$-orthonormal basis
  $(\be_\mu)$ of $V$ and use the Einstein summation convention on
  indices as in Appendix~\ref{sec:gamma-matrices}.

  We first introduce
  \begin{equation*}
    \gamma(s,s)_{\mu\nu} = \eta(\be_\mu, \gamma(s,s)\be_\nu)
  \end{equation*}
  and note that \eqref{eq:cc1} is equivalent to
  $\gamma(s,s)_{\mu\nu}=-2\sbar\Gamma_\mu \beta_\nu s$ where we set
  $\beta_\mu = \beta_{\be_\mu}$. In particular,
  \begin{equation*}
    \begin{split}
      \sigma(\gamma(s,s))s&=-\tfrac{1}{4}\gamma(s,s)_{\mu\nu}\Gamma^{\mu\nu} s\\
      &=\tfrac{1}{2} ( \sbar \Gamma_\mu\beta_\nu s )\Gamma^{\mu\nu} s\;,\\
      \beta(\kappa(s,s),s)&= ( \sbar \Gamma^\mu s ) \beta_\mu s\;,
    \end{split}
  \end{equation*}
  and equation \eqref{eq:cc2} is equivalent to
  \begin{equation}
    \label{eq:cc2II}
    \tfrac{1}{2}( \sbar \Gamma_\mu\beta_\nu s) \Gamma^{\mu\nu} s + ( \sbar
    \Gamma^\mu s )\beta_\mu s=0\;.
  \end{equation}
  We first show that $\mathscr H^{2,2}$ includes the whole isotypical
  component $2\wedge^0 V$. Indeed if $\beta_v s=av\cdot s$ for some
  real $a$ then the left-hand side of equation \eqref{eq:cc2II} is $a(\frac{1}{2} ( \sbar
  \Gamma_{\mu\nu}s ) \Gamma^{\mu\nu} s + ( \sbar \Gamma^\mu s ) \Gamma_\mu s)$
  and both terms are zero separately since $\omega^{(2)}(s,s)\cdot
  s=\omega^{(1)}(s,s)\cdot s=0$ (see Proposition~\ref{prop:clifford}).
  If $\beta_v s=bv\cdot\vol\cdot s$, for some real $b$, we also get
  $b(\frac{1}{2}( \sbar \Gamma_{\mu\nu}\Gamma_5 s ) \Gamma^{\mu\nu} s +
  ( \sbar \Gamma^\mu s ) \Gamma_\mu \Gamma_5 s)=0$ since
  $\star\omega^{(2)}(s,s)\cdot s=\star\omega^{(1)}(s,s)\cdot s=0$ (see
  again Proposition~\ref{prop:clifford}).

  Finally, we consider the irreducible submodule in $2\wedge^1 V$
  determined by \eqref{eq:solfirstcoc} and the image of the
  $\fso(V)$-equivariant embedding
  $\varphi\mapsto(\varphi_1,\varphi_2)=(r_1\varphi,r_2\varphi)$, where
  $r_1, r_2\in\RR$.  In other words, we consider $\beta_v
  s=(r_1v\cdot\varphi+r_2\varphi\cdot v)\cdot\vol\cdot s$ and note
  that equation \eqref{eq:cc2II} gives
  \begin{equation}
    \label{eq:r1r2}
    \tfrac12 r_1 ( \sbar \Gamma_\mu \Gamma_\nu\varphi \Gamma_5 s ) \Gamma^{\mu\nu} s + 
    \tfrac12 r_2 ( \sbar \Gamma_\mu \varphi\Gamma_\nu \Gamma_5 s ) \Gamma^{\mu\nu} s +
    r_1 ( \sbar \Gamma^\mu s ) \Gamma_\mu \varphi \Gamma_5 s +
    r_2 ( \sbar \Gamma^\mu s ) \varphi\Gamma_\mu \Gamma_5 s=0~.
  \end{equation}
  The last term vanishes because $( \sbar \Gamma^\mu s ) \Gamma_\mu
  \Gamma_5 s =  - \omega^{(3)} \cdot s = 0$ (see
  Proposition~\ref{prop:clifford}).  The third term is
  \begin{equation*}
    r_1 \kappa \cdot \varphi \Gamma_5 s = -
    r_1 \varphi \cdot \kappa \Gamma_5 s - 2 r_1 \eta (\kappa,\varphi)
    \Gamma_5 s = - 2 r_1 \eta (\kappa,\varphi) \Gamma_5 s~,
  \end{equation*}
  again, using that $\omega^{(3)} \cdot s = 0$.  Using
  equation~\eqref{eq:duality} repeatedly, the Clifford relation and
  Proposition~\ref{prop:clifford} again, we can rewrite the first two
  terms of \eqref{eq:r1r2} as
  \begin{equation*}
    - r_1 \eta (\kappa,\varphi) \Gamma_5 s + r_2 \eta (\kappa,\varphi)\Gamma_5 s~,
  \end{equation*}
  turning equation~\eqref{eq:r1r2} into
  \begin{equation*}
    (r_2 - 3 r_1 )  \eta (\kappa,\varphi) \Gamma_5 s =0~.
  \end{equation*}
  Since this must hold for all $\varphi \in \wedge^1 V$ and $s \in S$, it follows that $r_2 = 3r_1$.
\end{proof}

\section{Killing superalgebras}
\label{sec:killing-superalgebra}

In analogy with the results
\cite{Figueroa-O'Farrill:2015efc,Figueroa-O'Farrill:2015utu} in eleven
dimensions, we define a notion of Killing spinor from the component
$\beta$ of the cocycle in Proposition~\ref{prop:killspin}. In this
section we prove that these Killing spinors generate a Lie
superalgebra.

\subsection{Preliminaries}
	
Let $(M,g,a,b,\varphi)$ be a four-dimensional Lorentzian spin manifold
$(M,g)$ with spin bundle $S(M)$ which is, in addition, endowed with two
functions $a,b\in\mathscr C^{\infty}(M)$ and a vector field
$\varphi\in\fX(M)$.  The main aim of this section is to construct
a Lie superalgebra $\mathfrak{k}=\mathfrak{k}_{\bar
  0}\oplus\mathfrak{k}_{\bar 1}$ naturally associated with
$(M,g,a,b,\varphi)$.
	
Motivated by (i) of Proposition~\ref{prop:killspin} we introduce the
connection
\begin{equation}
  \label{eq:gravitinoconnection}
  D_X\varepsilon:=\nabla_{X}\varepsilon-X\cdot (a+b\vol)\cdot \varepsilon+(\varphi\wedge X)\cdot\vol\cdot\varepsilon-2g(\varphi,X)\vol\cdot\varepsilon
\end{equation}
on $S(M)$, where $\nabla$ is the Levi-Civita connection of $(M,g)$,
$X\in\mathfrak{X}(M)$, $\varepsilon\in\Gamma(S(M))$.

\begin{definition}
  A section $\varepsilon$ of $S(M)$ is called a \emph{Killing spinor} if
  $D_X\varepsilon=0$ for all $X\in\mathfrak{X}(M)$.
\end{definition}

Note that any non-zero Killing spinor is nowhere vanishing since it is
parallel with respect to a connection on the spinor bundle. We set
\begin{equation}
  \label{eq:KSA}
  \begin{split}
    \mathfrak{k}_{\bar 0}&=\{X\in\mathfrak{X}(M)\mid\mathscr
    L_{X}g=\mathscr L_{X} a=\mathscr L_{X}b=\mathscr L_{X}\varphi=0\}\;,\\
    \mathfrak{k}_{\bar 1}&=\{\varepsilon\in\Gamma(S(M))\mid
    D_X\varepsilon=0\;\;\text{for all}\;\;X\in\mathfrak{X}(M)\}\;,
  \end{split}
\end{equation}
and consider the operation $[-,-]:\fk\otimes\fk\to\fk$  compatible
with the parity of $\mathfrak k=\mathfrak k_{\bar 0}\oplus\mathfrak
k_{\bar 1}$ and determined by the following maps:
\begin{itemize}
\item $[-,-]:\fk_{\bar 0}\otimes\fk_{\bar 0}\to\fk_{\bar 0}$ is given
  by the usual commutator of vector fields,
\item $[-,-]:\fk_{\bar 1}\otimes\fk_{\bar 1}\to\fk_{\bar 0}$ is a
  symmetric map, with
  $[\varepsilon,\varepsilon]=\kappa(\varepsilon,\varepsilon)$ given by
  the Dirac current of $\varepsilon\in\fk_{\bar 1}$,
\item $[-,-]:\fk_{\bar 0}\otimes\fk_{\bar 1}\to\fk_{\bar 1}$ is given
  by the spinorial Lie derivative of Lichnerowicz and Kosmann (see
  \cite{MR0312413} and also, e.g., \cite{FigueroaO'Farrill:1999va}).
\end{itemize}
The fact that $[-,-]$ actually takes values in $\fk$ is a consequence
of Theorem~\ref{thm:brackets} below, where we show that $[-,-]$ is
the bracket of a Lie superalgebra structure on $\fk$.  Assuming that
result for the moment we make the following

\begin{definition}
  The pair $(\fk=\fk_{\bar 0}\oplus\fk_{\bar 1},[-,-])$ is called the
  \emph{Killing superalgebra} associated with  $(M,g,a,b,\varphi)$.
\end{definition}

We recall that the spinorial Lie derivative of a spinor field
$\varepsilon$ along a Killing vector field $X$ is defined by
$\mathscr L_{X}\varepsilon = \nabla_{X}\varepsilon +
\sigma(A_X)\varepsilon$,
where $\sigma:\fso(TM)\to\End(S(M))$ is the spin representation and
$A_X=-\nabla X\in\fso(TM)$.  It enjoys the following basic properties,
for all Killing vectors $X,Y$, spinors $\varepsilon$, functions $f$
and vector fields $Z$:
\begin{enumerate}[label=(\roman*)]
\item $\mathscr L_X$ is a derivation:
  \begin{equation*}
    \mathscr L_X(f\varepsilon)=X(f)\varepsilon+f\mathscr L_{X}\varepsilon\;;
  \end{equation*}
\item $X\mapsto\mathscr L_X$ is a representation of the Lie algebra of Killing vector fields:
  \begin{equation*}
    \mathscr L_X(\mathscr L_Y\varepsilon)-\mathscr L_Y(\mathscr L_X\varepsilon)=\mathscr L_{[X,Y]}\varepsilon\;;
  \end{equation*}
\item $\mathscr L_X$ is compatible with Clifford multiplication:
  \begin{equation*}
    \mathscr L_X(Z\cdot \varepsilon)=[X,Z]\cdot\varepsilon+ Z\cdot\mathscr L_{X}\varepsilon\;;
  \end{equation*}
\item $\mathscr L_X$ is compatible with the Levi-Civita connection:
  \begin{equation*}
    \mathscr L_X(\nabla_{Z}\varepsilon)=\nabla_{[X,Z]}\varepsilon+ \nabla_{Z}(\mathscr L_{X}\varepsilon)\;.
  \end{equation*}
\end{enumerate}
We note for later use that, from property (iii) and the fact that
$\odot^2 S=\wedge^1 V\oplus\wedge^2 V$, we have for any Killing
vector $X$, spinor $\varepsilon$ and vector field $Z$,
\begin{equation*}
  \begin{split}
    g([X,\kappa(\varepsilon,\varepsilon)],Z) &=
    X(g(\kappa(\varepsilon,\varepsilon),Z)) - g(\kappa(\varepsilon,\varepsilon),[X,Z])\\
    &=X(\left<\varepsilon,Z\cdot\varepsilon\right>)-\left<\varepsilon,[X,Z]\cdot\varepsilon\right>\\
    &=2\left<\nabla_X\varepsilon,Z\cdot\varepsilon\right>+\left<\varepsilon,\nabla_{Z}X\cdot\varepsilon\right>\\
    &=2\left<\nabla_X\varepsilon,Z\cdot\varepsilon\right>+2\left<\varepsilon,Z\cdot\sigma(A_X)\varepsilon\right>\\
    &=2g(\kappa(\mathscr L_{X}\varepsilon,\varepsilon),Z)\;,
  \end{split}
\end{equation*}
which yields the following additional property of the spinorial Lie derivative:
\begin{itemize}
\item[(v)] the Dirac current is equivariant under the action of Killing vector fields:
\begin{equation*}
  [X,\kappa(\varepsilon,\varepsilon)]=2\kappa(\mathscr L_{X}\varepsilon,\varepsilon)\;.
\end{equation*}
\end{itemize}

We first collect a series of important auxiliary results, which will
be needed in the proof of the main Theorem~\ref{thm:brackets}. 

\begin{proposition}
  \label{prop:auxiliary}
  Let $\varepsilon$ be a non-zero section of the spinor bundle $S(M)$ of
  $(M,g,a,b,\varphi)$, with associated differential forms
  \begin{itemize}
  \item $\omega^{(1)}\in\Omega^1(M)$, where $\omega^{(1)}(X)=
    \left<\varepsilon, X\cdot \varepsilon\right>$,
  \item $\omega^{(2)}\in\Omega^2(M)$, where $\omega^{(2)}(X,Y)=
    \left<\varepsilon, (X \wedge Y) \cdot \varepsilon\right>$,
  \item $\widetilde\omega^{(2)}=-\star\omega^{(2)}\in\Omega^2(M)$, where
    $\widetilde\omega^{(2)}(X,Y)=  \left<\varepsilon, (X\wedge
      Y)\cdot\vol\cdot \varepsilon \right>$,
  \item $\omega^{(3)}=-\star\omega^{(1)}\in\Omega^3(M)$, where
    $\omega^{(3)}(X,Y,Z)=\left<\varepsilon, (X\wedge Y\wedge Z)\cdot\vol\cdot\varepsilon\right>$,
  \end{itemize}
  for all $X,Y,Z\in\mathfrak{X}(M)$. If $\varepsilon$ is a Killing spinor then
  \begin{enumerate}[label=(\roman*)]
  \item $d\omega^{(1)}=-4a\omega^{(2)}-4b\widetilde\omega^{(2)}-4\imath_{\varphi}\omega^{(3)}$,
  \item $d\omega^{(2)}=6b\omega^{(3)}$,
  \item $d\widetilde\omega^{(2)}=-6a\omega^{(3)}$,
  \item $d\omega^{(3)}=0$.
  \end{enumerate}
  In particular the Dirac current $K=\kappa(\varepsilon,\varepsilon)$ of
  $\varepsilon$ is a Killing vector field satisfying
  \begin{equation}
    \label{eq:Lderivatives}
    \mathscr L_{K}a = \mathscr L_{K} b = \mathscr L_{K}\one= \mathscr
    L_{K}\two = \mathscr L_{K} \ttwo = \mathscr L_{K} \three=0\;
  \end{equation}
  and
  \begin{equation}
    \label{eq:lienabla}
    \begin{split}
      0 &= -2\ttwo(Z,X) g(\mathscr L_{K}\varphi,Y)+2\ttwo(Z,Y) g(\mathscr
      L_{K}\varphi,X) -2\ttwo(\mathscr L_{K}\varphi,Y)g(Z,X)\\
      & \quad {} + 2\ttwo(\mathscr L_{K}\varphi,X)g(Z,Y) +4\ttwo(X,Y) g(\mathscr
      L_{K}\varphi,Z)\;,
    \end{split}
  \end{equation}
  for all $X,Y,Z\in\mathfrak{X}(M)$.
\end{proposition}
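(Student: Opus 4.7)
The strategy is to attack (i)--(iv) by direct Clifford-algebra computation, to derive the Killing property of $K$ from the first Spencer cocycle condition via Proposition~\ref{prop:killspin}, and finally to obtain the Lie-derivative assertions as corollaries of (i)--(iv) through Cartan's magic formula.

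For (i)--(iv), the starting point is the observation that any spinor bilinear $\omega(Y_1,\dots,Y_k)=\langle\varepsilon,(Y_1\wedge\cdots\wedge Y_k)\cdot\varepsilon\rangle$ satisfies $(\nabla_X\omega)(Y_1,\dots,Y_k)=2\langle\nabla_X\varepsilon,(Y_1\wedge\cdots\wedge Y_k)\cdot\varepsilon\rangle$, and fully antisymmetrising produces $d\omega$. Substituting
\[
\nabla_X\varepsilon \;=\; X\cdot(a+b\vol)\cdot\varepsilon - (\varphi\wedge X)\cdot\vol\cdot\varepsilon + 2g(\varphi,X)\vol\cdot\varepsilon
\]
from $D_X\varepsilon=0$ yields, for each $k$, a finite sum of Clifford monomials acting on $\varepsilon$. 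These are reorganised using the Clifford relation, the duality identities with $\vol$, and the parity/symmetry properties of $\langle-,\Gamma\cdot-\rangle$ recorded in the Appendix, until every term is recognised as one of the bilinears $\one,\two,\ttwo,\three$ or, in the case of (i), as a contraction $\iota_\varphi\three$. Matching coefficients then produces the four stated identities.

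The Killing property of $K$ follows by re-reading the first Spencer cocycle condition on the manifold. From $g(K,Y)=\one(Y)$ one gets $g(\nabla_X K,Y) = 2\langle\nabla_X\varepsilon,Y\cdot\varepsilon\rangle$, so the symmetric part in $(X,Y)$, which is $(\mathscr L_K g)(X,Y)$, is precisely the polarised form of the expression $\langle\varepsilon, X\cdot\beta_X\varepsilon\rangle$ in equation~\eqref{eq:symmetric-endo} from the alternative proof of Proposition~\ref{prop:killspin}. For $\beta$ determined by $a,b,\varphi$ via Proposition~\ref{prop:killspin}(i) this vanishes identically, and hence $K$ is Killing.

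The Lie-derivative assertions~\eqref{eq:Lderivatives} reduce to two applications of Cartan's magic formula. First, $\mathscr L_K\omega = \iota_K d\omega + d\iota_K\omega$, combined with the Clifford annihilation $K\cdot\varepsilon=0$ (Proposition~\ref{prop:clifford}) and the skew-symmetry of the spinor inner product on $S$, forces each of $\iota_K\one,\iota_K\two,\iota_K\ttwo,\iota_K\three$ to vanish; inserting (i)--(iv) then gives $\mathscr L_K\one=\mathscr L_K\two=\mathscr L_K\ttwo=\mathscr L_K\three=0$. Second, applying $\mathscr L_K$ to identities (ii) and (iii) and exploiting that $\three=-\star\one$ is nowhere vanishing wherever $\varepsilon$ is, forces $\mathscr L_K a=\mathscr L_K b=0$. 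Finally, equation~\eqref{eq:lienabla} is obtained by the analogous process: applying $\mathscr L_K$ to identity (i) leaves a residual constraint involving $\mathscr L_K\varphi$ and $\three$, which after re-expression in terms of $\ttwo$ (using $\three=-\star\one$ and the dualities between Clifford elements of complementary degree) rearranges into the displayed tensorial identity.

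The main obstacle is the Clifford-algebra bookkeeping in items (i)--(iv): each $d\omega^{(k)}$ calculation requires meticulous tracking of signs, duality maps, and Schur-type symmetry identities in order to collapse a multi-term Clifford expression into the stated combination of bilinears. Once (i)--(iv) are secured, the remaining statements are short formal consequences.
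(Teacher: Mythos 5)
Your derivation of (i)--(iv), the Killing property of $K$, and the vanishings in~\eqref{eq:Lderivatives} is broadly sound and parallels the paper's. The route to the Killing property via~\eqref{eq:symmetric-endo} is a valid alternative to the paper's direct expansion of $g(\nabla_X K,Y)$ (which simply checks the result is manifestly skew in $X,Y$); likewise your derivation of $\mathscr L_K a=\mathscr L_K b=0$ by applying $\mathscr L_K$ to (ii)--(iii) is equivalent to the paper's computation from $d(d\two)=0$ and $d(d\ttwo)=0$.

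The derivation of~\eqref{eq:lienabla}, however, has a genuine gap. Applying $\mathscr L_K$ to identity (i), and using $\mathscr L_K a=\mathscr L_K b=0$ and $\mathscr L_K\omega=0$ for the bilinears, gives only $\imath_{\mathscr L_K\varphi}\three=0$, equivalently $(\mathscr L_K\varphi)^\flat\wedge\one=0$, i.e.\ that $\mathscr L_K\varphi$ is pointwise collinear with $K$. This is strictly weaker than~\eqref{eq:lienabla}: substituting $\mathscr L_K\varphi=fK$ into~\eqref{eq:lienabla} and using $\imath_K\ttwo=0$ and $\one\wedge\ttwo=0$ leaves $f\,\ttwo(X,Y)\one(Z)=0$, a nontrivial constraint that forces $f=0$ --- this is exactly how~\eqref{eq:lienabla} is deployed in the proof of Theorem~\ref{thm:brackets}, so~\eqref{eq:lienabla} cannot be a consequence of collinearity alone. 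The reason your route fails is that~\eqref{eq:lienabla} depends on the full covariant-derivative formula
\begin{equation*}
\nabla_{Z}\two=2aZ\wedge\one+2b\,\imath_{Z}\three-2\,\imath_Z\ttwo\wedge\varphi-2Z\wedge\imath_{\varphi}\ttwo+4g(\varphi,Z)\ttwo\,,
\end{equation*}
and not merely on its alternation $d\two=\sum e^\mu\wedge\nabla_{e_\mu}\two$, which discards the non-skew part. The missing step is: since $K$ is Killing and $\mathscr L_K\two=0$, compatibility of the Lie derivative with the Levi-Civita connection gives $\mathscr L_K(\nabla_Z\two)=\nabla_{[K,Z]}\two$; expanding both sides via the displayed formula together with~\eqref{eq:Lderivatives}, all terms cancel except those proportional to $\mathscr L_K\varphi$, and these assemble into~\eqref{eq:lienabla}. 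Your write-up only records the exterior derivatives, which do not carry enough information for this final step.
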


\begin{proof}
For any Killing spinor $\varepsilon$ and $X,Y,Z\in\mathfrak{X}(M)$ we compute
\begin{equation*}
  \begin{split}
    (\nabla_{Z}\omega^{(1)})(X)&=2\left<\varepsilon,X\cdot\nabla_{Z}\varepsilon\right>\\
    &=2a\left<\varepsilon, X\wedge Z\cdot\varepsilon \right> +2b\left<\varepsilon, X\wedge
      Z\cdot\vol\cdot \varepsilon\right> +2\left<\varepsilon, \varphi\wedge X\wedge
      Z\cdot\vol\cdot\varepsilon\right>\;,
  \end{split}
\end{equation*}
and
\begin{equation*}
  \begin{split}
    (\nabla_{Z}\omega^{(2)})(X,Y)&=2\left<\varepsilon,X\wedge Y\cdot\nabla_{Z}\varepsilon\right>\\
    &=2a\left<\varepsilon,X\wedge Y\cdot Z\cdot\varepsilon \right>
    +2b\left<\varepsilon,X\wedge Y\cdot Z\cdot\vol\cdot\varepsilon     \right>\\
    &\;\;\;\;-2\left<\varepsilon,X\wedge Y\cdot\varphi\wedge Z\cdot\vol\cdot\varepsilon
    \right>
    +4g(\varphi,Z)\left<\varepsilon,X\wedge Y\cdot\vol\cdot\varepsilon     \right>\\
    &=2ag(Z,X)\left<\varepsilon,Y\cdot\varepsilon \right>-2ag(Z,Y)\left<\varepsilon,X\cdot\varepsilon
    \right>
    +2b\left<\varepsilon,Z\wedge X\wedge Y\cdot\vol\cdot\varepsilon     \right>\\
    &\;\;\;\;+2g(\varphi,Y)\left<\varepsilon,X\wedge Z\cdot\vol\cdot\varepsilon \right>
    -2g(\varphi,X)\left<\varepsilon,Y\wedge Z\cdot\vol\cdot\varepsilon       \right>\\
    &\;\;\;\;-2g(Z,Y)\left<\varepsilon,X\wedge \varphi\cdot\vol\cdot\varepsilon \right>
    +2g(X,Z)\left<\varepsilon,Y\wedge\varphi\cdot\vol\cdot\varepsilon      \right>\\
    &\;\;\;\;+4g(\varphi,Z)\left<\varepsilon,X\wedge Y\cdot\vol\cdot\varepsilon \right>\;,
  \end{split}
\end{equation*}
where, in both cases, the last equality follows from
equation~\eqref{eq:symmetry} or, equivalently, that $\odot^2 S=\wedge^1
V\oplus\wedge^2 V$. In other words we have
\begin{align}
  \label{eq:nabla1}
  \nabla_{Z}\one&=-2a\imath_{Z}\two-2b\imath_{Z}\ttwo-2\imath_{Z}\imath_{\varphi}\three\;,\\
  \label{eq:nabla2}
  \nabla_{Z}\two&=2aZ\wedge\one+2b\imath_{Z}\three
                  -2\imath_Z\ttwo\wedge\varphi-2Z\wedge\imath_{\varphi}\ttwo+4g(\varphi,Z)\ttwo\;,
\end{align}
and applying $\star$, which is a parallel endomorphism of $\Omega^\bullet(M)$,
on both sides of these identities we also get
\begin{align}
  \nabla_{Z}\ttwo&=2b Z\wedge\one-2a\imath_{Z}\three
                   +2 Z\wedge\imath_{\varphi}\two+ 2\imath_{Z}\two\wedge\varphi-4g(\varphi,Z)\two\;,\\
  \nabla_{Z}\three&=2aZ\wedge\ttwo -2bZ\wedge\two +2 Z\wedge\varphi\wedge\one\;.
\end{align}
Claims (i)-(iv) follows then immediately from the fact that for any
$\omega\in\Omega^p(M)$ we have
\begin{equation*}
  d\omega=\sum_{\mu=0}^{3}e^\mu\wedge\nabla_{e_\mu}\omega
  \qquad\qquad\text{and}\qquad\qquad
  \sum_{\mu=0}^{3}e^\mu\wedge\imath_{e_\mu}\omega=p\omega\;,
\end{equation*}
where $(e_\mu)$ is a fixed local orthonormal frame field of $(M,g)$.

Now, for any Killing spinor $\varepsilon$ and $X,Y\in\mathfrak{X}(M)$ we have
\begin{equation*}
  \begin{split}
    g(\nabla_{X}K,Y)&=
    2\left<\varepsilon,Y\cdot\nabla_X\varepsilon\right>\\
    &=2a\left<\varepsilon,Y\cdot
      X\cdot\varepsilon\right>+2b\left<\varepsilon,Y\cdot
      X\cdot\vol\cdot\varepsilon\right>-2\left<\varepsilon,Y\cdot (\varphi\wedge X)\cdot\vol\cdot\varepsilon\right> +4g(\varphi,X)\left<\varepsilon,Y\cdot \vol\cdot\varepsilon\right>\\
    &=2a\left<\varepsilon,(Y\wedge
      X)\cdot\varepsilon\right>+2b\left<\varepsilon,(Y\wedge
      X)\cdot\vol\cdot\varepsilon\right>+2\left<\varepsilon,
      (\varphi\wedge Y\wedge X)\cdot\vol\cdot\varepsilon\right>
  \end{split}
\end{equation*}
where the last equality follows from
equation~\eqref{eq:symmetry}. Since the last term is manifestly
skewsymmetric in $X$ and $Y$ we have that $K$ is a Killing vector.  From $d\three=0$, we also have
\begin{equation*}
  0 = d(d\two) = 6db\wedge \three = -6db\wedge \star\one = -6(\imath_{K}db)\vol\;;
\end{equation*}
i.e., $\mathscr L_{K}b=0$. One shows $\mathscr L_{K}a=0$ in a similar
way. If $\omega=\one, \two, \ttwo,$ or $\three$, then
$\imath _{K}\omega=0$ by Proposition~\ref{prop:clifford} and from
(i)-(iv) we get
\begin{equation*}
  \mathscr L_{K}\omega = d\imath_{K}\omega+\imath_{K}d\omega =0\;.
\end{equation*}
This proof of \eqref{eq:Lderivatives} is thus completed.

In order to show \eqref{eq:lienabla} we use that $K$ is a Killing
vector and $\mathscr L_{K}\two=0$ so that for all
$X,Y,Z\in\mathfrak{X}(M)$:
\begin{equation*}
  \begin{split}
    0&=(\mathscr L_{K}\nabla_Z\two)(X,Y)-(\nabla_{[K,Z]}\two)(X,Y)\\
    &=\mathscr L_{K}((\nabla_Z\two)(X,Y))-(\nabla_{[K,Z]}\two)(X,Y) - \nabla_Z\two([K,X],Y)-\nabla_Z\two(X,[K,Y])\\
    &=-2\ttwo(Z,X)g(\mathscr L_{K}\varphi,Y)+2\ttwo(Z,Y) g(\mathscr
    L_{K}\varphi,X) -2\ttwo(\mathscr L_{K}\varphi,Y)g(Z,X)\\
    & \qquad {} + 2\ttwo(\mathscr L_{K}\varphi,X)g(Z,Y) + 4\ttwo(X,Y)
    g(\mathscr L_{K}\varphi,Z)\;,
  \end{split}
\end{equation*}
where the last identity follows from a direct computation using
\eqref{eq:nabla2} and \eqref{eq:Lderivatives}.
\end{proof}

\subsection{The Killing superalgebra}
\label{sec:KSA}

We state and prove the main result of Section~\ref{sec:killing-superalgebra}.

\begin{theorem}
  \label{thm:brackets}
  Let $X,Y\in\fk_{\bar 0}$ and $\varepsilon\in\fk_{\bar 1}$. Then
  $[X,Y]\in\fk_{\bar 0}$, $\kappa(\varepsilon,\varepsilon)\in\fk_{\bar
    0}$ whereas $\mathscr L_{X}\varepsilon\in\fk_{\bar 1}$. Moreover,
  $[-,-]$ defines a Lie superalgebra on $\fk=\fk_{\bar
    0}\oplus\fk_{\bar 1}$.
\end{theorem}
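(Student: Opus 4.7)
The statement packages together three closure assertions and the super-Jacobi identity, so I would treat each in turn, leaving the hardest for last. Two closures are routine. For $[X,Y]\in\fk_{\bar 0}$, apply $\mathscr L_{[X,Y]}=[\mathscr L_X,\mathscr L_Y]$ to each of $g,a,b,\varphi$ and use that $X,Y\in\fk_{\bar 0}$. For $\mathscr L_X\varepsilon\in\fk_{\bar 1}$, apply $\mathscr L_X$ to the Killing spinor equation $D_Y\varepsilon=0$. Using property (iv) on the $\nabla_Y\varepsilon$ term, property (iii) (extended to Clifford multiplication by arbitrary differential forms via $\mathscr L_X(\alpha\cdot\varepsilon)=(\mathscr L_X\alpha)\cdot\varepsilon+\alpha\cdot\mathscr L_X\varepsilon$) on the remaining terms, and the hypothesis that $X$ preserves $a,b,\vol$ and $\varphi$, one obtains the identity
\begin{equation*}
\mathscr L_X D_Y\varepsilon=D_{[X,Y]}\varepsilon+D_Y\mathscr L_X\varepsilon,
\end{equation*}
and since the left-hand side vanishes and $D_{[X,Y]}\varepsilon=0$, we conclude $D_Y\mathscr L_X\varepsilon=0$ for every $Y$.

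For $\kappa(\varepsilon,\varepsilon)\in\fk_{\bar 0}$, Proposition~\ref{prop:auxiliary} already delivers that $K:=\kappa(\varepsilon,\varepsilon)$ is Killing and that $\mathscr L_K a=\mathscr L_K b=0$, so only $\mathscr L_K\varphi=0$ remains. I would extract this from the tensorial identity \eqref{eq:lienabla} by first specialising $Z=W:=\mathscr L_K\varphi$: the four terms containing $g(W,\cdot)$ or $\ttwo(W,\cdot)$ pair off to cancel, leaving $4\,\ttwo(X,Y)\,g(W,W)=0$ for all $X,Y$. Since $\ttwo$ is nowhere vanishing on the support of the nowhere-vanishing spinor $\varepsilon$, this forces $|W|^2=0$. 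Feeding this null condition back into \eqref{eq:lienabla} with $Y=W$ and rearranging produces $\ttwo(Z,W)\,g(W,X)+\ttwo(X,W)\,g(Z,W)=0$, from which $W=0$ follows because $W$ must then lie in the radical of $\ttwo$, which is trivial for the bilinear $\ttwo$ of a non-zero spinor in four dimensions.

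For the super-Jacobi identity, split by parity. The $(0,0,0)$-case is the ordinary Jacobi for vector fields; the $(0,0,1)$-case is precisely property (ii) of the spinorial Lie derivative; and the $(0,1,1)$-case, namely $[X,\kappa(\varepsilon_1,\varepsilon_2)]=\kappa(\mathscr L_X\varepsilon_1,\varepsilon_2)+\kappa(\varepsilon_1,\mathscr L_X\varepsilon_2)$, is the polarisation of property (v). The only substantive remaining case is the $(1,1,1)$ Jacobi, which by polarisation reduces to the single identity $\mathscr L_K\varepsilon=0$ for $K=\kappa(\varepsilon,\varepsilon)$ and any $\varepsilon\in\fk_{\bar 1}$. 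This is the main obstacle.

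To prove the latter I would write $\mathscr L_K\varepsilon=\nabla_K\varepsilon+\sigma(A_K)\varepsilon$ with $A_K=-\nabla K$, use the Killing spinor equation to replace $\nabla_K\varepsilon$ by the Clifford action of $a,b,\varphi$ on $\varepsilon$, and insert the explicit expression for $\nabla K$ read off from the proof of Proposition~\ref{prop:auxiliary}, namely that $A_K$, regarded as a 2-form, equals $2a\two+2b\ttwo+2\imath_\varphi\three$. The vanishing of $\mathscr L_K\varepsilon$ then reduces to a collection of Fierz identities among the spinor bilinears $\one,\two,\ttwo,\three$ combined with the Clifford annihilation relations from Proposition~\ref{prop:clifford}. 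I expect the bookkeeping in this Fierz computation, rather than any single conceptual step, to be where the difficulty lies, by analogy with the eleven-dimensional calculation carried out in \cite{Figueroa-O'Farrill:2015efc,Figueroa-O'Farrill:2015utu}.
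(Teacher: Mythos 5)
Your decomposition of the statement into the three closure claims plus the parity-graded super-Jacobi identities matches the paper's proof, and most of the steps are handled correctly, including the recognition that $\mathscr L_K\varphi=0$ and $\mathscr L_K\varepsilon=0$ are the two substantive points, and that Proposition~\ref{prop:auxiliary} together with identity~\eqref{eq:lienabla} is the right tool for the former.

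However, your argument for $\mathscr L_K\varphi=0$ contains a genuine gap. The claim that the radical of $\ttwo$ is trivial is false: $\ttwo=-\star\two$ is the Hodge dual of the Dirac $2$-form, which for a non-zero spinor is a non-zero \emph{decomposable} (null) $2$-form, hence has rank $2$ on the four-dimensional tangent space and a \emph{two-dimensional} radical; parts (j) and (m) of Proposition~\ref{prop:clifford} already say that $K$ lies in it. Your two specialisations of \eqref{eq:lienabla} correctly yield that $W:=\mathscr L_K\varphi$ is null and satisfies $(\imath_W\ttwo)\odot W^\flat=0$, so either $W=0$ or $\imath_W\ttwo=0$; since $g$ restricted to that two-plane is negative semi-definite with $K$ spanning the degenerate direction, combining with nullity gives only $W=fK$ for some function $f$, \emph{not} $W=0$. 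This is exactly the intermediate conclusion the paper reaches (by a different route, via $d(d\one)=0$), and an extra step is still required: feeding $W=fK$ back into \eqref{eq:lienabla} and using $\one\wedge\ttwo=0$ from Proposition~\ref{prop:clifford}(m) yields $3f\,\ttwo(X,Y)\one(Z)=0$, whence $f=0$. With this addition your route becomes a modest variant of the paper's, relying solely on \eqref{eq:lienabla}. As a minor point, you also overestimate the cost of the $(1,1,1)$ Jacobi identity: once $\nabla_K\varepsilon$ is substituted from the Killing spinor equation and $A_K$ from \eqref{eq:nabla1}, the terms collapse in a few lines using only the Clifford-annihilation relations of Proposition~\ref{prop:clifford}, with no substantial Fierz bookkeeping.
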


\begin{proof}
  The fact that $[\fk_{\bar 0},\fk_{\bar 0}]\subset\fk_{\bar 0}$
  follows from basic properties of Lie derivatives of vector
  fields. On the other hand for any $X\in\fk_{\bar 0}$ and
  $Z\in\mathfrak{X}(M)$ we have that
  \begin{equation*}
    [\eL_X, D_Z] = D_{[X,Z]}~,
  \end{equation*}
  since $D$ depends solely on the data $(g,a,b,\varphi)$ which is
  preserved by $X \in \fk_{\bar 0}$.
This shows that $\mathscr L_{X}\varepsilon$ is a Killing spinor or, in
other words, that $[\fk_{\bar 0},\fk_{\bar 1}]\subset\fk_{\bar 1}$.

We already know from Proposition~\ref{prop:auxiliary} that
$K=\kappa(\varepsilon,\varepsilon)$ is a Killing vector field which
satisfies $\mathscr L_{K}a=\mathscr L_{K}b=0$. To prove $K\in\fk_{\bar
  0}$ we still need to show $\mathscr L_{K}\varphi=0$.  From
Proposition~\ref{prop:auxiliary} we have
\begin{equation*}
  \begin{split}
    0&=-\frac{1}{4}d(d\one)=da\wedge\two +6ab\three+db\wedge\ttwo-6ab\three+d\imath_{\varphi}\three\\
    &=da\wedge\two+db\wedge\ttwo-\mathscr L_{\varphi}\star\one
  \end{split}
\end{equation*}
and hence, for any $\vartheta\in\Omega^1(M)$,
\begin{equation}
\label{eq:liestarII}
\begin{split}
\vartheta\wedge\star \mathscr L_{\varphi}\one&=\vartheta\wedge\mathscr L_{\varphi}\star\one-\diver(\varphi)\vartheta\wedge\star\one-(\mathscr L_{\varphi}g)(\vartheta,\one)\vol\\
  &=\vartheta\wedge da\wedge\two+\vartheta\wedge
  db\wedge\ttwo+\diver(\varphi)\vartheta\wedge\three -(\mathscr
  L_{\varphi}g)(\vartheta,\one)\vol\;.
\end{split}
\end{equation}
In the special case where $\imath_{K}\vartheta=0$ the first three
terms of the right-hand side of the above identity are degenerate $4$-forms and
hence zero. Then equation~\eqref{eq:liestarII} becomes
\begin{equation*}
  \begin{split}
    0 &= \vartheta\wedge\star \mathscr L_{\varphi}\one + (\mathscr
    L_{\varphi}g)(\vartheta,\one)\vol\\
    &= -g(\mathscr L_{\varphi}\vartheta,\one)\vol\\
    &= -(\mathscr L_{\varphi}\vartheta)(K)\vol\\
    &= \vartheta(\mathscr L_{\varphi}K)\vol\\
    &= -\vartheta(\mathscr L_{K}\varphi)\vol~,
  \end{split}
\end{equation*}
so that $\mathscr L_{K}\varphi=f K$, for some
$f\in\mathscr C^{\infty}(M)$. From this fact, equation
\eqref{eq:lienabla} and $\one\wedge\ttwo=0$ we finally get
\begin{align*}
0&=f(2\ttwo(X,Y)\one(Z)+\ttwo(X,Z)\one(Y)+\ttwo(Z,Y)\one(X))\\
&=3f \ttwo(X,Y)\one(Z)\;,
\end{align*}
for all $X,Y,Z\in\mathfrak{X}(M)$, hence $f=0$. This proves $\mathscr
L_{K}\varphi=0$ and $[\fk_{\bar 1},\fk_{\bar 1}]\subset\fk_{\bar 0}$.

We finally show that $[-,-]:\fk\otimes\fk\to\fk$ satisfies the axioms
of a Lie superalgebra. This is a direct consequence of the following
observations:
\begin{enumerate}[label=(\roman*)]
\item $\fk_{\bar 0}$ is a Lie algebra: this is just the Jacobi
  identity of the Lie bracket of vector fields;
\item $\fk_{\bar 0}$ acts on $\fk_{\bar 1}$, by property (ii) of the
  spinorial Lie derivative;
\item the Dirac current is a symmetric $\fk_{\bar 0}$-equivariant map,
  by property (v) of the spinorial Lie derivative;
\item for any $\varepsilon\in\fk_{\bar 1}$, with associated Dirac
  current $K=\kappa(\varepsilon,\varepsilon)$, we have from the
  definition of Killing spinor and \eqref{eq:nabla1} that
  \begin{equation*}
    \begin{split}
      \mathscr L_{K}\varepsilon&=\nabla_{K}\varepsilon+\sigma(A_K)\varepsilon\\
      &=-(\varphi\wedge  K)\vol\cdot\varepsilon + 2 g(\varphi,K) \vol
      \cdot \varepsilon + \imath_{\varphi} \three \cdot \varepsilon\\
      &=g(\varphi,K)\vol\cdot\varepsilon + \imath_{\varphi}\three\cdot\varepsilon\\
      &=-\varphi\cdot\three\cdot\varepsilon\\
      &=0~,
    \end{split}
  \end{equation*}
  where the last equality holds by Proposition~\ref{prop:clifford}.
  This is equivalent to the component of the Jacobi identity for $\fk$
  with three odd elements.
\end{enumerate}
The proof is thus completed.
\end{proof}

\subsection{The Killing superalgebra is a filtered deformation}
\label{sec:ksa-as-filtered}

We now show that the Killing superalgebra
$\fk = \fk_{\bar 0} \oplus \fk_{\bar 1}$ is a filtered deformation of
a $\ZZ$-graded subalgebra of the Poincaré superalgebra $\fp$. To this
aim, it is convenient to denote the triple $(a,b,\varphi)$
collectively by $\Phi$ and to abbreviate the Killing spinor equation
as $\nabla_Z\varepsilon=\beta_Z^{\Phi}\varepsilon$, where $\beta^\Phi$
is the $\End(S(M))$-valued one-form defined by
\begin{equation}
  \label{eq:end-valued-one-form}
  \beta_Z^\Phi \varepsilon = Z \cdot (a + b \vol) \cdot \varepsilon - (\varphi \wedge
  Z) \cdot \vol \cdot \varepsilon + 2 g(\varphi,Z) \vol \cdot \varepsilon\;,
\end{equation}
for all $Z \in \fX(M)$ and $\varepsilon\in\Gamma(S(M))$. The notation
is chosen to make contact with that of
Proposition~\ref{prop:killspin}. The reason for the superscript $\Phi$
is to distinguish $\beta^\Phi$ from the more general component $\beta$
of the filtered Lie brackets in \eqref{eq:KSAasFD} below. For a
similar reason we also introduce the $\fso(TM)$-valued symmetric
bilinear tensor $\gamma^{\Phi}$ on $S(M)$ given by
\begin{equation*}
\gamma^\Phi(\varepsilon,\varepsilon)(Z)=-2k(\beta_Z^\Phi\varepsilon,\varepsilon)\;,
\end{equation*}
for all $Z\in\fX(M)$ and $\varepsilon\in\Gamma(S(M))$.

Let $\eE = \eE_{\bar 0} \oplus \eE_{\bar 1}$ be the super vector
bundle with
\begin{equation*}
  \eE_{\bar 0} = TM \oplus \fso(TM) \qquad\text{and}\qquad \eE_{\bar
    1} = S(M)
\end{equation*}
and (even) connection $\eD$ defined on $\eE_{\bar 0}$ by
\cite{KostantHol,Geroch}
\begin{equation}
  \label{eq:superconnection}
  \eD_Z
  \begin{pmatrix}
    \xi \\ A
  \end{pmatrix} =
  \begin{pmatrix}
    \nabla_Z\xi + A(Z)\\ \nabla_ZA - R(Z,\xi)
  \end{pmatrix}  
\end{equation}
and on $\eE_{\bar 1}$ by the connection $D$ in
\eqref{eq:gravitinoconnection}.  A section $(\xi,A)$ of $\eE_{\bar 0}$
is parallel if and only if $\xi$ is a Killing vector and $A = -
\nabla\xi$, whereas a section $\varepsilon$ of $\eE_{\bar 1}$ is
parallel if and only if it is a Killing spinor.
Therefore $\fk$ is a subspace of the parallel sections of $\eE$:
$\fk_{\bar 1}$ are precisely the parallel sections of $\eE_{\bar 1}$,
whereas $\fk_{\bar   0}$ are the parallel sections of $\eE_{\bar 0}$
which in addition leave invariant the scalars $a$ and $b$ and the
vector field $\varphi$.

Parallel sections $\zeta$ of a vector bundle with connection are uniquely
determined by their value $\left.\zeta\right|_o$ at any given point $o \in M$.  (We tacitly
assume that $M$ is connected.)  Let us introduce the following
notation
\begin{equation*}
  (V,\eta) = (T_oM, \left.g\right|_o) \qquad \fso(V) = \fso(T_oM) \qquad S =
  S_o(M)~.
\end{equation*}
Therefore $\fk$ determines a subspace of $\eE_o = V \oplus \fso(V) \oplus
S$, which is the underlying vector space of the Poincaré superalgebra
$\fp$.  We recall that $\fp$ is a $\ZZ$-graded Lie superalgebra with Lie brackets given in equation~\eqref{eq:poincare} and that the $\ZZ$ and $\ZZ_2$ gradings are
compatible.

Let $(\xi,A_\xi)$, with $A_\xi = - \nabla \xi$, and $(\zeta, A_\zeta)$
belong to $\fk_{\bar 0}$.  Their Lie bracket is given by
\begin{equation}
  \label{eq:fk0LieBrackets}
  [(\xi,A_\xi), (\zeta, A_\zeta)] = (A_\xi \zeta - A_\zeta \xi,
  [A_\xi,A_\zeta] + R(\xi,\zeta))~,
\end{equation}
where the bracket on the right-hand side is the commutator in
$\fso(TM)$.  We see that the Riemann curvature measures the failure of
$\fk_{\bar 0}$ to be a Lie subalgebra of the Poincaré algebra
$\fp_{\bar 0}$.  If now $\varepsilon \in \fk_{\bar 1}$, then the Lie
bracket with $(\xi,A_\xi)$ is given by
\begin{equation}\label{eq:k0k1LieBracket}
  [(\xi,A_\xi), \varepsilon] = \nabla_\xi \varepsilon + \sigma(A_\xi) \varepsilon
  = \beta^\Phi_\xi \varepsilon + \sigma(A_\xi) \varepsilon~,
\end{equation}
where $\sigma : \fso(TM) \to \End (S(M))$ is the spinor representation.
Finally, the Dirac current of a Killing spinor $\varepsilon \in
\fk_{\bar 1}$ is given by
\begin{equation*}
  [\varepsilon,\varepsilon] = (\kappa(\varepsilon,\varepsilon), A_{\kappa(\varepsilon,\varepsilon)})~,
\end{equation*}
where
\begin{equation*}
  A_{\kappa(\varepsilon,\varepsilon)}(Z) = - \nabla_Z
  \kappa(\varepsilon,\varepsilon) = -2 \kappa(\nabla_Z
  \varepsilon,\varepsilon) = - 2 \kappa(\beta^\Phi_Z \varepsilon,
  \varepsilon)~.
\end{equation*}

We now show that $\fk$ defines a graded subspace of $\fp = \eE_o$.
Define $\ev_o^{\bar 0}: \fk_{\bar 0} \to V$ to be evaluation at $o$
and projection onto $V = T_oM$.  More precisely,
\begin{equation*}
  \ev_o^{\bar 0}(\xi,A_\xi) = \left.\xi\right|_o~.
\end{equation*}
Similarly, let $\ev_o^{\bar 1}: \fk_{\bar 1} \to S$ be the evaluation
at $o$. We set $S' = \im \ev_o^{\bar 1}$ and
$V' = \im \ev_o^{\bar 0}$.  

Let $\fh = \ker \ev_o^{\bar 0}$.  These are the Killing vectors in
$\fk_{\bar 0}$ which take the form $(0,A) \in V \oplus \fso(V)$ at $o
\in M$.  Therefore $\fh$ defines a subspace of $\fso(V)$, but from
equation~\eqref{eq:fk0LieBrackets}, we see that it is also a Lie
subalgebra:
\begin{equation*}
  [(0,A),(0,B)] = (0,[A,B])~.
\end{equation*}
In addition, the conditions $\eL_\xi a = \eL_\xi b = \eL_\xi \varphi =
0$ that are satisfied by the Killing vectors $\xi \in \fk_{\bar 0}$,
when evaluated at $o \in M$, imply that if $(0,A) \in \fh$ then
\begin{equation*}
  A \in \fso(V) \cap \stab(\left.a\right|_o) \cap \stab(\left.b\right|_o) \cap \stab(\left.\varphi\right|_o)~,
\end{equation*}
and the Lie
bracket \eqref{eq:k0k1LieBracket} at $o\in M$ implies that
\begin{equation*}
  [(0,A),\varepsilon] = \sigma(A) \varepsilon~.
\end{equation*}
In particular, $\fh$ acts on $S$ by restricting the action of
$\fso(V)$, and this action preserves $S'$.

The Lie subalgebra $\fh < \fk_{\bar 0}$ defines a short exact sequence
\begin{equation}
  \label{eq:ses}
  \begin{CD}
    0 @>>> \fh @>>> \fk_{\bar 0} @>{\ev_o^{\bar 0}}>> V' @>>> 0~,
  \end{CD}
\end{equation}
which yields a vector space isomorphism $\fk_{\bar 0} \cong \fh
\oplus V'$, and therefore \emph{as graded vector spaces}, a (non-canonical)
isomorphism
\begin{equation*}
  \fk \cong \fh \oplus S' \oplus V' \subset \fso(V) \oplus S \oplus
  V \cong \fp~.
\end{equation*}

We now wish to express the Lie superalgebra structure on $\fk$ in terms of a Lie
bracket on the graded vector space $\fh \oplus S' \oplus V'$.  This
requires a choice of splitting of the short exact sequence
\eqref{eq:ses}.  Geometrically, this amounts to choosing for every
$v \in V'$ a Killing vector field $\xi \in \fk_{\bar 0}$ with
$\left.\xi\right|_o = v$.  Such a choice gives an embedding of $V'$
into $V\oplus \fso(V)$ as the graph of a linear map
$\Sigma: V' \to \fso(V)$; that is, by sending $v \in V'$ to
$(v, \Sigma_v)$, where $\Sigma_v \in \fso(V)$ is the image of $v$
under $\Sigma$.  Any other choice of splitting would result in
$(v,\Sigma'_v)$ for some other linear map $\Sigma': V' \to \fso(V)$,
but where the difference $\Sigma - \Sigma' : V' \to \fh$.

The Lie bracket of $(0,A) \in \fh$ and $(v,\Sigma_v) \in \fk_{\bar 0}$
is given by
\begin{equation*}
  [(0,A),(v,\Sigma_v)] = (Av, [A,\Sigma_v]) = (Av, \Sigma_{Av}) +
  (0,[A,\Sigma_v]-\Sigma_{Av})~.
\end{equation*}
Similarly, if $\varepsilon \in \fk_{\bar 1}$, then
\begin{equation*}
  [(v,\Sigma_v), \varepsilon] = \beta^\Phi_v \varepsilon + \Sigma_v \varepsilon~,
\end{equation*}
whereas
\begin{equation*}
  [\varepsilon,\varepsilon] =
  (\kappa(\varepsilon,\varepsilon),A_{\kappa(\varepsilon,\varepsilon)})
  = (\kappa(\varepsilon,\varepsilon),
  \Sigma_{\kappa(\varepsilon,\varepsilon)}) + (0,
  A_{\kappa(\varepsilon,\varepsilon)} - \Sigma_{\kappa(\varepsilon,\varepsilon)})~.
\end{equation*}
Finally, if $v,w \in V'$,
\begin{equation*}
  \begin{split}
    [(v,\Sigma_v), (w,\Sigma_w)] &= (\Sigma_v w - \Sigma_w v,
    [\Sigma_v,\Sigma_w] + R(v,w))\\
    &= (\Sigma_v w - \Sigma_w v, \Sigma_{\Sigma_v w - \Sigma_w v}) + (0,
    [\Sigma_v,\Sigma_w] + R(v,w) - \Sigma_{\Sigma_v w - \Sigma_w v})
  \end{split}
\end{equation*}

This allows us to read off the Lie bracket on $\fh \oplus S' \oplus
V'$.  We will let $v,w \in V'$, $s \in S'$ and $A,B \in \fh$.  Then we have
\begin{equation}
  \label{eq:KSAasFD}
  \begin{split}
    [A,B] &= AB - BA\\
    [A,s] &= \sigma(A) s\\
    [A,v] &= Av + \underbrace{[A,\Sigma_v]-\Sigma_{Av}}_{\lambda(A,v)}\\
    [s,s] &= \kappa(s,s) + \underbrace{\gamma^\Phi(s,s) -
      \Sigma_{\kappa(s,s)}}_{\gamma(s,s)}\\
    [v,s] &= \underbrace{\beta^\Phi_v s + \Sigma_v s}_{\beta(v,s)}\\
    [v,w] &= \underbrace{\Sigma_v w - \Sigma_w v}_{\alpha(v,w)} +
    \underbrace{[\Sigma_v,\Sigma_w] + R(v,w) -
      \Sigma_{\alpha(v,w)}}_{\delta(v,w)}~,
  \end{split}
\end{equation}
which define maps $\lambda: \fh \otimes V' \to \fh$, $\gamma: \odot^2S' \to
\fh$, $\beta: V' \otimes S' \to S'$, $\alpha: \wedge^2 V' \to V'$ and
$\delta: \wedge^2 V' \to \fh$.

Notice that all the under-braced terms have positive
filtration degree: $\lambda$, $\alpha$, $\beta$ and $\gamma$ have degree
$2$, whereas $\delta$ has degree $4$.  If we set those maps to zero,
which is equivalent to passing to the associated graded superalgebra,
then we are left with the $\ZZ$-graded subalgebra $\fa < \fp$ given by
the Lie brackets
\begin{equation}
  \label{eq:undeformed}
    \begin{aligned}[m]
      [A,B] &= AB - BA\\
      [A,s] &= \sigma(A) s\\
      [A,v] &= Av
    \end{aligned}
    \qquad\qquad
    \begin{aligned}[m]
      [s,s] &= \kappa(s,s) \\
      [v,s] &= 0\\
      [v,w] &= 0~.
    \end{aligned}
\end{equation}
Moreover, it follows from
Lemma~\ref{lem:homogeneity} in the Appendix that if
$\dim S' > \frac12 \dim S = 2$, then $V' = V$.

Therefore we have proved the following

\begin{proposition}
  \label{prop:KSAisFD}
  The Killing superalgebra $\fk$ in equation~\eqref{eq:KSAasFD}  is a
  filtered deformation of the $\ZZ$-graded subalgebra $\fa < \fp$
  defined on $\fh \oplus S' \oplus V'$ by the Lie brackets in
  \eqref{eq:undeformed}. Moreover if $\dim S'>\frac12\dim S=2$ then the Lie algebra $\fk_{\bar 0}$ of infinitesimal automorphisms
  of $(M,g,a,b,\varphi)$ acts locally transitively around any point $o\in M$.
\end{proposition}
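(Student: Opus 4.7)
The plan is to read off the filtered deformation structure directly from \eqref{eq:KSAasFD} after fixing an intrinsic filtration on $\fk$, and to derive local transitivity from Lemma~\ref{lem:homogeneity} in the Appendix. First I would declare the decreasing filtration
\begin{equation*}
  \fk\supset \fk^{-1}:=\fh\oplus\fk_{\bar 1}\supset \fk^0:=\fh\supset 0\;,
\end{equation*}
with $\fk^i=\fk$ for $i\leq-2$ and $\fk^i=0$ for $i\geq 1$, which is intrinsic because $\fh=\ker(\ev_o^{\bar 0}|_{\fk_{\bar 0}})$ is defined without reference to any splitting. I would then verify $[\fk^i,\fk^j]\subset \fk^{i+j}$ by direct inspection of \eqref{eq:KSAasFD}: the under-braced correction terms $\lambda$, $\gamma$ and $\delta$ take values in $\fh=\fk^0$, while the leading components of $\alpha$ and $\beta$ lie in $V'$ and $S'$ respectively, so each bracket lands in the prescribed filtration piece. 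Passing to the associated graded $\fk_\bullet$ kills every under-braced term and reduces the formulas in \eqref{eq:KSAasFD} to those in \eqref{eq:undeformed}, which identifies $\fk_\bullet$ with the $\ZZ$-graded vector space $\fh\oplus S'\oplus V'$ equipped with the Lie brackets inherited from $\fp$. That this graded vector space is a subalgebra $\fa<\fp$ has already been established in the discussion preceding the statement, since $\fh<\fso(V)$ is a Lie subalgebra acting on $S'$ and on $V'$, and $\kappa(S',S')\subset V'$ by construction. A change of splitting $\Sigma\colon V'\to\fso(V)$ modifies each correction by a map into $\fh=\fk^0$, so the filtered deformation class does not depend on the auxiliary data entering \eqref{eq:KSAasFD}.

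For the transitivity claim, I would use that for any $\varepsilon\in\fk_{\bar 1}$ with $\varepsilon(o)=s\in S'$ the identity $\ev_o^{\bar 0}[\varepsilon,\varepsilon]=\kappa(s,s)$ gives $V'\supset \kappa(S',S')$. Under the hypothesis $\dim S'>\tfrac12\dim S=2$, Lemma~\ref{lem:homogeneity} in the Appendix forces $\kappa(S',S')=V$, so that $V'=V=T_oM$. It follows that $\ev_o^{\bar 0}\colon\fk_{\bar 0}\to T_oM$ is surjective, hence the orbit of the local action of $\fk_{\bar 0}$ through $o$ is open and the action is locally transitive on a neighbourhood of $o$.

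The main obstacle is conceptual rather than technical: one must check that the filtration is genuinely intrinsic and compatible with the bracket even though the presentation \eqref{eq:KSAasFD} depends on the choice of splitting $\Sigma$. Once the under-braced corrections are recognised as lying in $\fh=\fk^0$ or as carrying the correct leading filtration degree, the filtered deformation statement reduces to bookkeeping, and the transitivity is a two-line consequence of the cited homogeneity lemma.
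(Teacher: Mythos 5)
Your proposal is correct and follows essentially the same route as the paper: the paper's proof is the discussion preceding the proposition, where the filtration degree of each under-braced correction in \eqref{eq:KSAasFD} is read off (degree $2$ for $\lambda,\alpha,\beta,\gamma$; degree $4$ for $\delta$), passing to the associated graded kills them to yield \eqref{eq:undeformed}, and Lemma~\ref{lem:homogeneity} supplies $V'=V$ hence local transitivity. Your only added value is making the filtration $\fk\supset\fh\oplus\fk_{\bar 1}\supset\fh\supset 0$ explicit and noting its independence of the choice of $\Sigma$, which is a welcome piece of bookkeeping but not a genuinely different argument.
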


\section{Zero curvature equations}
\label{sec:zero-curvature-eqns}

In this section we calculate the curvature of the connection $D$ on
the spinor bundle and solve the zero curvature equations for the
metric $g$ and the fields $a,b,\varphi$.  We do this in two steps.  In
the first step we arrive at a first set of equations obtained by
setting the Clifford trace of the curvature to zero.  We perform this
first step for two reasons.  The first reason is by analogy with
eleven-dimensional supergravity, where the vanishing of 
the Clifford trace of the curvature is equivalent to the bosonic field
equations (and the Bianchi identity). The second reason is that this first set of
equations is easier to solve and already imposes strong constraints on
the geometric data which simplify the solution of the zero curvature
equations.  The second step is the solution of the zero curvature
equations, which will yield the maximally supersymmetric backgrounds.
The Killing superalgebras of these maximally supersymmetric
backgrounds should (and do) agree with the maximally supersymmetric
filtered deformations which we classify in Section \ref{sec:maxim-fil-def}.

With regard to the first reason for performing the first step, we must
stress that any relation in four dimensions between the equation
obtained by setting to zero the Clifford trace of the curvature and
the bosonic field equations of minimal off-shell supergravity remains
to be seen. If we were to identify (up to constants of
proportionality) the fields $a$, $b$ and $\varphi$ in the connection
$D$ in \eqref{eq:gravitinoconnection} with the bosonic fields in the
minimal off-shell gravity supermultiplet in four dimensions (as
described, say, in \cite[§16.2.3]{MR2920151}), and identify (up to
an overall constant of proportionality) $D \varepsilon$ with the
supersymmetry variation of the gravitino $\Psi$ in the gravity
supermultiplet, evaluated at $\Psi =0$, one finds that the purely
bosonic terms in the off-shell supergravity Lagrangian density must be
proportional to $R + 24( a^2 + b^2 + | \varphi |^2 )$, where $R$ is
the scalar curvature of $g$. The Einstein equations for this
supergravity Lagrangian are
$R_{\mu\nu} = -12 (a^2 + b^2) g_{\mu\nu} -24 \varphi_\mu \varphi_\nu$
which, after integrating out the auxiliary fields $a$, $b$ and
$\varphi$, imply that that $g$ must be Ricci-flat. As we will see, the
equations obtained by setting to zero the Clifford trace of the
curvature are similar but different.

\subsection{The curvature of the superconnection}
\label{sec:curv-superc}

Let us write the Killing spinor condition for $\varepsilon \in
\Gamma(S(M))$ as $\nabla_Z \varepsilon = \beta^\Phi_Z\varepsilon$ for all
vector fields $Z$, and where the $\End (S(M))$-valued one-form $\beta^\Phi$
was defined in equation~\eqref{eq:end-valued-one-form}.  In other
words, $D_Z = \nabla_Z - \beta^\Phi_Z$.  The curvature $R^D$ of $D$ is
defined by
\begin{equation*}
  \begin{split}
    R^D_{X,Y} &= D_{[X,Y]} - [D_X, D_Y]\\
    &= R_{X,Y} + (\nabla_X \beta^\Phi)_Y - (\nabla_Y \beta^\Phi)_X -
    [\beta^\Phi_X,\beta^\Phi_Y]~,
  \end{split}
\end{equation*}
where $R$ is the curvature $2$-form of $\nabla$ on the spinor bundle.
An explicit calculation shows that
\begin{equation}
  \label{eq:curvatureD}
  \begin{split}
    R^D_{X,Y} &= R_{X,Y} + X(a) Y + X(b) Y \cdot \vol - Y(a) X - Y(b) X
    \cdot \vol - (\nabla_X\varphi \wedge Y) \cdot \vol \\
    & \quad  {} + (\nabla_Y\varphi \wedge X) \cdot \vol + 2 g(\nabla_X
    \varphi, Y) \vol - 2 g(\nabla_Y \varphi, X) \vol
    + 2 (a^2 + b^2 - |\varphi|^2_g) X \wedge Y \\
    & \quad {}  + 4 a (\varphi \wedge X \wedge Y)
    \cdot \vol + 4 a g(\varphi,Y) X \cdot \vol - 4 a g(\varphi,X) Y
    \cdot \vol - 4 b \varphi \wedge X \wedge Y\\
    & \quad {} - 4 b g(\varphi, Y) X + 4
    b g(\varphi, X) Y + 2 g(\varphi, X) \varphi \wedge Y - 2 g(\varphi,
    Y) \varphi \wedge X~.
  \end{split}
\end{equation}

From this expression we will be able to read off a set of equations by
demanding that the Clifford trace of the curvature
$\Ric^D : TM \to \End (S(M))$, defined by
\begin{equation}
  \label{eq:DRic}
   \Ric^D(X) = \sum_\mu e^\mu \cdot R^D_{X,e_\mu}~,
\end{equation}
vanishes.  Here $e^\mu$ and $e_\mu$ are $g$-dual local frames of $TM$.  Another explicit
calculation shows that
\begin{equation}
  \label{eq:RicD}
  \begin{split}
    \Ric^D(X) &= \Ric(X) - 3 X(a) - 3 X(b) \vol - da^\sharp \wedge X - 
    (db^\sharp \wedge X) \cdot \vol + 6 ( a^2 + b^2 ) X \\
    & \quad {} - 4 |\varphi|^2_g X - 4 a (\varphi \wedge X)\cdot \vol
    + 4 b \varphi \wedge X + 12 a g(\varphi, X) \vol - 12 b g(\varphi,X) \\
    & \quad {} + 4 g(\varphi,X) \varphi + \left(\nabla_\mu \varphi_\nu X_\rho \Gamma^{\mu\nu\rho} -
    \nabla_\mu \varphi^\mu X - 2 g(\slashed\nabla\varphi, X)\right) \cdot \vol~,
  \end{split}
\end{equation}
where $\Ric$ stands for the Ricci operator and we have introduced the
shorthand $g(\slashed\nabla\varphi, X) =  \Gamma^\rho \nabla_\rho
\varphi_\mu X^\mu$.

\subsection{The vanishing of the Clifford trace of the curvature}
\label{sec:vanishing-Clifford-trace}

We now describe the equations arising by demanding that the Clifford
trace of the curvature of the spinor connection $D$ vanishes; in other
words, that for all vector fields $X$, $\Ric^D(X)=0$.  This is a
system of equations with values in $\End (S(M))$, which is isomorphic as
a vector bundle to $\bigoplus_{p=0}^4 \wedge^p TM$.  This means that
the components of these equations in each summand have to be satisfied
separately.  The $p=1$ component relates the Ricci tensor to the data
$(a,b,\varphi)$, whereas the $p\neq 1$ components constrain
$(a,b,\varphi)$.  We start with these first.

\subsubsection{The \texorpdfstring{$p=0$}{p=0} component}
\label{sec:p=0-component}

The $p=0$ component of the equation $\Ric^D(X) = 0$ is given by
\begin{equation*}
  -3 X(a) - 12 b g(\varphi, X) = 0~,
\end{equation*}
which, after abstracting $X$, is equivalent to
\begin{equation}
  \label{eq:0component}
  da^\sharp = -4 b \varphi~.
\end{equation}

\subsubsection{The \texorpdfstring{$p=4$}{p=4} component}
\label{sec:p=4-component}

The $p=4$ component of $\Ric^D(X)=0$ is given by
\begin{equation*}
  -3 X(b) \vol + 12 a g(\varphi,X) \vol = 0
\end{equation*}
which is equivalent to
\begin{equation}
  \label{eq:4component}
  db^\sharp = 4 a \varphi~.
\end{equation}

\subsubsection{The \texorpdfstring{$p=2$}{p=2} component}
\label{sec:p=2-component}

The $p=2$ component of $\Ric^D(X)=0$ is given by
\begin{equation*}
  -da^\sharp \wedge X - (db^\sharp \wedge X) \cdot \vol - 4 a (\varphi
  \wedge X) \cdot \vol + 4 b \varphi \wedge X = 0~,
\end{equation*}
which using equations~\eqref{eq:0component} and \eqref{eq:4component}
becomes
\begin{equation*}
  (\varphi \wedge X) \cdot (b + a \vol) = 0~.
\end{equation*}
Multiplying by $b-a\vol$ and since this has to be true for all $X$, we
arrive at
\begin{equation}
  \label{eq:2component}
  (a^2 + b^2) \varphi = 0~.
\end{equation}
It follows from this equation that there are three branches of
solutions:
\begin{enumerate}[label=(\Roman*)]
\item $a=b=\varphi =0$,
\item $a^2+b^2>0$ and $\varphi = 0$, in which case $a$ and $b$ are
  constant by equations~\eqref{eq:0component} and \eqref{eq:4component}, and
\item $a=b=0$ and $\varphi\neq 0$.
\end{enumerate}

\subsubsection{The \texorpdfstring{$p=3$}{p=3} component}
\label{sec:p=3-component}

The $p=3$ component of $\Ric^D(X) = 0$ is given by 
\begin{equation*}
  -(\nabla_\mu \varphi^\mu X + 2 g(\slashed\nabla \varphi, X)) \cdot \vol
  = 0~,
\end{equation*}
which, abstracting $X$, can be written as
\begin{equation}
  \label{eq:pre3component}
  \nabla_\mu \varphi^\mu \Gamma^\nu + 2 \nabla_\mu \varphi^\nu \Gamma^\mu = 0~.
\end{equation}
Multiplying with $\Gamma_\nu$ on both left and right we arrive at the pair
of equations:
\begin{equation*}
  \begin{split}
    -4 \nabla_\mu \varphi^\mu + 2 \nabla_\mu \varphi_\nu \Gamma^\mu \Gamma^\nu &= 0\\
    -4 \nabla_\mu \varphi^\mu + 2 \nabla_\mu \varphi_\nu \Gamma^\nu \Gamma^\mu &= 0~.
  \end{split}
\end{equation*}
Adding the two equations, and using the Clifford relations,
\begin{equation*}
  -8 \nabla_\mu \varphi^\mu - 4 \nabla_\mu \varphi^\mu = 0 \implies \nabla_\mu
  \varphi^\mu = 0~.
\end{equation*}
Plugging this back into equation~\eqref{eq:pre3component}, we arrive at
\begin{equation*}
  \nabla_\mu \varphi^\nu \Gamma^\mu = 0~,
\end{equation*}
which says
that $\varphi$ is parallel:
\begin{equation}
  \label{eq:3component}
  \nabla \varphi = 0~.
\end{equation}

\subsubsection{The \texorpdfstring{$p=1$}{p=1} component}
\label{sec:p=1-component}

Finally we arrive at the $p=1$ component of $\Ric^D(X) = 0$:
\begin{equation*}
  \Ric(X) + 6 ( a^2 + b^2 ) X - 4 |\varphi|^2_g X + 4 g(\varphi,X)
  \varphi + \nabla_\mu \varphi_\nu X_\rho \Gamma^{\mu\nu\rho} \cdot \vol = 0~.
\end{equation*}
The last term vanishes because $\varphi$ is parallel, so that we are
left with
\begin{equation*}
  \Ric(X) + 6 ( a^2 + b^2 ) X - 4 |\varphi|^2_g X + 4 g(\varphi,X)
  \varphi = 0~.
\end{equation*}
We can abstract $X$ and leave it as an equation on the Ricci operator
itself:
\begin{equation}
  \label{eq:1component}
  \Ric = - 12(a^2 + b^2) \id + 8 |\varphi|^2_g \id - 8 \varphi \otimes
  \varphi^\flat~,
\end{equation}
which, in terms of the symmetric Ricci tensor, becomes
\begin{equation}
  \label{eq:Ricci}
  R_{\mu\nu} = - 12 (a^2+b^2) g_{\mu\nu} + 8 |\varphi|^2_g g_{\mu\nu} - 8 \varphi_\mu
  \varphi_\nu~.
\end{equation}

\subsection{The solutions}
\label{sec:solutions}

Let us analyse the type of solutions to these equations.  We have
seen that there are three branches of solutions stemming from the
$p=2$ component equation~\eqref{eq:2component}.

\begin{enumerate}[label=(\Roman*)]
\item $a=b=\varphi =0$.  In this case, the $p=1$ component equation
  simply says that $g$ is Ricci-flat.  In this background, Killing
  spinors are parallel and therefore the supersymmetric backgrounds
  are the Ricci-flat manifolds whose holonomy is contained in the
  isotropy of a spinor.   Since the Dirac current of a parallel spinor
  is null and parallel, these metrics are Ricci-flat Brinkmann
  metrics.  See, e.g., \cite[§3.2.3]{JMWaves} for a discussion of
  these geometries.

\item $a^2+b^2\neq 0$ and $\varphi = 0$.  Putting $\varphi = 0$, we
  see from equations~\eqref{eq:0component} and \eqref{eq:4component}
  that $da = db = 0$, so they are constant  and the Ricci tensor is
  given by
  \begin{equation*}
    R_{\mu\nu} = - 12 (a^2+b^2) g_{\mu\nu}~,
  \end{equation*}
  so that $g$ is Einstein with negative cosmological constant. The
  Killing spinors are (up to an R-symmetry  which allows us to set
  $b=0$, say) geometric Killing spinors.  Such geometries are reviewed
  in \cite[§§6-7]{MR1822354} and discussed in \cite{MR2008949}.

\item $a=b=0$ and $\varphi\neq 0$.  Then $\varphi$ is a parallel
  vector field and the Ricci tensor, given by
  \begin{equation}\label{eq:RicciFluid}
    R_{\mu\nu} = -8 \left( \varphi_\mu \varphi_\nu - |\varphi|^2_g
      g_{\mu\nu} \right)~,
  \end{equation}
  is also parallel.  This is a kind of fluid solution.  Ricci-parallel
  geometries have been studied in \cite{MR1856414}.  The determining
  factor is the algebraic type of the Ricci endomorphism.  In this
  case, this depends on the causal type of $\varphi$, which is
  constant because $\varphi$ is parallel.  If $\varphi$ is timelike or
  spacelike, so that (in our mostly minus conventions) $|\varphi|_g^2$
  is positive or negative, respectively, then the Ricci endomorphism
  is diagonalisable and the geometry decomposes (up to coverings) into a
  product $M=\mathbb R\times N$ of a line and a three-dimensional
  Einstein space $N$, hence a space form. Moreover, upon identifying
  the spin bundle of $M$ with (an appropriate number of copies of) the
  spin bundle of $N$, it is not difficult to see that Killing spinors
  in these backgrounds correspond to geometric Killing spinors on $N$
  (up to an R-symmetry).  If $\varphi$ is null, then the Ricci
  endomorphism is two-step nilpotent and the geometry is
  Ricci-null. The subbundle of $TM$ of orthogonal vectors to $\varphi$
  is also in this case integrable in the sense of Frobenius but the
  above simple interpretation of Killing spinors is missing since the
  associated integrable submanifolds $N$ have a degenerate induced
  metric.
  \end{enumerate}

\subsection{Maximally supersymmetric backgrounds}
\label{sec:maxim-supersymm-back}

Maximally supersymmetric backgrounds are those for which the spinor
connection $D$ is flat. The zero curvature condition $R^D_{X,Y}=0$ for
all vector fields $X,Y$ becomes a system of equations with values in
$\End (S(M))$ and therefore, just as for the vanishing of the Clifford
trace of the curvature, the different components of the curvature must
vanish separately. We can reuse our calculations above, since if $D$
is flat, the Clifford trace of the curvature certainly vanishes. This
means that we can consider the three branches described above. We will
meet the geometries we are about to discuss again in the next section,
where we classify the maximally supersymmetric filtered
subdeformations of the Poincaré superalgebra.

\subsubsection{Maximally supersymmetric backgrounds with \texorpdfstring{$a=b=\varphi  = 0$}{a=b=phi=0}}
\label{sec:maxim-supersymm-back-1}

If $a=b=\varphi = 0$, the connection $D$ agrees with the Levi--Civita
spin connection and hence $D$-flatness means flatness and every such
background is locally isometric to Minkowski spacetime.

\subsubsection{Maximally supersymmetric backgrounds with \texorpdfstring{$\varphi = 0$}{phi=0} and \texorpdfstring{$a^2+b^2>0$}{a2 + b2 > 0}}
\label{sec:maxim-supersymm-back-2}

If $\varphi = 0$, then $a,b$ are constant and not both zero and hence
the $D$-flatness condition is
\begin{equation*}
  R_{X,Y} = - 2 (a^2 + b^2) X \wedge Y~,
\end{equation*}
as an equation in $\End(S(M))$.   This is equivalent to
\begin{equation*}
  R_{\mu\nu\rho\sigma} = 4 (a^2 + b^2) (g_{\mu\rho}g_{\nu\sigma} -
  g_{\mu\sigma} g_{\nu\rho})~,
\end{equation*}
which says that $g$ is locally isometric to $\AdS_4$.

\subsubsection{Maximally supersymmetric backgrounds with \texorpdfstring{$a=b=0$}{a=b=0} and
  \texorpdfstring{$\varphi\neq 0$}{phi not = 0}}
\label{sec:maxim-supersymm-back-3}

If $a=b=0$, and using that $\varphi$ is parallel, the $D$-flatness
condition is
\begin{equation*}
    R_{X,Y} = 2 |\varphi|^2_g X \wedge Y - 2 g(\varphi, X) \varphi
    \wedge Y + 2 g(\varphi, Y) \varphi \wedge X~,
\end{equation*}
again as an equation in $\End(S(M))$.  The corresponding Riemann
tensor is given by
\begin{equation}
  \label{eq:riemannLieGroup}
  R_{\mu\nu\rho\sigma} = - 4 |\varphi|^2_g (g_{\mu\rho} g_{\nu\sigma} - g_{\mu\sigma} g_{\nu\rho}) - 4
  \varphi_\nu\varphi_\rho g_{\mu\sigma} + 4 \varphi_\nu \varphi_\sigma g_{\mu\rho} + 4
  \varphi_\mu \varphi_\rho g_{\nu\sigma} - 4 \varphi_\mu \varphi_\sigma g_{\nu\rho}~.
\end{equation}
Since $\varphi$ and $g$ are parallel, so is the Riemann tensor and
hence this corresponds to a locally symmetric space.  Furthermore, it
is conformally flat.  Indeed, in four dimensions, the Weyl tensor is
given in terms of the Riemann tensor, the Ricci tensor $R_{\mu\nu} =
g^{\rho\sigma} R_{\mu\rho\sigma\nu}$ and the Ricci scalar $R =
g^{\mu\nu}R_{\mu\nu}$ by
\begin{equation*}
    W_{\mu\nu\rho\sigma} = R_{\mu\nu\rho\sigma} + \tfrac12 \left( g_{\mu\rho} R_{\nu\sigma} - g_{\mu\sigma}
      R_{\nu\rho} - g_{\nu\rho} R_{\mu\sigma} + g_{\nu\sigma} R_{\mu\rho} \right) - \tfrac16 R
    \left(g_{\mu\rho} g_{\nu\sigma} - g_{\mu\sigma} g_{\nu\rho}\right)~.
\end{equation*}
Inserting the above expression for $R_{\mu\nu\rho\sigma}$ into the Weyl tensor we
see that it vanishes, so that the geometry is conformally flat.  The
corresponding Ricci tensor is given by equation~\eqref{eq:RicciFluid}
and the Ricci scalar is $R = 24 |\varphi|^2_g$.

This geometry corresponds to a Lorentzian Lie group with a
bi-invariant metric. Indeed, the equation \eqref{eq:riemannLieGroup}
satisfied by the Riemann tensor is equivalent to the vanishing of the
curvature of a metric connection with parallel totally skewsymmetric
torsion proportional to the Hodge dual of $\varphi$. As shown, for
instance, in \cite{CFOSchiral,JMFPara}, the existence of a flat metric
connection with closed skewsymmetric torsion is equivalent to the
manifold being locally isometric to a Lie group with a bi-invariant
metric.

Since $\varphi$ is parallel, its $g$-norm is constant and in a
Lorentzian manifold this can be of three types:
\begin{enumerate}
\item $|\varphi|^2_g > 0$.  This is timelike in our conventions.  The
  background is locally isometric to $\RR \times \Sph^3$, where we
  identify the round $\Sph^3$ with the Lie group $\SU(2)$ with its
  bi-invariant metric.

\item $|\varphi|^2_g < 0$.  This is spacelike and hence the background
  is locally isometric to $\AdS_3 \times \RR$, where we identify
  $\AdS_3$ with $\SL(2,\RR)$ with its bi-invariant metric.

\item $|\varphi|^2_g = 0$.  This is the null case and hence the
  background is locally isometric to the Nappi--Witten group
  \cite{NW} with its bi-invariant metric.
\end{enumerate}

\section{Maximally supersymmetric filtered deformations}
\label{sec:maxim-fil-def}

We now resume the analysis of filtered subdeformations of the Poincaré
superalgebra by classifying the filtered deformations with maximal
odd dimension. We will show that they correspond precisely to the
Killing superalgebras of the maximally supersymmetric backgrounds
classified in Section~\ref{sec:maxim-supersymm-back}.

More precisely, let $\fa=\fa_{-2}\oplus\fa_{-1}\oplus\fa_{0}$ be a
$\mathbb Z$-graded subalgebra of the Poincaré superalgebra
$\fp=V\oplus S\oplus\fso(V)$ with $\fa_{-1} = S$. By
Lemma~\ref{lem:homogeneity}, we also have that $\fa_{-2} = V$, so that
$\fa$ differs from $\fp$ only in zero degree, where $\fa_{0}=\fh$ is a
subalgebra of $\fso(V)$. The aim of this section is to classify, for
any possible given $\fh$, the filtered deformations $\fg$ of $\fa$. We
will see that they are essentially governed by the $\fh$-invariant
elements $H^{2,2}(\fa_{-},\fa)^\fh$ of the Spencer group
$H^{2,2}(\fa_{-},\fa)$ of $\fa$, where
$\fa_{-}=\fa_{-2}\oplus\fa_{-1}$ is the negatively graded part of
$\fa$.

In Section~\ref{sec:preliminariesA} we set up the calculation of
$H^{2,2}(\fa_{-},\fa)$, which will be described in
Section~\ref{sec:cohomologyA}.  This result will then be used in
Section~\ref{sec:integr-deform} to classify the filtered deformations.
The results are summarised in Theorem~\ref{thm:final} in
Section~\ref{sec:summary}.

\subsection{Preliminaries}
\label{sec:preliminariesA}

Here we set up the calculation of the Spencer cohomology
$H^{2,2}(\fa_{-},\fa)$. We introduce the Spencer complex of $\fa$ in
complete analogy to the Spencer complex of $\fp$ (cf.
Section~\ref{sec:spencer-cohomology}): one has simply to replace
$\fso(V)$ with $\fh$ in the definitions. For instance any element
$\zeta\in C^{2,2}(\fa_-,\fa)$ can be uniquely written as the sum
$\zeta=\alpha+\beta+\gamma$, where
\begin{equation}
  \label{eq:componentsA}
  \alpha \in \Hom(\wedge^2V,V)~,\qquad
  \beta \in\Hom(V \otimes S, S)\qquad\text{and}\qquad
  \gamma \in \Hom(\odot^2S, \fh)
\end{equation}
and the Lie brackets of a general filtered deformations of $\fa$ take the form
\begin{equation}
  \begin{aligned}[m]
    [A,B]&=AB-BA\\
    [A,s]&=\sigma(A)s\\
    [A,v]&=Av+\lambda(A,v)
  \end{aligned}
  \qquad\qquad
  \begin{aligned}[m]
    [s,s]&=\kappa(s,s)+\gamma(s,s)\\
    [v,s]&=\beta(v,s)\\
    [v,w]&=\alpha(v,w)+\delta(v,w)\;,
  \end{aligned}
\end{equation}
for some maps $\lambda:\fh\otimes V\to\fh$ and $\delta:\wedge^2
V\to\fh$, where $A,B\in \fh$, $s\in S$, $v,w\in V$.

We recall that a transitive and fundamental $\ZZ$-graded Lie
superalgebra $\fa=\bigoplus\fa_{p}$ with negatively graded part
$\fa_{-}=\bigoplus_{p<0}\fa_{p}$ is called a \emph{full prolongation
  of degree k} if $H^{d,1}(\fa_-,\fa)=0$ for all $d\geq k$.

\begin{lemma}
  \label{lem:fullprol}
  Let $\fa=\fa_{-2}\oplus\fa_{-1}\oplus\fa_0$ be a $\mathbb Z$-graded
  subalgebra of the Poincaré superalgebra which differs only in zero
  degree.  Then $\fa$ is fundamental, transitive and
  $H^{d,2}(\fa_-,\fa)=0$ for all even $d>2$. Furthermore it is a full
  prolongation of degree $k=2$.
\end{lemma}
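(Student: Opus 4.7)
The plan is to imitate, with straightforward modifications, the proof of Lemma~\ref{lem:p4}, exploiting the fact that $\fa_-$ coincides with $\fp_-$ and that $\fa_0 = \fh$ is merely a Lie subalgebra of $\fso(V)$. The only structural input needed beyond the setup is the surjectivity $\kappa(S,S) = V$ already established for $\fp$; everything else cascades from it.

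Fundamentality is immediate from $[S,S] = \kappa(S,S) = V$, and transitivity follows because any $A \in \fh \subset \fso(V)$ with $[A, V] = 0$ must be zero by faithfulness of the defining $\fso(V)$-action on $V$. For the vanishing of $H^{d,2}(\fa_-,\fa)$ with $d > 2$ even, I would reuse verbatim the formula from Lemma~\ref{lem:p4}: for $d \geq 6$ the cochain space $C^{d,2}(\fa_-,\fa)$ vanishes on degree grounds, and in the remaining critical case $d = 4$ one has $C^{4,2}(\fa_-,\fa) = \Hom(\wedge^2 V, \fh)$ with
\begin{equation*}
  \partial\zeta(s_1,s_2,v) = -\zeta(\kappa(s_1,s_2), v)~,
\end{equation*}
so that $\kappa(S,S) = V$ forces $\ker\partial|_{C^{4,2}(\fa_-,\fa)} = 0$, hence $H^{4,2}(\fa_-,\fa) = 0$.

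To verify that $\fa$ is a full prolongation of degree $k = 2$, I must show $H^{d,1}(\fa_-,\fa) = 0$ for all $d \geq 2$. For $d \geq 3$ the cochain space $C^{d,1}(\fa_-,\fa) = \Hom(V, \fa_{d-2}) \oplus \Hom(S, \fa_{d-1})$ vanishes on degree grounds since $\fa_i = 0$ for $i \geq 1$. The only genuine case is $d = 2$, where $C^{2,1}(\fa_-,\fa) = \Hom(V, \fh)$ and the Spencer differential applied to $\psi \in C^{2,1}(\fa_-,\fa)$ yields, among its components, the identity
\begin{equation*}
  \partial\psi(s_1,s_2) = -\psi(\kappa(s_1,s_2))~,
\end{equation*}
whose vanishing for all $s_1, s_2 \in S$ again forces $\psi = 0$ by $\kappa(S,S) = V$. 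Since $C^{2,0}(\fa_-,\fa) = \fa_2 = 0$ there are no coboundaries to quotient by, so $H^{2,1}(\fa_-,\fa) = 0$.

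I expect no real obstacle: the entire argument rests on the single surjectivity property of $\kappa$, already proved for the ambient Poincaré superalgebra, and passing from $\fso(V)$ to the subalgebra $\fh$ only shrinks the cochain spaces without affecting the kernel computations. The mildest point of care is just to confirm that the explicit formulas derived in Lemma~\ref{lem:p4} remain valid verbatim on $\fa$, which they do because the bracket of $\fa$ restricts from that of $\fp$.
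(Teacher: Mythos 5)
Your proposal is correct and follows essentially the same route as the paper's proof: the paper likewise refers back to the argument of Lemma~\ref{lem:p4} for fundamentality, transitivity and the vanishing of $H^{d,2}(\fa_-,\fa)$ for even $d>2$, and then establishes the full-prolongation statement by observing that $\zeta(S)\subset\fa_1=0$ for $\zeta\in C^{2,1}(\fa_-,\fa)$, that $\partial\zeta(s_1,s_2)=-\zeta(\kappa(s_1,s_2))$ forces $\zeta=0$ by fundamentality, and that $C^{d,1}(\fa_-,\fa)=0$ for $d>2$ on degree grounds. Your additional remark that $C^{2,0}(\fa_-,\fa)=\fa_2=0$ so there are no coboundaries to quotient by is harmless and slightly more explicit than the paper, but not logically needed since the cocycle space is already zero.
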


\begin{proof}
  We only show  the last claim, the others follow as in the proof of Lemma~\ref{lem:p4}.
  Any $\zeta\in C^{2,1}(\fa_-,\fa)$ satisfies $\zeta(V)\subset\fh$, $\zeta(S)\subset \fa_{1}=0$ and
  \begin{align*}
    &\partial\zeta(s_1,s_2)=-\zeta(k(s_1,s_2))\\
    &\partial\zeta(s_1,v_1)=-\sigma( \zeta(v_1) )s_1\\
    &\partial\zeta(v_1,v_2)=\zeta(v_1)v_2-\zeta(v_2)v_1
  \end{align*}
  for all $s_1,s_2\in S$, $v_1,v_2\in V$.
  The first equation directly implies that $\zeta=0$ is the only
  cocycle and hence $H^{2,1}(\fa_-,\fa)=0$. If $d>2$ then
  $C^{d,1}(\fa_-,\fa)=0$ for degree reasons. 
\end{proof}

\begin{remark}
  One can actually prove that $\fa$ is a full prolongation of degree
  $k=1$, based on the non-trivial fact that the so-called ``maximal
  prolongation'' $\fg^\infty$ of $\fa_-=V\oplus S$ is a simple Lie
  superalgebra of type $\fsl(1|4)$ with a special $\ZZ$
  grading of the form
  $\fg^\infty=\fg^\infty_{-2}\oplus\cdots\oplus\fg^\infty_{2}$, cf.
  \cite{MR3255456}; but the simpler result of Lemma~\ref{lem:fullprol}
  suffices for our purposes.
\end{remark}

To state the main first intermediate result on filtered deformations
$\fg$ of $\fa$ we recall that the Lie brackets of $\fg$ have
components of nonzero degree: the sum $\mu:\fa\otimes\fa\to\fa$ of all
components of degree $2$ and the unique component
$\delta:\wedge^2 V\to \fh$ of degree $4$.

\begin{proposition}
  \label{thm:filt1}
  Let $\fa=V\oplus S\oplus \fh$ be a $\mathbb Z$-graded subalgebra of
  the Poincaré superalgebra $\fp=V\oplus S\oplus \fso(V)$ which
  differs only in zero degree. If $\fg$ is a filtered deformation of
  $\fa$ then:
  \begin{enumerate}
  \item $\mu|_{\fa_-\otimes\fa_-}$ is a cocycle in $C^{2,2}(\fa_-,\fa)$ and
    its cohomology class $$[\mu|_{\fa_-\otimes\fa_-}]\in H^{2,2}(\fa_-,\fa)$$
    is $\fh$-invariant (that is, the cocycle $\mu|_{\fa_-\otimes\fa_-}$ is
    $\fh$-invariant up to coboundaries); and
  \item if $\fg'$ is another filtered deformation of $\fa$ such that
    $[\mu'|_{\fa_-\otimes\fa_-}]= [\mu|_{\fa_-\otimes\fa_-}]$ then $\fg'$ is
    isomorphic to $\fg$ as a filtered Lie superalgebra.
  \end{enumerate}
\end{proposition}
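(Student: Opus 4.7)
The approach is to decompose the brackets of $\fg$ as the undeformed graded brackets of $\fa$ plus deformation terms $\mu$ of degree $+2$ and $\delta$ of degree $+4$, and then read off both claims from the Jacobi identity together with the cohomological vanishings of Lemma~\ref{lem:fullprol}. For part (2) my plan is to construct an isomorphism $\Phi:\fg\to\fg'$ of filtered Lie superalgebras iteratively, using $H^{d,1}(\fa_-,\fa)=0$ ($d\geq 2$) to absorb each successive discrepancy.

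For part (1), I would first fix a graded vector space splitting of the filtration in order to identify $\fg$ with $\fa$ as graded vector spaces; then $[-,-]_\fg = [-,-]_\fa + \mu + \delta$. Applying the Jacobi identity of $\fg$ to three elements $X,Y,Z \in \fa_-$ and projecting the result onto the degree-$2$ piece in $\fa$, the only contributions come from one factor of $\mu$ combined with one undeformed bracket, and these assemble exactly into $(\partial \mu|_{\fa_-\otimes\fa_-})(X,Y,Z)$. Hence $\mu|_{\fa_-\otimes\fa_-}$ is a cocycle in $C^{2,2}(\fa_-,\fa)$. For $\fh$-invariance, the adjoint action of any $A\in\fh$ preserves the filtration of $\fg$ and descends to the standard $\fh$-action on $\fa$; unpacking the equivariance of $[-,-]_\fg$ under $\ad_A$ at degree $+2$ yields $A\cdot(\mu|_{\fa_-\otimes\fa_-}) = \partial \psi_A$ where $\psi_A\in C^{2,1}(\fa_-,\fa)$ measures the failure of the chosen splitting to be $\fh$-equivariant. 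Hence the class $[\mu|_{\fa_-\otimes\fa_-}]$ lies in $H^{2,2}(\fa_-,\fa)^\fh$.

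For part (2), the plan is a standard iterative construction of an isomorphism $\Phi:\fg\to\fg'$. After choosing splittings identifying the underlying graded vector spaces of both $\fg$ and $\fg'$ with $\fa$, the hypothesis gives a cochain $\psi\in C^{2,1}(\fa_-,\fa)$ with $\partial\psi=\mu|_{\fa_-\otimes\fa_-}-\mu'|_{\fa_-\otimes\fa_-}$; using $\psi$ to modify the splitting on the $\fg'$ side forces the restrictions of $\mu$ and $\mu'$ to $\fa_-\otimes\fa_-$ to agree. I then extend to the remaining components of $\mu$ (those with at least one input in $\fh$) and finally to $\delta$. At each stage the discrepancy between $\fg$ and $\fg'$ is a cocycle of degree $d\geq 2$ in either $C^{d,1}(\fa_-,\fa)$ or $C^{d,2}(\fa_-,\fa)$; by Lemma~\ref{lem:fullprol} both $H^{d,1}(\fa_-,\fa)=0$ for $d\geq 2$ and $H^{d,2}(\fa_-,\fa)=0$ for $d>2$, so each discrepancy is a coboundary and can be absorbed by a further modification of the identification. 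Assembling these corrections provides the required filtered isomorphism $\Phi$.

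The main obstacle is the bookkeeping in the iterative step: one must check that each adjustment does not undo an earlier one, and that the $\fh$-equivariance is maintained throughout, which is where the ability to pick $\fh$-equivariant splittings is needed. The conceptual core, however, is entirely cohomological: it is precisely the conjunction of the full-prolongation vanishing $H^{d,1}(\fa_-,\fa)=0$ for $d\geq 2$ with $H^{d,2}(\fa_-,\fa)=0$ for even $d>2$ that renders $\fa$ rigid enough for the class $[\mu|_{\fa_-\otimes\fa_-}]\in H^{2,2}(\fa_-,\fa)^\fh$ to determine $\fg$ up to filtered isomorphism.
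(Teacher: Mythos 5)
Your approach is a direct reconstruction of the results the paper cites from Cheng--Kac (Propositions~2.2, 2.3 and 2.6 of~\cite{MR1688484}); in effect you sketch a proof of the relevant parts of their machinery, whereas the paper simply delegates this to the reference. Both arguments rest on exactly the same cohomological inputs from Lemma~\ref{lem:fullprol}, so the underlying logic coincides. Your treatment of part~(1) is essentially complete: the degree-$2$ projection of the Jacobi identity of $\fg$ on $\fa_-\otimes\fa_-\otimes\fa_-$ does assemble into the Spencer cocycle condition, and the $\fh$-invariance of the class follows from the Jacobi identity with one argument in $\fh$, with your $\psi_A$ being nothing other than the component $\lambda(A,-):V\to\fh$ of $\mu$. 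This is exactly what Cheng--Kac's Proposition~2.2 packages.

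For part~(2), one point needs tightening. The component $\lambda=\mu|_{\fh\otimes V}:\fh\otimes V\to\fh$ is \emph{not} a cochain in the Spencer complex $C^{\bullet,\bullet}(\fa_-,\fa)$, since it takes one argument from $\fh$ rather than $\fa_-$. So after arranging $\mu|_{\fa_-\otimes\fa_-}=\mu'|_{\fa_-\otimes\fa_-}$ by modifying the splitting, you cannot directly absorb the discrepancy $\lambda-\lambda'$ by appealing to $H^{d,1}(\fa_-,\fa)=0$, because $\lambda-\lambda'$ does not live in that complex. What actually closes the gap is the $[012]$ Jacobi identity together with the transitivity of $\fa$ and the full-prolongation hypothesis: these force $\lambda$ to be \emph{uniquely} determined by $\mu|_{\fa_-\otimes\fa_-}$, so that $\lambda=\lambda'$ follows with no further modification of the isomorphism. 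This is precisely the content of Cheng--Kac's Proposition~2.6, and is the step your outline elides. Once $\mu=\mu'$ is in place, your treatment of $\delta-\delta'$ as a cocycle in $C^{4,2}(\fa_-,\fa)$ (which must vanish, since $C^{4,1}(\fa_-,\fa)=0$ gives $\ker\partial|_{C^{4,2}}=H^{4,2}=0$) matches the paper exactly. So the proposal is correct in outline and in its cohomological ingredients, but the passage from ``restrictions to $\fa_-\otimes\fa_-$ agree'' to ``$\mu=\mu'$'' deserves the explicit transitivity argument rather than a $H^{d,1}$ invocation.
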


\begin{proof}
  The first claim follows directly from Proposition~2.2 of
  \cite{MR1688484}. Let now $\fg$ and $\fg'$ be filtered deformations
  of $\fa$ such that
  $[\mu|_{\fa_-\otimes\fa_-}]=[\mu'|_{\fa_-\otimes\fa_-}]$. Then
  $(\mu-\mu')|_{\fa_-\otimes\fa_-}$ is a Spencer coboundary and we may
  first assume without any loss of generality that
  $\mu|_{\fa_-\otimes\fa_-}=\mu'|_{\fa_-\otimes\fa_-}$ by Proposition~2.3 of
  \cite{MR1688484}. Moreover, since $\fa$ is a fundamental and transitive full prolongation of
  degree $k=2$ by Lemma~\ref{lem:fullprol},
  Proposition~2.6 of \cite{MR1688484} applies and we may also assume
  $\mu=\mu'$ without any loss of generality. In other words we just
  showed that $\fg'$ is isomorphic as a filtered Lie superalgebra to
  another filtered Lie superalgebra $\fg''$ which satisfies $\mu''=\mu$.
  
  Now, given any two filtered deformations $\fg$ and $\fg'$ of $\fa$ with
  $\mu=\mu'$ it is easy to see that
  $\delta-\delta'=(\delta-\delta')|_{\fa_-\otimes\fa_-}$ is a Spencer
  cocycle (use e.g., \cite[equation 2.6]{MR1688484}). However
  $H^{4,2}(\fa_-,\fa)=\ker\partial|_{C^{4,2}(\fa_-,\fa)}=0$
  by Lemma~\ref{lem:fullprol} and hence $\delta=\delta'$. This 
  proves that any two filtered deformations $\fg$ and $\fg'$ of $\fa$ with
  $[\mu'|_{\fa_-\otimes\fa_-}]= [\mu|_{\fa_-\otimes\fa_-}]$ are isomorphic.
\end{proof}

In other words, filtered deformations are determined by the space
$H^{2,2}(\fa_-,\fa)^\fh$ of $\fh$-invariant elements in
$H^{2,2}(\fa_-,\fa)$.  In particular the components of non-zero
filtration degree $\lambda=\mu|_{\fh\otimes V}:\fh\otimes V\to \fh$
and $\delta:\wedge^2V\to \fh$ are completely determined by the class
$[\mu|_{\fa_-\otimes\fa_-}]\in H^{2,2}(\fa_-,\fa)^\fh$, up to
isomorphisms of filtered Lie superalgebras.

We will now describe $H^{2,2}(\fa_-,\fa)$. We recall that this group
has already been determined in Proposition~\ref{prop:killspin} when
$\fa=\fp$ is the Poincaré superalgebra. Therein we also described the
kernel $\mathscr H^{2,2}$ of the Spencer operator acting on
$\Hom(V\otimes S,S)\oplus\Hom(\odot^2 S, \fso(V))$: it consists of the
maps
$\beta+\gamma\in \Hom(V\otimes S,S)\oplus\Hom(\odot^2 S, \fso(V))$
which are of the form given by Proposition~\ref{prop:killspin}. To
avoid confusion with the general components \eqref{eq:componentsA} we
will denote these maps by $\beta^{\Phi}+\gamma^{\Phi}$ from now on,
that is we set $\Phi=(a,b,\varphi)\in2\mathbb R\oplus V$ and
\begin{equation*}
  \begin{split}
    \beta^{\Phi}(v,s)&=v\cdot(a+b\vol)\cdot
    s-\frac{1}{2}(v\cdot\varphi+3\varphi\cdot v)\cdot\vol\cdot s\;,\\
    \gamma^{\Phi}(s,s)v &=-2\kappa(s,\beta(v,s))\;,
  \end{split}
\end{equation*}
for all $v\in V$ and $s\in S$, according to Proposition~\ref{prop:killspin}. 
In addition we set
  \begin{align*}
    \gamma^\varphi(s,s)v&=2\kappa(s,(\varphi\wedge v)\cdot\vol\cdot s)\;,\\
    \gamma^{(a,b)}(s,s)v&=-2a\kappa(s,v\cdot s)
                          -2b\kappa(s,v\cdot\vol\cdot s)\;,
  \end{align*}
  for all $v\in V$, $s\in S$.

We will also determine the $\fh$-invariant classes in
$H^{2,2}(\fa_-,\fa)$, the Lie subalgebras $\fh\subset\fso(V)$ for
which $H^{2,2}(\fa_-,\fa)^{\fh} \neq 0$, hence the graded
subalgebras $\fa=V\oplus S\oplus\fh$ of $\fp$ admitting nontrivial
filtered deformations.  The condition
$H^{2,2}(\fa_-,\fa)^{\fh} \neq 0$ 
has strong consequences and, as we will now see,
gives rise to a dichotomy: either $\varphi=0$ and $a^2+b^2\neq 0$ or
$\varphi\neq 0$ and $a=b=0$.

\subsection{The cohomology group \texorpdfstring{$H^{2,2}(\fa_-,\fa)$}{H22(a-,a)}}
\label{sec:cohomologyA}

We start with the following

\begin{proposition}
  \label{prop:cohgroups}
  Let $\fa=V\oplus S\oplus \fh$ be a $\mathbb Z$-graded subalgebra of
  the Poincaré superalgebra $\fp=V\oplus S\oplus
  \fso(V)$ which differs only in zero degree. Then
  \begin{equation*}
    H^{2,2}(\fa_-,\fa) = \frac{\left\{\beta^{\Phi} + \gamma^{\Phi}
        + \partial\widetilde\psi\,\middle |\,\Phi\in 2\mathbb R\oplus V
        ,\,\widetilde\psi: V \to \fso(V)\;\text{s.t.}\;
        \gamma^{\Phi}(s,s)-\widetilde\psi(\kappa(s,s))\in
        \fh\right\}}{\left\{\partial\psi \, \middle | \, \psi: V \to \fh\right\}}
  \end{equation*}
  and
  \begin{itemize}
  \item[(i)] the cohomology class
    $[\beta^{\Phi} + \gamma^{\Phi} + \partial\widetilde\psi]$ is trivial if and only if $\Phi=0$;
  \item[(ii)] the condition 
    $\gamma^{\Phi}(s,s)-\widetilde\psi(\kappa(s,s))\in\fh$ 
    is satisfied for all $s\in S$ if and only if separately
    \begin{align}
      \label{eq:relphi}
      \gamma^\varphi(s,s)- \widetilde\psi(\kappa(s,s))&\in \fh\;,\\
      \label{eq:relab}
      \gamma^{(a,b)}(s,s)&\in \fh\;,
    \end{align}
    for all $s\in S$;
  \item[(iii)] if $[\beta^{\Phi} + \gamma^{\Phi} + \partial\widetilde\psi]$ is an $\fh$-invariant
    cohomology class then $\fh$ leaves
    $\varphi$ invariant, that is $\fh\subset \fh_\varphi$ where
    $\fh_\varphi = \fso(V) \cap \stab(\varphi)$ and $\stab(\varphi)$
    is the Lie algebra of the stabiliser of $\varphi$ in $\mathrm{GL}(V)$.  
  \end{itemize}
  In particular if $[\beta^{\Phi} + \gamma^{\Phi} + \partial\widetilde\psi] \in
  H^{2,2}(\fa_-,\fa)$ is a nontrivial and $\fh$-invariant
  cohomology class then exactly one of
  the following two cases occurs:
  \begin{itemize}
  \item[(1)] if $\varphi=0$ then $a^2+b^2\neq 0$,
    $\gamma^{\Phi}(s,s)=\gamma^{(a,b)}(s,s)\in \fh$
    for all $s\in S$ and the cohomology class $[\beta^{\Phi} +
    \gamma^{\Phi} +
    \partial\widetilde\psi]=[\beta^{\Phi}+\gamma^{\Phi}]$;
  \item[(2)] if $\varphi\neq 0$ then $a=b=0$ and 
    \begin{align}
      \label{eq:relphi2}
      \gamma^\varphi(s,s)&\in \fh_\varphi\;,\\
      \label{eq:relab2}
      \widetilde\psi(\kappa(s,s))&\in \fh_\varphi\;,
    \end{align}
    for all $s\in S$.
  \end{itemize}
\end{proposition}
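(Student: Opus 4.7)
The strategy is to bootstrap from the Poincaré computation of Proposition~\ref{prop:killspin} and to exploit that $\fa$ and $\fp$ share the same negatively graded part $\fa_- = \fp_-$ and the same Spencer differential, differing only in the target space of the $\gamma$-slot of 2-cochains, which must now land in $\fh \subset \fso(V)$.  Every 2-cocycle in $\fa$ is therefore automatically an $\fp$-cocycle and, by Lemma~\ref{lem:iso}, equals $\beta^\Phi + \gamma^\Phi + \partial\widetilde\psi$ for some $\Phi = (a,b,\varphi) \in 2\RR \oplus V$ and some $\widetilde\psi : V \to \fso(V)$.  The $\fh$-valuedness condition on the $\gamma$-slot translates directly into $\gamma^\Phi(s,s) - \widetilde\psi(\kappa(s,s)) \in \fh$, while the genuine $\fa$-coboundaries are the $\partial\psi$ with $\psi : V \to \fh$, which yields the stated quotient description of $H^{2,2}(\fa_-,\fa)$.

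For (i), if $\Phi = 0$ then $\widetilde\psi(\kappa(s,s)) \in \fh$ for all $s$, and since $\kappa(S,S) = V$ by Lemma~\ref{lem:p4}, $\widetilde\psi$ actually lands in $\fh$ and the class is trivial.  Conversely, triviality places $\beta^\Phi + \gamma^\Phi$ in $\partial\Hom(V,\fso(V)) \cap \mathscr{H}^{2,2}$, which is zero by Lemma~\ref{lem:iso}, so $\Phi = 0$.  For (ii), I would split everything $\fso(V)$-equivariantly using the identification $\mathscr{H}^{2,2} \cong 2\RR \oplus V$: the conditions $\gamma^{(a,b)}(s,s) \in \fh$ and $\gamma^\varphi(s,s) - \widetilde\psi(\kappa(s,s)) \in \fh$ sit in distinct isotypical components of $\Hom(\odot^2S, \fso(V))$ and so must hold separately.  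For (iii), acting by $h \in \fh$ leaves the scalars $(a,b)$ fixed and sends $\varphi \mapsto h\cdot\varphi$; $\fh$-invariance of the cohomology class combined with (i) then forces $h \cdot \varphi = \varphi$ for all $h \in \fh$, i.e.\ $\fh \subset \fh_\varphi$.

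The dichotomy follows quickly: nontriviality rules out $\Phi = 0$, while if $\varphi \ne 0$ then $\fh \subset \fh_\varphi$, so the first constraint imposes $\gamma^{(a,b)}(s,s)\cdot\varphi = 0$ for all $s$; a direct Clifford-algebraic computation via Proposition~\ref{prop:clifford} should show that the quadratic maps $s \mapsto \kappa(s, \varphi \cdot s)$ and $s \mapsto \kappa(s, \varphi \cdot \vol \cdot s)$ are linearly independent as $V$-valued functions on $S$ when $\varphi \ne 0$, forcing $a = b = 0$.  The main obstacle I anticipate is the separate-components step in (ii): one must verify that the $\fso(V)$-equivariant splitting of $\gamma^\Phi$ into scalar and vector pieces interacts cleanly with the coboundary correction $\widetilde\psi$, which I would handle by choosing $\widetilde\psi$ adapted to the irreducible decomposition of $\Hom(V,\fso(V))$ so that its contribution to $\gamma$ is confined to the $V$-isotypical component and cannot mask a nontrivial scalar obstruction.
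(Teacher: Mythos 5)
Your proposal follows essentially the same route as the paper: bootstrap from $\fa_- = \fp_-$ and Lemma~\ref{lem:iso}/Proposition~\ref{prop:killspin} to identify any $\fa$-cocycle as $\beta^\Phi + \gamma^\Phi + \partial\widetilde\psi$ with the $\fh$-valuedness constraint, use fundamentality to dispose of $\widetilde\psi$ when $\Phi=0$, split the constraint along $\odot^2 S = \wedge^1 V \oplus \wedge^2 V$, read off (iii) from equivariance, and close the dichotomy with an explicit computation showing $\gamma^{(a,b)}(s,s)\varphi = 0$ for all $s$ forces $a = b = 0$ when $\varphi \neq 0$. Two small remarks: in (iii) the Lie-algebraic invariance you need is $h\cdot\varphi = 0$, not $h\cdot\varphi = \varphi$ (the conclusion $\fh \subset \fh_\varphi$ is of course what you wrote); and the ``obstacle'' you anticipate in (ii) is a non-issue, because $\widetilde\psi(\kappa(s,s))$ automatically lives in $\Hom(\wedge^1 V,\fso(V))$ for \emph{any} $\widetilde\psi$ (it factors through the $\wedge^1 V$-projection $\kappa$), so no adapted choice of $\widetilde\psi$ is required for the split to be clean.
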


\begin{proof}
  From Lemma~\ref{lem:iso} we know that given any $\alpha \in
  \Hom(\wedge^2V,V)$,
  there is a unique $\widetilde\psi\in\Hom(V,\fso(V))$ such that
  $\partial \widetilde\psi = \alpha + \widetilde\beta +
  \widetilde\gamma$, for some $\widetilde\beta \in \Hom(V \otimes S,
  S)$ and $\widetilde\gamma \in \Hom(\odot^2S,\fso(V))$. Any cochain
  $\alpha + \beta + \gamma\in C^{2,2}(\fa_-,\fa)$ may be therefore
  uniquely written as
  \begin{equation*}
    \alpha + \beta + \gamma = (\alpha + \beta + \gamma
    - \partial\widetilde\psi) + \partial\widetilde\psi = (\beta
    -\widetilde\beta) +  (\gamma - \widetilde\gamma) + \partial \widetilde\psi~,
  \end{equation*}
  where $\beta - \widetilde\beta \in \Hom(V\otimes S, S)$ and
  $\gamma-\widetilde\gamma \in \Hom(\odot^2S, \fso(V))$.  If $\alpha +
  \beta + \gamma$ is a cocycle, then so is $(\beta - \widetilde \beta) +
  (\gamma - \widetilde\gamma)$, so that by Proposition~\ref{prop:killspin},
  $\beta - \widetilde \beta = \beta^\Phi$ and $\gamma - \widetilde \gamma =
  \gamma^\Phi$ for some $\Phi \in 2\mathbb R\oplus V$ or,  in other words,
  \begin{equation}
    \label{eq:firstinclusion}
    \ker\partial\bigr|_{C^{2,2}(\fa_-,\fa)} \subset \mathscr H^{2,2}
    \oplus \partial\Hom(V,\fso(V))~\;.
  \end{equation}
  Conversely equation \eqref{eq:Spencer1} tells
  us that $\partial\widetilde\psi(s,s)=-\widetilde\psi(\kappa(s,s))$ 
  for all $s\in S$ so that an element $\beta^\Phi + \gamma^\Phi
  + \partial\widetilde\psi$ is in $C^{2,2}(\fa_-,\fa)$ if and only if
  \begin{equation}
    \label{eq:condPhi}
    \gamma^\Phi(s,s) - \widetilde\psi(\kappa(s,s))\in \fh\;,
  \end{equation}
  for all $s\in S$. This fact together with
  \eqref{eq:firstinclusion} yield immediately the claim on $H^{2,2}(\fa_-,\fa)$.

  If $\Phi=0$, then $\widetilde\psi(\kappa(s,s))\in\fh$ for all
  $s\in S$ and $\partial\widetilde\psi$ is in the image of
  $C^{2,1}(\fa_-,\fa)=\Hom(V,\fh)$, proving one implication of claim
  (i).  The other implication is trivial.
  
  We will now have a closer look at condition \eqref{eq:condPhi},
  using that $\odot^2 S=\wedge^1 V\oplus\wedge^2 V$. From (ii) of
  Proposition~\ref{prop:killspin} we have 
  \begin{align*}
    \gamma^{\Phi}(s,s)v&=-2\kappa(s,\beta^{\Phi}(v,s))\\
                       &=-2a\kappa(s,v\cdot s)
                         -2b\kappa(s,v\cdot\vol\cdot s)
                         +2\kappa(s,(\varphi\wedge v-2\eta(\varphi,v))\cdot\vol\cdot s)\\
                       &=-2a\kappa(s,v\cdot s)
                         -2b\kappa(s,v\cdot\vol\cdot s)
                         +2\kappa(s,(\varphi\wedge v)\cdot\vol\cdot s)\;,
  \end{align*}
  with the first two terms (resp. last term) in the RHS of the above
  equation acting on the component $\wedge^2 V$ (resp. $\wedge^1 V$)
  of $\odot^2 S$ but trivially on the other component $\wedge^1 V$
  (resp. $\wedge^2 V$). In particular condition \eqref{eq:condPhi}
  splits into \eqref{eq:relphi} and \eqref{eq:relab}, proving claim (ii).

  Let now $[\beta^{\Phi} + \gamma^{\Phi} + \partial\widetilde\psi] \in
  H^{2,2}(\fa_-,\fa)$ be an $\fh$-invariant class; i.e., for any
  $x\in\fh$ there is a $\psi\in\Hom(V,\fh)$ such that
  $x\cdot(\beta^\Phi+\gamma^\Phi+\partial\widetilde\psi)=\partial\psi$.
  In other words, in terms of the $\fso(V)$-equivariant projections
  \eqref{eq:projectors}, we have:
  \begin{align}
    \label{eq:a}
    x\cdot(\pi^{\alpha}(\partial\widetilde\psi))&=\pi^{\alpha}(\partial\psi)~,\\
    \label{eq:b}
    x\cdot(\beta^\Phi+\pi^\beta(\partial\widetilde\psi))&=\pi^{\beta}(\partial\psi)~,\\
    \label{eq:c}
    x\cdot(\gamma^\Phi+\pi^{\gamma}(\partial\widetilde\psi))&=\pi^{\gamma}(\partial\psi)~.
  \end{align}
  Equation~\eqref{eq:a} and the $\fso(V)$-equivariance of
  $\pi^\alpha$ and $\partial$ imply
  \begin{equation*}
    (\pi^{\alpha}\circ\partial)(\psi) =
    (\pi^{\alpha}\circ\partial) (x\cdot\widetilde\psi)
  \end{equation*}
  so that $x\cdot\widetilde\psi=\psi$, by Lemma~\ref{lem:iso}.
  Equation \eqref{eq:b} yields therefore
  \begin{equation*}
    \begin{split}
      \pi^{\beta}(\partial\psi) &= x\cdot \left(\beta^\Phi +
        \pi^{\beta} (\partial\widetilde\psi)\right)
      = x \cdot \beta^\Phi + x \cdot \pi^{\beta}(\partial\widetilde\psi) \\
      &= x\cdot\beta^\Phi+\pi^{\beta}(\partial(x\cdot
      \widetilde\psi))=x\cdot\beta^\Phi+\pi^{\beta}(\partial\psi)
    \end{split}
  \end{equation*}
  from which $\beta^{x\cdot \varphi}=x\cdot
  \beta^{\varphi}=x\cdot\beta^\Phi=0$. This proves claim (iii).
  
  We now prove the last claims. Let $[\beta^{\Phi} + \gamma^{\Phi} +
  \partial\widetilde\psi]$ be a nontrivial $\fh$-invariant class.  If
  $\varphi=0$ then $a^2+b^2\neq 0$ by (i) and $\partial\widetilde\psi$
  is in the image of $C^{2,1}(\fa_-,\fa)=\Hom(V,\fh)$ by
  \eqref{eq:relphi}. This fact together with \eqref{eq:relab}
  immediately gives case (1).
  
  If $\varphi\neq 0$ then (ii) and (iii) imply
  \begin{align}
    \label{eq:neq0}
    \gamma^\varphi(s,s)- \widetilde\psi(\kappa(s,s))&\in \fh_{\varphi}\;,\\
    \label{eq:neq0II}
    \gamma^{(a,b)}(s,s)&\in \fh_{\varphi}\;,
  \end{align}
  for all $s\in S$. We fix an orthonormal basis $\left\{e_\mu\right\}$
  of $V$, use the Einstein summation convention and note that equation
  \eqref{eq:neq0II} gives
  \begin{align*}
    0&=\gamma^{(a,b)}(s,s)\varphi\\
     &=2\varphi^\mu(\bar s\Gamma_\mu\Gamma_\nu(a+b\vol)s) e^\nu\\
     &=2a\varphi^\mu(\bar s\Gamma_\mu\Gamma_\nu s) e^\nu+
       2b\varphi^\mu(\bar s\Gamma_\mu\Gamma_\nu \vol s) e^\nu\\
     &=2a\varphi^\mu(\bar s\Gamma_{\mu\nu}s) e^\nu+
       2b\varphi^\mu(\bar s\Gamma_{\mu\nu} \vol s) e^\nu\\
     &=2\bar s(\varphi^\mu(a\Gamma_{\mu\nu}+b\Gamma_{\mu\nu} \vol)) s e^\nu\,,
  \end{align*}
  for all $s\in S$, hence
  $\varphi^\mu(a\Gamma_{\mu\nu}+b\Gamma_{\mu\nu} \vol)=0$ for every
  $0\leq j\leq 3$. Since $\varphi\neq 0$ this readily implies $a=b=0$.
  Similarly
  \begin{align*}
    \gamma^{\varphi}(s,s)\varphi&=-\varphi^\mu(\bar s\Gamma_\mu(\Gamma_\nu\varphi+3\varphi\Gamma_\nu)\vol s) e^\nu\\
                                &=-2\varphi^\mu\varphi^\rho(\bar s\Gamma_\mu\Gamma_{\rho\nu}\vol s) e^\nu\\
                                &=-2\varphi^\mu\varphi^\rho(\bar s\Gamma_{\mu\rho\nu}\vol s) e^\nu\\
                                &=0\;,
  \end{align*}
  so that  $\gamma^{\varphi}(s,s)\in\fh_{\varphi}$ for all $s\in S$
  automatically and, from equation \eqref{eq:neq0}, we infer that
  $\widetilde\psi(\kappa(s,s))\in\fh_{\varphi}$ for all $s\in S$ too.
  This is case (2).
\end{proof}

By the results of Proposition~\ref{thm:filt1} and Proposition~\ref{prop:cohgroups}, we need only
to consider the filtered deformations associated to $\fh$-invariant cohomology
classes in $H^{2,2}(\fa_-,\fa)$ with $\Phi\neq 0$. Indeed if
$\Phi=0$ then $[\mu|_{\fa_-\otimes\fa_-}]=0$ and the associated
filtered Lie superalgebras are just the $\mathbb Z$-graded subalgebras of the Poincaré superalgebra. 

We now investigate separately the cohomology classes in family (1) and (2) of Proposition~\ref{prop:cohgroups}.
\begin{lemma}
\label{lem:case1}
Let $[\beta^{\Phi} + \gamma^{\Phi}] \in
  H^{2,2}(\fa_-,\fa)$ be a nontrivial and $\fh$-invariant
  cohomology class with $\varphi=0$. Then $\fh=\operatorname{Im}(\gamma^{\Phi})=\fso(V)$.
\end{lemma}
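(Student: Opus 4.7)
The plan is to combine three observations about the map $\gamma^\Phi = \gamma^{(a,b)}$ when $\varphi = 0$: it takes values in $\fh$, it is $\fso(V)$-equivariant, and its image is nonzero. The desired chain of equalities $\fh=\operatorname{Im}(\gamma^\Phi)=\fso(V)$ then follows by squeezing.

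First I would appeal to case (1) of Proposition~\ref{prop:cohgroups}: since $\varphi = 0$, we have $a^2 + b^2 \neq 0$ and the condition \eqref{eq:relab} yields $\gamma^{(a,b)}(s,s) \in \fh$ for every $s \in S$. By polarisation, $\operatorname{Im}(\gamma^{(a,b)}) \subseteq \fh \subseteq \fso(V)$, so it suffices to prove $\operatorname{Im}(\gamma^{(a,b)}) = \fso(V)$.

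Next I would exploit equivariance. Because $(a,b)$ is $\fso(V)$-invariant scalar data, the formula for $\gamma^{(a,b)}$ makes it $\fso(V)$-equivariant, so $\operatorname{Im}(\gamma^{(a,b)})$ is an $\fso(V)$-submodule of the adjoint representation. Now $\fso(V) \cong \fso(1,3)$ is simple as a real Lie algebra, being the realification of the complex-simple $\fsl(2,\CC)$; equivalently, $\wedge^2 V \cong \fso(V)$ is an irreducible real $\fso(V)$-module. Hence $\operatorname{Im}(\gamma^{(a,b)})$ is either $0$ or all of $\fso(V)$. To rule out the zero case, I would specialise equation~\eqref{eq:cc4} to $\beta^\Phi_\nu = \Gamma_\nu(a + b\vol)$ and use $\Gamma_{[\mu}\Gamma_{\nu]} = \Gamma_{\mu\nu}$ to obtain
\begin{equation*}
  \gamma^{(a,b)}(s,s)_{\mu\nu} = -2a\, \sbar\, \Gamma_{\mu\nu} s - 2b\, \sbar\, \Gamma_{\mu\nu}\Gamma_5 s\;.
\end{equation*}
Under the identification $\odot^2 S \cong \wedge^1 V \oplus \wedge^2 V$ from \eqref{eq:symmetry}, the symmetric bilinear forms $s \mapsto \sbar\, \Gamma_{\mu\nu} s$ and $s \mapsto \sbar\, \Gamma_{\mu\nu}\Gamma_5 s$ on $S$ correspond respectively to $e^\mu \wedge e^\nu$ and, up to sign, its Hodge dual, which are linearly independent in $\wedge^2 V$. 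Vanishing of $\gamma^{(a,b)}$ for all $s$ would therefore force $a = b = 0$, contradicting $a^2+b^2\neq 0$. Consequently $\operatorname{Im}(\gamma^{(a,b)}) = \fso(V)$ and $\fh = \fso(V)$ as claimed.

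The step requiring the most care is the irreducibility invoked above: over $\RR$ the usual self-dual/anti-self-dual splitting of $2$-forms does \emph{not} hold in Lorentzian signature because $\star^2 = -1$ on $\wedge^2 V$, so the two $3$-dimensional complex $\star$-eigenspaces are complex conjugate and fail to descend to real subspaces. This is exactly what makes $\fso(1,3)$ simple as a real Lie algebra and what powers the squeezing argument; the remaining steps are either direct consequences of Proposition~\ref{prop:cohgroups} or routine gamma-matrix manipulations of the type already carried out in the proof of Proposition~\ref{prop:killspin}.
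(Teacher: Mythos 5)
Your proof is correct, but it follows a genuinely different route from the paper's, and it is worth contrasting the two. The paper's proof is more direct: it computes
\begin{equation*}
  \eta\bigl(w,\gamma^{\Phi}(s,s)v\bigr) = -2\left\langle s,\, (w\wedge v)\cdot(a+b\vol)\cdot s\right\rangle
\end{equation*}
for all $w\wedge v\in\wedge^2V$ and then observes that, since $\vol^2=-\1$, right multiplication by $a+b\vol$ is a linear automorphism of $\wedge^2V\subset\Cl(V)$ whenever $a^2+b^2\neq 0$ (with inverse $(a-b\vol)/(a^2+b^2)$); combined with $\odot^2 S\cong\wedge^1V\oplus\wedge^2V$, this exhibits $\gamma^{\Phi}$ as surjective onto $\fso(V)$ in one stroke, with no appeal to irreducibility. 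You instead use the softer dichotomy argument: $\gamma^{(a,b)}$ is $\fso(V)$-equivariant, hence its image is a submodule of the adjoint representation, and since $\wedge^2V$ is an irreducible real $\fso(1,3)$-module (a fact the paper itself invokes later, in Section~\ref{sec:integr-deform}), the image is either $0$ or all of $\fso(V)$; you then rule out $0$ by the explicit gamma-matrix computation. Both buy the same conclusion. The paper's argument is more elementary and self-contained; yours has the virtue of cleanly separating the structural input (equivariance and irreducibility) from the single computation needed. Your only slight imprecision is in the final nonvanishing step: the statement that $e^\mu\wedge e^\nu$ and its Hodge dual are ``linearly independent in $\wedge^2V$'' really needs to be supplemented by the injectivity of the map $\wedge^2V\hookrightarrow\odot^2S^{*}$, $\phi\mapsto\langle-,\phi\cdot-\rangle$, which is implicit in the identification $\odot^2S\cong\wedge^1V\oplus\wedge^2V$; with that understood, your argument that $\gamma^{(a,b)}=0$ forces $a=b=0$ is sound — and is, after all, the same fact the paper packages as the invertibility of right multiplication by $a+b\vol$.
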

\begin{proof}
First of all, as $a^2+b^2\neq 0$ by Proposition~\ref{prop:cohgroups}, we have that right multiplication by $a+b\vol$ in $\Cl(V)$ 
is a linear isomorphism. In particular it restricts to a linear isomorphism of $\wedge^2 V\subset\Cl(V)$. 
On the other hand, from Proposition~\ref{prop:killspin}:
\begin{align*}
\eta(w,\gamma^{\Phi}(s,s)v)&=-2\eta(w,\kappa(s,v\cdot(a+b\vol)\cdot s))\\
&=-2\left< s, w\cdot v\cdot(a+b\vol)\cdot s\right>\\
&=-2\left< s, w\wedge v\cdot (a+b\vol)\cdot s\right>\;,
\end{align*}
for all $w\wedge v\in\wedge^2 V\subset \Cl(V)$ and $s\in S$. Since $\gamma^{\Phi}(s,s)\in\fh$ for all $s\in S$ from (1) of Proposition~\ref{prop:cohgroups} and $\odot^2 S=\wedge^1 V\oplus\wedge^2 V$, the claim follows.
\end{proof}
To proceed further, we need to consider the case where
$\varphi\neq 0$, $a=b=0$. It is however sufficient to consider
$\varphi$ up to the action of
$\mathrm{CSO}(V) = \RR^\times \times \SO(V)$. To see it, we note that
the group $\mathrm{CSpin(V)}$ with Lie algebra $\mathfrak{co}(V)$ is a
double-cover of $\mathrm{CSO}(V)$ and it naturally acts on the
Poincaré superalgebra $\fp=V\oplus S\oplus\fso(V)$ by $0$-degree Lie
superalgebra automorphisms ($t\id \in\mathrm{CSpin(V)}$
acts with eigenvalues $0,e^{-t}$ and $e^{-2t}$ on, respectively,
$\fso(V)$, $S$ and $V$). In particular any element
$c\in\mathrm{CSpin(V)}$ sends a $\mathbb Z$-graded subalgebra
$\fa=V\oplus S\oplus \fh$ of $\fp$ into an (isomorphic)
$\mathbb Z$-graded subalgebra
$\fa'=c\cdot\fa=V\oplus S\oplus (c\cdot \fh)$ of $\fp$ and, if $\fg$ is
a filtered deformation of $\fa$ associated with $\varphi$ then
$\fg'=c\cdot \fg$ is a filtered deformation of $\fa'$, which is associated
with $\varphi'=c\cdot\varphi$.

We will distinguish $\varphi$ according to whether it is spacelike,
timelike or lightlike and denote by $\Pi \subset V$ the line defined
by the span of $\varphi$. In the first two cases we can decompose
$V = \Pi \oplus \Pi^\perp$ into an orthogonal direct sum and
$\fh_\varphi = \fso(\Pi^\perp) \subset \fso(V)$.
If $\varphi$ is lightlike, we choose an
$\eta$-Witt basis for $V$ such that $V = \RR\left<\be_+,\be_-\right>
\oplus W$ and $\varphi = \be_+ $.  Our plane is $\Pi = \RR\left<\be_+\right>$ and 
$\fh_\varphi = \fso(W) \inplus
  (\be_+ \wedge W) \subset \fso(V)$,
where $\be_+ \wedge W$ is the abelian Lie subalgebra of $\fso(V)$ consisting of null
rotations fixing $\be_+$. In this case we decompose any $v\in V$ into 
\begin{equation*}
  v=v_++v_-+v_\perp\;,
\end{equation*}
where $v_+\in \Pi$, $v_-\in\mathbb R\left<\be_-\right>$ and $v_\perp\in W$.

\begin{lemma}
\label{lem:case2}
Let $[\beta^{\Phi} + \gamma^{\Phi}+\partial\widetilde\psi] \in
  H^{2,2}(\fa_-,\fa)$ be a nontrivial and $\fh$-invariant
  cohomology class with $\varphi\neq 0$ and $a=b=0$. Then $\operatorname{Im}(\gamma^{\Phi})=\fh_{\varphi}$
	and there exists a unique cocycle representative $\beta^{\Phi} + \gamma^{\Phi}+\partial\widetilde\psi$ for which $\gamma^\Phi(s,s)-\widetilde\psi(\kappa(s,s))=0$ for all $s\in S$.
\end{lemma}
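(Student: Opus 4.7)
My plan is to analyse $\gamma^\Phi$, which equals $\gamma^\varphi$ since $a=b=0$ in case (2), and to show that both claims of the lemma follow from a single structural fact: $\gamma^\varphi$ factors canonically through the Dirac current projection $\kappa:\odot^2 S\twoheadrightarrow V$ as a linear map $\widetilde\psi:V\to\fh_\varphi$. Under the $\fso(V)$-isomorphism $\odot^2 S\cong\wedge^1 V\oplus\wedge^2 V$, the map $\kappa$ is the projection onto $\wedge^1 V=V$, so the desired factorization is equivalent to the vanishing of $\gamma^\varphi$ on the $\wedge^2 V$ summand. Granted this, choosing the representative whose $\gamma$-component equals $\gamma^\varphi(s,s)-\widetilde\psi(\kappa(s,s))=0$ yields the representative in the statement; uniqueness of $\widetilde\psi$ is automatic, since $\kappa(S,S)=V$ forces any two candidates to agree on a spanning set.

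To establish the factorization, I would compute in an $\eta$-orthonormal basis, using Proposition~\ref{prop:killspin}(ii),
\begin{equation*}
  \gamma^\varphi(s,s)_{\mu\nu}
  = 2\,\bar s\,\Gamma_\mu(\varphi\wedge e_\nu)\cdot\vol\cdot s
  = 2\varphi^\alpha\,\bar s\,\Gamma_\mu\Gamma_{\alpha\nu}\Gamma_5\,s\;.
\end{equation*}
Expanding $\Gamma_\mu\Gamma_{\alpha\nu}=\Gamma_{\mu\alpha\nu}+\eta_{\mu\alpha}\Gamma_\nu-\eta_{\mu\nu}\Gamma_\alpha$, applying the four-dimensional duality relations \eqref{eq:duality} to rewrite $\Gamma_{\mu\alpha\nu}\Gamma_5$ and $\Gamma_\nu\Gamma_5$ in terms of $\Gamma^\beta$, and using the Clifford-annihilation properties of the Dirac $2$-form and $3$-form from Proposition~\ref{prop:clifford} to eliminate the resulting higher-form contractions, the right-hand side should reduce to an antisymmetric expression manifestly linear in $\kappa(s,s)^\alpha=\bar s\,\Gamma^\alpha s$. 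This produces an explicit formula for $\widetilde\psi$ and simultaneously exhibits the factorization.

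The image claim then reduces to verifying surjectivity of $\widetilde\psi:V\to\fh_\varphi$. I would distinguish three cases by the causal type of $\varphi$. When $\varphi$ is timelike or spacelike, $\fh_\varphi\cong\fso(3)$ or $\fso(2,1)$ is simple, so the $\fh_\varphi$-submodule $\operatorname{Im}(\widetilde\psi)=\operatorname{Im}(\gamma^\varphi)$ is either zero or all of $\fh_\varphi$; non-triviality of the cohomology class together with the explicit formula for $\widetilde\psi$ rule out the former. When $\varphi$ is null, $\fh_\varphi=\fso(W)\ltimes(\be_+\wedge W)$ is non-reductive, and I would argue directly by evaluating $\widetilde\psi$ on spinors whose Dirac currents lie in $W$ and in $\mathbb R\be_-$ respectively, checking that both the $\fso(W)$ and $\be_+\wedge W$ summands are hit. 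The main obstacle is the Clifford-algebraic derivation of $\widetilde\psi$: one has to grind through the duality and annihilation identities to verify that only Dirac-current terms survive on the right-hand side. The surjectivity analysis is then largely representation-theoretic, with the null case requiring the most additional bookkeeping.
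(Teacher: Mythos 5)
Your proposal is correct and follows essentially the same route as the paper: establish that $\gamma^\varphi$ vanishes on the $\wedge^2 V$ summand of $\odot^2 S \cong \wedge^1 V \oplus \wedge^2 V$, so that it factors through the Dirac current as $\gamma^\varphi(s,s)=\widetilde\psi(\kappa(s,s))$ for a canonical $\widetilde\psi:V\to\fh_\varphi$; then read off surjectivity from $\fh_\varphi$-equivariance and the existence/uniqueness of the normalised representative from the factorization. Two minor points are worth tightening. First, in your Clifford computation the terms $\bar s\,\Gamma_\nu\Gamma_5 s$ and $\bar s\,\Gamma_\alpha\Gamma_5 s$ vanish not by the Clifford-annihilation identities of Proposition~\ref{prop:clifford} but because $\Gamma_\nu\Gamma_5$ corresponds to a $3$-form and hence lies in $\wedge^2 S$, so the quadratic form $\langle s,\phi\cdot s\rangle$ is identically zero; you should cite the symmetry decomposition \eqref{eq:EndSasLambda} (equivalently \eqref{eq:symmetry}) rather than Proposition~\ref{prop:clifford} here. (The paper avoids this altogether by the slicker manipulation leading to $\eta(w,\gamma^\Phi(s,s)v)=2\langle s,\imath_w\imath_v(\imath_\varphi\vol)\cdot s\rangle$ and reading off the $\wedge^1 V$ projection directly.) Second, when you ``choose the representative'' with $\gamma^\Phi(s,s)-\widetilde\psi(\kappa(s,s))=0$, you should note explicitly that this replacement changes $\widetilde\psi$ only by an element of $\Hom(V,\fh)=C^{2,1}(\fa_-,\fa)$ (because the difference evaluated on $\kappa(s,s)$ lies in $\fh$ for all $s$, and $\kappa$ is surjective), so the new cocycle differs from the old by a coboundary and the class is unchanged; this is the step the paper makes explicit. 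Your simplicity argument for surjectivity in the timelike/spacelike case is a nice shortcut relative to the paper's equivariant-monomorphism-plus-dimension-count; the lightlike case is handled in essentially the same way in both.
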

\begin{proof}
We already know from (2) of Proposition~\ref{prop:cohgroups} that $\operatorname{Im}(\gamma^{\Phi})\subset\fh_{\varphi}$. 
In addition:
\begin{align}
\eta(w,\gamma^{\Phi}(s,s)v)&=\eta(w,\kappa(s,(v\cdot\varphi+3\varphi\cdot v)\cdot\vol\cdot s))\notag
\\
&=-2\eta(w,\kappa(s,v\cdot\varphi\cdot\vol\cdot s))\notag\\
&=2\left< s, w\cdot v\cdot\imath_{\varphi}\vol\cdot s\right>\notag\\
&=2\left< s, \imath_{w}\imath_{v}(\imath_{\varphi}\vol)\cdot s\right>\label{eq:gammanondeg}\;,
\end{align}
for all $v, w\in V$. Using \eqref{eq:gammanondeg} and
$\odot^2 S=\wedge^1 V\oplus\wedge^2 V$, we first see that
$\gamma^{\Phi}(\wedge^2 V)=0$.  We now break our arguments into two
cases, depending on whether or not the line $\Pi$ corresponding to
$\varphi$ is degenerate.

If $\varphi$ is spacelike or timelike then from \eqref{eq:gammanondeg}
we see that $\gamma^{\Phi}(\Pi)=0$ whereas
\begin{equation*}
  \gamma^{\Phi}|_{\Pi^\perp}:\Pi^\perp\subset\wedge^1 V\longrightarrow\fso(\Pi^\perp)
\end{equation*}
is an $\fso(\Pi^\perp)$-equivariant monomorphism, hence an isomorphism
by dimensional reasons.  If $\varphi$ is lightlike we decompose
\begin{equation}
  \begin{split}
    \eta(w,\gamma^{\Phi}(s,s)v)&=2\left< s, \imath_w\imath_{v}(\imath_{\varphi}\vol)\cdot s\right>\\
    &=2\left< s,
      \imath_{w_\perp}\imath_{v_-}(\imath_{\varphi}\vol)\cdot
      s\right>+2\left< s,
      \imath_{w_-}\imath_{v_\perp}(\imath_{\varphi}\vol)\cdot
      s\right>  + 2\left< s,
      \imath_{w_\perp}\imath_{v_\perp}(\imath_{\varphi}\vol)\cdot s\right>\label{eq:gammadeg}\;,
  \end{split}
\end{equation}
which readily gives $\gamma^{\Phi}(\Pi)=0$, $\gamma^{\Phi}(e_-)$ is a
generator of $\fso(W)$ and, finally, $\gamma^{\Phi}(W)=e_+\wedge W$.
In this case $\gamma^{\Phi}$ is an $\fh_{\varphi}$-equivariant
isomorphism from $\mathbb R\left<\be_-\right>\oplus W$ to
$\fh_{\varphi}$.

To prove the last statement, we recall that
$\gamma^{\Phi}(s,s)-\widetilde\psi(\kappa(s,s))\in\fh$ for all
$s\in S$, by Proposition~\ref{prop:cohgroups}. On the other hand we
just saw that the operator $\gamma^{\Phi}-\widetilde\psi(\kappa(-,-))$
acts trivially on $\wedge^2 V\subset\odot^2 S$ and possibly
non-trivially only on $\wedge^1 V\subset\odot^2 S$. In other words it
is an operator of the form $\psi(\kappa(-,-)):\odot^2 S\to\fh$ for
some $\psi\in C^{2,1}(\fa_-,\fa)=\Hom(V,\fh)$ and such a $\psi$ is
clearly unique, since $\fa$ is fundamental. Subtracting the coboundary
$\partial\psi$ to the cocycle
$\beta^{\Phi} + \gamma^{\Phi}+\partial\widetilde\psi$ gives the last
claim.
\end{proof}

We collect here for later use different equivalent characterizations
of the map $\widetilde\psi:V\to\fso(V)$ associated to the unique
cocycle representative of Lemma~\ref{lem:case2}:
\begin{enumerate}[label=(\roman*)]
\item $\widetilde\psi(\kappa(s,s))=\gamma^{\varphi}(s,s)$ for all $s\in S$;
\item $\widetilde\psi(u)=2\imath_{u}\imath_{\varphi}\vol$ for all $u\in V$;
\item $\widetilde\psi(u)v=2\imath_{v}\imath_{u}\imath_{\varphi}\vol$ for all $u,v\in V$;
\item $\eta(w,\widetilde\psi(u)v)=2\imath_w\imath_{v}\imath_{u}\imath_{\varphi}\vol$ for all $u,v,w\in V$;
\item $\widetilde\psi(u)s=-(\varphi\wedge u)\cdot\vol\cdot s$ for all $u\in V$ and $s\in S$;
\item $(\widetilde\psi(u)v)\cdot s=2(\varphi\wedge u\wedge v)\cdot\vol\cdot s$ for all $u,v\in V$ and $s\in S$.
\end{enumerate}
We also remark that $\widetilde\psi$ is an $\fh_{\varphi}$-equivariant
map with the kernel $\Pi$ and image $\fh_{\varphi}$.

\subsection{Integrability of the infinitesimal deformations}
\label{sec:integr-deform}

In this section we construct a filtered
deformation $\fg$ for any of the nontrivial $\fh$-invariant elements in
$H^{2,2}(\fa_-,\fa)$. Our description of $\fg$ will be very
explicit and rely on a direct check of the Jacobi identities.
To describe the Lie superalgebra structure of $\fg$, it is convenient to introduce a formal parameter $t$ which keeps track of the order of the deformation. In particular, the original graded Lie superalgebra structure on a subalgebra $\fa=V\oplus S\oplus\fh$ of the Poincaré
superalgebra $\fp=V\oplus S\oplus\fso(V)$ has order $t^0$ whereas the infinitesimal deformation has order $t$. 

From Proposition~\ref{thm:filt1}, Proposition~\ref{prop:cohgroups}, and Lemma~\ref{lem:case1}, Lemma~\ref{lem:case2}
we know that there are two different families of non-trivial filtered deformations $\fg$. The first family has $\varphi=0$, $a^2+b^2\neq 0$ and $\fh=\fso(V)$, that is $\fa=\fp$. In this case $\gamma^{\Phi}:\odot^2 S\to\fso(V)$ is surjective and 
by (2) of Proposition~\ref{prop:cohgroups} the filtered Lie superalgebra $\fg$ has the brackets of the form
\begin{equation}
\label{eq:bracket1}
    \begin{aligned}[m]
    [A,v]&=Av+t\lambda(A,v)\\
    [A,s]&=\sigma(A)s\\
    [A,B]&=AB-BA
  \end{aligned}
    \qquad\qquad
  \begin{aligned}[m]
    [v,w] &= t^2\delta(v,w)\\
    [v,s] &= t \beta^{\Phi}(v,s) = tv \cdot (a+b\vol)\cdot s\\
    [s,s] &= \kappa(s,s) + t \gamma^{\Phi}(s,s)~,
    \end{aligned}
\end{equation}
where $A, B\in\fso(V)$, $s\in S$, $v,w\in V$, for some maps
$\lambda:\fso(V)\otimes V\to\fso(V)$ and $\delta:\wedge^2 V\to
\fso(V)$ to be determined. In other words the brackets on $V\otimes S$
and $\odot ^2S$ are respectively given by $\beta^{\Phi}$ and
$\gamma^{\Phi}$ and we can always assume $\alpha=0$ without any loss
of generality.

The second family has $\varphi\neq 0$, $a=b=0$ and $\fh$ is a Lie
subalgebra of the stabiliser
$\fh_{\varphi}=\fso(V)\cap\stab(\varphi)$, see (iii) of
Proposition~\ref{prop:cohgroups}. We recall that
$\varphi\in\wedge^1V$ can be either spacelike, timelike or
lightlike. In this case the bracket on $\odot ^2S$ is simply given by
the Dirac current and the filtered Lie superalgebra $\fg$ has the form
\begin{equation}
\label{eq:bracket2}
  \begin{aligned}[m]
    [A,v]&=Av+t\lambda(A,v)\\
    [s,s] &= \kappa(s,s)\\
    [A,s]&=\sigma(A)s\\
    [A,B]&=AB-BA
  \end{aligned}
  \qquad\qquad
  \begin{aligned}[m]
   [v,w] &=t\alpha(v,w)+t^2\delta(v,w)=t\pi^{\alpha}(\partial\widetilde\psi)(v,w)+t^2\delta(v,w)\\
   &= t\widetilde\psi(v)w-t\widetilde\psi(w)v+t^2\delta(v,w)\\
   [v,s] &=t\beta(v,s)= t \beta^{\Phi}(v,s)+t\pi^{\beta}(\partial\widetilde\psi)(v,s) \\
   &= -\tfrac{1}{2} t(v\cdot\varphi+3\varphi\cdot v)\cdot\vol\cdot s+t\widetilde\psi(v)s~,
  \end{aligned}
\end{equation}
where $A,B\in\fh$, $s\in S$, $v,w\in V$, for some maps $\lambda:\fh\otimes V\to\fh$ and $\delta:\wedge^2 V\to \fh$ to be determined.

To go through all the Jacobi identities
systematically, we use the notation $[ijk]$ for $i,j,k=0,1,2$ to
denote the identity involving $X \in \fa_{-i}$, $Y \in \fa_{-j}$ and
$Z \in \fa_{-k}$. We first consider the second case \eqref{eq:bracket2}, which is slightly more involved, and
claim that the Jacobi identities are satisfied if we set both $\lambda$ and $\delta$ 
to be zero. To show this, it is first convenient to note that $[V,V]\subset V$, $[V,S]\subset S$ and $[S,S]\subset V$
and prove that the putative bracket operations restricted on $V\oplus S$ satisfy  the Jacobi identities. We have:
\begin{itemize}
\item the $[112]$ identity is satisfied by
  virtue of the characterization (i) of $\widetilde\psi$, the $\fh_{\varphi}$-equivariance of the Dirac current
	and the first cocycle condition \eqref{eq:cc1};
	\item the $[111]$ identity is satisfied by
  virtue of the characterization (i) of $\widetilde\psi$
	and the second cocycle condition \eqref{eq:cc2};
\item the $[122]$ identity is satisfied provided
\begin{equation}
\label{eq:Jac122}
[\beta_v,\beta_w]s-\beta_{\alpha(v,w)}s=0\;,
\end{equation}
for all $v, w\in V$ and $s\in S$;
\item the $[222]$ identity is satisfied provided 
\begin{equation}
\label{eq:Jac222}
\mathfrak{S}(\alpha(u,\alpha(v,w)))=0\;,
\end{equation}
where $\mathfrak{S}$ is the cyclic sum on $u,v,w\in V$.
\end{itemize}
Now using characterization (iv) of $\widetilde\psi$ one can check that 
\begin{equation*}
  \eta(x,\widetilde\psi(u)\widetilde\psi(v)w) = 4
  \eta(\imath_{v}\imath_{w}\imath_{\varphi}\vol,\imath_x\imath_u\imath_\varphi\vol)
  = \eta(x,\widetilde\psi(\widetilde\psi(w)v)u)
\end{equation*}
for all $u,v,w,x\in V$, from which
\begin{equation*}
  \begin{split}
    \alpha(u,\alpha(v,w)) &= \widetilde\psi(u) \widetilde\psi(v)w -
    \widetilde\psi(u) \widetilde\psi(w)v     + \widetilde\psi
    (\widetilde\psi(w)v) u - \widetilde\psi (\widetilde\psi(v) w )u\;\\
    &=2\widetilde\psi(u)\widetilde\psi(v)w-2\widetilde\psi(u)\widetilde\psi(w)v\\
    &=4\widetilde\psi(u)\widetilde\psi(v)w
  \end{split}
\end{equation*}
and $\mathfrak{S}(\alpha(u,\alpha(v,w)))=4\mathfrak{S}(\widetilde\psi(u)\widetilde\psi(v)w)=0$ by characterization (iii) of $\widetilde\psi$. This is the [222] Jacobi identity \eqref{eq:Jac222}.
On the other hand, for all $v,w\in V$ and $s\in S$ we have
\begin{equation*}
  \begin{split}
    \beta_v\beta_ws&=-\tfrac{1}{2}\beta_v((w\cdot\varphi+3\varphi\cdot w)\cdot\vol\cdot s)+\beta_v(\widetilde\psi(w)(s))\\
    &=-(\varphi\wedge v-2\eta(\varphi,v))\cdot(\varphi\wedge
    w-2\eta(\varphi,w))\cdot s
    -(\varphi\wedge v-2\eta(\varphi,v))\cdot\vol\cdot\widetilde\psi(w)s\\
    &\;\;\;\;-\vol\cdot\widetilde\psi(v)(\varphi\wedge
    w-2\eta(\varphi,w))\cdot s+\widetilde\psi(v)\widetilde\psi(w)s
  \end{split}
\end{equation*}
and therefore, repeatedly using
equations~\eqref{eq:bivectorproducts} and the fact that
$\widetilde\psi(u)\varphi=0$ for all $u\in V$, also
\begin{equation}
  \label{eq:commbeta}
  \begin{split}
    [\beta_v,\beta_w]s
    &=-[\varphi\wedge v,\widetilde\psi(w)]\vol\cdot s
    +[\varphi\wedge w,\widetilde\psi(v)]\vol\cdot s\\
    &\;\;\;\;-[\varphi\wedge v,\varphi\wedge w]s+[\widetilde\psi(v),\widetilde\psi(w)]s\\
    &=(\varphi\wedge\widetilde\psi(w)v)\cdot\vol\cdot s - (\varphi\wedge\widetilde\psi(v)w)\cdot\vol\cdot s
    \\
    &\;\;\;\;-2\eta(\varphi,\varphi)v\wedge w\cdot s+2\eta(\varphi,v)\varphi\wedge w\cdot s-2\eta(\varphi,w)\varphi\wedge v\cdot s\\
    &\;\;\;\;+[\widetilde\psi(v),\widetilde\psi(w)]s\;.
  \end{split}
\end{equation}
In a similar way we can prove:
\begin{equation}
  \label{eq:beta-alpha}
  \beta_{\alpha(v,w)}s =
  (\varphi\wedge\widetilde\psi(w)v)
  \cdot\vol\cdot s-(\varphi\wedge\widetilde\psi(v)w)
  \cdot\vol\cdot s + \widetilde\psi(\widetilde\psi(v)w)s -
  \widetilde\psi(\widetilde\psi(w)v)s\;.
\end{equation}
In summary we use \eqref{eq:commbeta} and \eqref{eq:beta-alpha}, together with characterizations (v) and (vi) of $\widetilde\psi$, to arrive at:
\begin{equation*}
  \begin{split}
    [\beta_v,\beta_w]s-\beta_{\alpha(v,w)}s&=-2\eta(\varphi,\varphi)v\wedge w\cdot s+2\eta(\varphi,v)\varphi\wedge w\cdot s-2\eta(\varphi,w)\varphi\wedge v\cdot s\\
    &\;\;\;\;+[\widetilde\psi(v),\widetilde\psi(w)]s-\widetilde\psi(\widetilde\psi(v)w)s
    +\widetilde\psi(\widetilde\psi(w)v)s\\
    &=-2\eta(\varphi,\varphi)v\wedge w\cdot s+2\eta(\varphi,v)\varphi\wedge w\cdot s-2\eta(\varphi,w)\varphi\wedge v\cdot s\\
    &\;\;\;\;-[\varphi\wedge v,\varphi\wedge w]s-2\varphi\cdot(\varphi\wedge v\wedge w)s-2(\varphi\wedge v\wedge w)\cdot\varphi\cdot s\\
    &=0\;,
  \end{split}
\end{equation*}
proving the $[122]$ Jacobi identity \eqref{eq:Jac122}.

Let $\fg_-=(V\oplus S,[-,-])$ be the filtered Lie superalgebra
structure on $V\oplus S$ we have just described. Note that the Lie
bracket of $\fg_-$ is defined in terms of $\widetilde\psi$, Clifford
multiplication, Dirac current of spinors and the vector $\varphi$, so
that the stabilizer $\fh_{\varphi}=\fso(V)\cap\stab(\varphi)$ of
$\varphi$ in $\fso(V)$ acts naturally on $\fg_-$ by outer derivations.
It is then clear from \eqref{eq:bracket2} that, for any subalgebra
$\fh$ of $\fh_{\varphi}$, the semidirect sum $\fg=\fh\inplus \fg_-$ is
the required filtered deformation of $\fa=V\oplus S\oplus \fh$.

We now consider the first case \eqref{eq:bracket1} and set $\lambda$ to be zero. We have:
\begin{itemize}
\item the $[000]$ identity is satisfied since $\fso(V)$ is a Lie
  algebra;
\item the $[001]$ and $[002]$ identities are satisfied because
  $S$ and $V$ are $\fso(V)$-modules;
\item the $[011]$ and $[012]$ identities are
  satisfied because the $[SS]$, $[SV]$ Lie brackets are
  $\fso(V)$-equivariant;
\item the $[111]$ identity is satisfied by
  virtue of the second cocycle condition
  \eqref{eq:cc2};
\item the $[022]$ identity requires $\delta:\wedge^2 V\to \fso(V)$ to be $\fso(V)$-equivariant;
\item the $[222]$ identity is satisfied provided
  \begin{equation}
    \label{eq:222coc}
    \mathfrak{S}(\delta(v,w)u)=0\;,
  \end{equation}
  where $\mathfrak{S}$ is the cyclic sum on $v,w,u\in V$;
\item the $[122]$ identity is satisfied provided
  \begin{equation}
    \label{eq:122coc}
    \delta(v,w)s=[\beta^{\Phi}_v,\beta^{\Phi}_w]s\;,
  \end{equation}
  for all $v,w\in v$ and $s\in S$;
\item the $[112]$ identity has a component of order $t$, which is satisfied by virtue 
  of the first cocycle condition \eqref{eq:cc1} and one of order $t^2$, which reads
  \begin{equation}
    \label{eq:112coc}
    \delta(v,\kappa(s,s))=2\gamma^{\Phi}(\beta^{\Phi}_vs,s)\;,
  \end{equation}
  for all $v\in V$ and $s\in S$;
\end{itemize}
Since $\wedge^2 V$ is an irreducible $\fso(V)$-representation of
complex type, we have that the $[022]$ Jacobi identity is satisfied if
and only if there exist $r, r'\in\mathbb R$ such that
\begin{equation*}
  \delta(v,w)u=r(\eta(v,u)w-\eta(w,u)v)+r'\star(v\wedge w\wedge u)\;,  
\end{equation*}
for all $v,w,u\in V$. However it is easy to see that \eqref{eq:222coc}
implies $r'=0$.

We will now show that \eqref{eq:122coc} and \eqref{eq:112coc} hold
true for $r=4(a^2+b^2)$. Indeed:
\begin{equation*}
  \begin{split}
    \delta(v,w)s&=\tfrac{r}{4}(v\cdot w\cdot s-w\cdot v\cdot s)\;,\\
    [\beta^{\Phi}_v,\beta^{\Phi}_w] s&=v\cdot (a+b\vol)\cdot w\cdot
    (a+b\vol)\cdot s - w\cdot (a+b\vol)\cdot v\cdot (a+b\vol)\cdot s\\
    &= (a^2+b^2)(v\cdot w\cdot s-w\cdot v\cdot s)\;,
\end{split}
\end{equation*}
for all $v,w\in V$, $s\in S$, whereas 
\begin{equation*}
  \begin{split}
    \eta(\delta(v,\kappa(s,s))u,w)&=r(\eta(v,u)\eta(\kappa(s,s),w)-\eta(\kappa(s,s),u)\eta(v,w))\\
    &=r(\eta(v,u)\left< s,w\cdot s\right>-\eta(v,w)\left< s,u\cdot s\right>)\;,\\
    2\eta(\gamma^{\Phi}(\beta_v^{\Phi}s,s)u,w)&=-2\eta(\kappa(\beta_v^{\Phi}s,\beta^{\Phi}_u s),w)-2\eta(\kappa(s,\beta^{\Phi}_u\beta_v^{\Phi}s),w)\\
    &=2(a^2+b^2)(\left< s,v\cdot w\cdot u\cdot s\right>-\left< s,w\cdot u\cdot v\cdot s\right>)\\
    &=4(a^2+b^2)(\eta(v,u)\left< s,w\cdot s\right>-\eta(v,w)\left< s,u\cdot s\right>)\;,
  \end{split}
\end{equation*}
for all $v,w,u\in V$, $s\in S$.

\subsection{Summary}
\label{sec:summary}

We summarise the results of Sections~\ref{sec:preliminariesA},
\ref{sec:cohomologyA} and \ref{sec:integr-deform} in the following

\begin{theorem}
  \label{thm:final}
  There are exactly two families of nontrivial filtered deformations
  $\fg=\fg_{\bar 0}\oplus \fg_{\bar 1}$ of $\mathbb Z$-graded
  subalgebras $\fa=V\oplus S\oplus \fh$ of the Poincaré superalgebra
  $\fp=V\oplus S\oplus \fso(V)$, which we now detail:
  \begin{enumerate}
  \item In this case $\fh=\fso(V)$, there exist $a,b\in\mathbb R$ such
    that $a^2+b^2\neq 0$ and the Lie brackets of $\fg$ are given by
    \begin{equation}
      \label{eq:thm-filtdef1}
      \begin{aligned}[m]
        [A,v]&=Av\\
        [A,s]&=\sigma(A)s\\
        [A,B]&=AB-BA
      \end{aligned}
      \qquad\qquad
      \begin{aligned}[m]
        [v,w] &= 4(a^2+b^2)v\wedge w\\
        [v,s] &=v \cdot (a+b\vol)\cdot s\\
        [s,s] &= \kappa(s,s) + \gamma^{(a,b)}(s,s)~,
      \end{aligned}
    \end{equation}
    where $v,w\in V$, $s\in S$, $A,B\in\fso(V)$ and
    $\gamma^{(a,b)}(s,s)\in\fso(V)$ is defined by
    \begin{equation*}
      \gamma^{(a,b)}(s,s)v=-2\kappa(s, v\cdot(a+b\vol)\cdot s)\;;
    \end{equation*}
  \item In this case there exists a nonzero $\varphi\in V$, $\fh$ is
    any Lie subalgebra of the stabiliser
    $\fh_{\varphi}=\fso(V)\cap\stab(\varphi)$ of $\varphi$ in
    $\fso(V)$ and the Lie brackets of $\fg$ are given by
    \begin{equation}
      \label{eq:thm-filtdef2}
      \begin{aligned}[m]
        [A,v]&=Av\\
        [A,s]&=\sigma(A)s\\
        [A,B]&=AB-BA
      \end{aligned}
      \qquad\qquad
      \begin{aligned}[m]
        [v,w]&= \widetilde\psi(v)w-\widetilde\psi(w)v\\
        [v,s] &=-\frac{1}{2}(v\cdot\varphi+3\varphi\cdot v)\cdot\vol\cdot s+\widetilde\psi(v)s\\
        [s,s] &= \kappa(s,s)~,
      \end{aligned}
    \end{equation}
    where $v,w\in V$, $s\in S$, $A,B\in\fh$ and
    $\widetilde\psi(v)\in\fh_{\varphi}$ is defined by
    $\widetilde\psi(v)=2\imath_v\imath_\varphi\vol$. In particular
    $\fg_-=V\oplus S$ is an ideal of $\fg$ and $\fg=\fh\inplus \fg_-$ is
    the semidirect sum of $\fh$ and $\fg_-$ ($\fh$ acts on $\fg_-$ by
    restricting the vector and spinor representations of $\fso(V)$).
  \end{enumerate}
  Note that the associated homogeneous Lorentzian manifolds $(M=G/H,g)$,
  $Lie(G)=\fg_{\bar 0}$, $Lie(H)=\fh$  always admit a reductive
  decomposition $\fg_{\bar 0}=\fh\oplus V$. In the first family $(M,g)$
  is locally isometric to $\mathrm{AdS}_4$ whereas in the second family
  the geometry is that of a Lie group with a bi-invariant metric, more
  precisely:
  \begin{enumerate}[label=(\roman*)]
  \item If $\varphi$ is spacelike then $\fh_{\varphi}\simeq \fso(1,2)$
    and $(M,g)$ is locally isometric to $\mathrm{AdS}_3\times\mathbb R$;
  \item If $\varphi$ is timelike then $\fh_{\varphi}\simeq \fso(3)$ and
    $(M,g)$ is locally isometric to $\mathbb R\times \Sph^3$;
  \item If $\varphi$ is lightlike then  $\fh_{\varphi}\simeq
    \fso(2)\inplus \mathbb R^2$ and we have the so-called Nappi-Witten
    group \cite{NW}, a central extension of the Lie group of Euclidean
    motions of the plane. Explicitly, if we choose an $\eta$-Witt basis
    for $V$ with $\varphi=e_{+}$, then the only nonzero Lie brackets of
    the Lie algebra of the Nappi-Witten group are:
    \begin{equation*}
      [e_-,e_1]=4e_2\;,\qquad[e_-,e_2]=-4e_1\;,\qquad[e_1,e_2]=-4e_+\;.
    \end{equation*}
  \end{enumerate}
\end{theorem}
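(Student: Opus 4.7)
The plan is to assemble the classification in three stages, relying on results already established earlier in the excerpt. \emph{First}, Proposition~\ref{thm:filt1} reduces the problem to cohomology: any filtered deformation $\fg$ of $\fa = V \oplus S \oplus \fh$ is determined up to filtered isomorphism by the $\fh$-invariant class $[\mu|_{\fa_-\otimes\fa_-}] \in H^{2,2}(\fa_-,\fa)^\fh$, and the degree-$4$ component $\delta$ is then uniquely determined since $H^{4,2}(\fa_-,\fa)=0$. \emph{Second}, Proposition~\ref{prop:cohgroups} describes $H^{2,2}(\fa_-,\fa)$ in terms of a triple $\Phi=(a,b,\varphi)$ and establishes the dichotomy that any nontrivial $\fh$-invariant class must have either $\varphi=0$ with $a^2+b^2 \neq 0$ (case~1) or $\varphi \neq 0$ with $a=b=0$ (case~2); Lemmas~\ref{lem:case1} and \ref{lem:case2} then pin down $\fh$, forcing $\fh=\fso(V)$ in case~1 and $\fh \subset \fh_\varphi$ in case~2, and provide the canonical cocycle representative via $\widetilde\psi(u)=2\imath_u\imath_\varphi\vol$ in the latter.

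The \emph{third} stage is to verify the Jacobi identities for the proposed Lie brackets \eqref{eq:thm-filtdef1} and \eqref{eq:thm-filtdef2} and thereby confirm integrability of the infinitesimal deformation. I would label the identities $[ijk]$ according to the filtration degrees $i,j,k \in \{0,1,2\}$ of the three arguments. In case~2, setting $\lambda=\delta=0$, the identities $[000]$, $[001]$, $[002]$, $[011]$, $[012]$ are automatic from $\fh$-equivariance; the $[111]$ identity is the second cocycle condition \eqref{eq:cc2}; the $[112]$ identity follows from the first cocycle condition \eqref{eq:cc1} together with the $\fh_\varphi$-equivariance of the Dirac current and characterization (i) of $\widetilde\psi$; the $[222]$ identity reduces to the vanishing of $\mathfrak{S}(\widetilde\psi(u)\widetilde\psi(v)w)$ for the cyclic sum, which is immediate from characterization (iv) expressing $\eta(x,\widetilde\psi(u)v)$ as the completely skew $4$-form $2\imath_x\imath_v\imath_u\imath_\varphi\vol$; and the $[122]$ identity follows from a direct Clifford computation that collapses using characterizations (v) and (vi) of $\widetilde\psi$ together with the bivector commutator identity $[\varphi \wedge v,\varphi \wedge w] = -2\eta(\varphi,\varphi) v\wedge w + 2\eta(\varphi,v)\varphi\wedge w - 2\eta(\varphi,w)\varphi\wedge v$. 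In case~1, $\fso(V)$-equivariance forces $\delta(v,w)u$ to be a linear combination of $\eta(v,u)w-\eta(w,u)v$ and $\star(v\wedge w\wedge u)$; the $[222]$ identity kills the second term, and the $[122]$ and $[112]$ identities then fix the remaining coefficient to be $4(a^2+b^2)$, as asserted.

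For the geometric identification, the associated homogeneous Lorentzian manifold $(M=G/H,g)$ with $\mathrm{Lie}(G)=\fg_{\bar 0}$ and $\mathrm{Lie}(H)=\fh$ inherits by construction the reductive decomposition $\fg_{\bar 0}=\fh \oplus V$. In case~1, the brackets $[v,w]=4(a^2+b^2)v\wedge w$ and $[A,v]=Av$ identify $\fg_{\bar 0}$ with $\fso(2,3)$, realising $M$ locally as $\AdS_4$. In case~2, the $V\otimes V$ bracket lies entirely in $V$ (since $\delta=0$), so $V$ itself carries a Lie algebra structure $[v,w]=\widetilde\psi(v)w-\widetilde\psi(w)v$; characterization (iv) of $\widetilde\psi$ shows that $\eta$ is ad-invariant on $V$, so $V$ is the Lie algebra of a four-dimensional Lie group with bi-invariant metric, and a direct comparison with the curvature tensor \eqref{eq:riemannLieGroup} identifies this with the corresponding maximally supersymmetric background of Section~\ref{sec:maxim-supersymm-back-3}. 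The three sub-cases for the causal type of $\varphi$ yield $\AdS_3\times\RR$, $\RR\times\Sph^3$, and $\NW_4$ respectively; in the lightlike case, evaluating $[v,w]=2\,(\imath_v\imath_w\imath_\varphi\vol)^\sharp$ on an $\eta$-Witt basis with $\varphi=\be_+$ recovers the asserted Nappi--Witten brackets. I expect the principal obstacle to be the verification of the $[122]$ Jacobi identity in case~2, which requires careful bookkeeping of Clifford products involving both $\varphi$ and the bivector $\widetilde\psi(u)$, and where the Dirac-current annihilation identities of Proposition~\ref{prop:clifford} are essential to eliminate apparently non-cancelling spinor-valued terms.
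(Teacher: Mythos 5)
Your proposal reproduces the paper's own strategy essentially verbatim: reduce to $\fh$-invariant classes in $H^{2,2}(\fa_-,\fa)$ via Proposition~\ref{thm:filt1}, invoke the $\Phi$-dichotomy of Proposition~\ref{prop:cohgroups} together with Lemmas~\ref{lem:case1}--\ref{lem:case2} to pin down $\fh$ and a canonical cocycle, then verify integrability by checking the $[ijk]$ Jacobi identities with $\lambda=\delta=0$ in case~2 and determining $\delta$ by $\fso(V)$-equivariance in case~1, finishing with the geometric identification. Two small bookkeeping points: in the paper's $[122]$ verification for case~2 the cancellation comes from characterizations (v)--(vi) of $\widetilde\psi$, the bivector-product identities \eqref{eq:bivectorproducts} and $\widetilde\psi(u)\varphi=0$ rather than from Proposition~\ref{prop:clifford} as you anticipate, and the bracket on $V$ in case~2 works out to $[v,w]=-4\,\imath_v\imath_w\imath_\varphi\vol$ (not with coefficient $2$), which is what produces the factors of $4$ in the Nappi--Witten structure constants.
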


\section{Conclusions}
\label{sec:conclusions}

In this paper, we have considered the supersymmetries of rigid supersymmetric
field theories on Lorentzian four-manifolds from the viewpoint of their Killing superalgebras.

We showed that the relevant Killing spinor equations, which we
identify with the defining condition for bosonic supersymmetric
backgrounds of minimal off-shell supergravity in four dimensions,
admit a cohomological interpretation in terms of the Spencer group
$H^{2,2}(\fp_-,\fp)$ of the $N{=}1$ Poincaré superalgebra
$\fp$ in four dimensions. This result is in analogy with a similar
result in eleven dimensions
\cite{Figueroa-O'Farrill:2015efc,Figueroa-O'Farrill:2015utu}.

We then gave a self-contained proof of the fact that supergravity
Killing spinors generate a Lie superalgebra, and that this Lie
superalgebra is a filtered subdeformation of $\fp$. Finally we
classified, up to local isometry, the geometries admitting the maximum
number of Killing spinors: Minkowski space, $\AdS_4$ and the
nonabelian Lie groups with a Lorentzian bi-invariant metric, namely
$\AdS_3\times\RR$, $\RR \times \Sph^3$ and the Nappi--Witten group
$\NW_4$. Our approach here is based on two independent arguments. In
Section~\ref{sec:zero-curvature-eqns} we solved the flatness equations
for the connection defining the Killing spinor equations and described
the corresponding Lorentzian geometries. In
Section~\ref{sec:maxim-fil-def} we used again Spencer cohomology
techniques to describe the filtered subdeformations of $\fp$ with
maximum odd dimension and recovered in this way the Killing
superalgebras of the maximally supersymmetric backgrounds.

None of the geometries in our classification are new.  The novelty in
this paper lies in our approach, which \emph{systematises} the
search for backgrounds on which one can define rigid supersymmetric
field theories by mapping it to an algebraic problem on which we can
bring to bear representation-theoretic techniques.  In forthcoming
work, we shall apply these techniques to a broader class of field theories with rigid supersymmetry in higher dimensions.

\section*{Acknowledgments}

The research of JMF is supported in part by the grant ST/L000458/1
``Particle Theory at the Tait Institute'' from the UK Science and
Technology Facilities Council.  That grant also funded a visit of PdM
to Edinburgh to start this collaboration.  The research of AS is fully
supported by a Marie-Curie research fellowship of the ''Istituto
Nazionale di Alta Matematica'' (Italy).  We are grateful to these
funding agencies for their support.

\appendix

\section{Conventions and spinorial algebraic identities}
\label{sec:conventions}

In this appendix we define our conventions for Clifford algebras,
spinors and derive a number of useful algebraic identities we will
have ample opportunity to apply in the bulk of the paper.

\subsection{Clifford algebra conventions}
\label{sec:cliff-algebra-conv}

Let $(V,\eta)$ be a four-dimensional Lorentzian vector space, by which
we mean that $\eta$ has signature $-2$ (``mostly minus'').  We may
choose an $\eta$-orthonormal basis $\be_\mu =
(\be_0,\be_1,\be_2,\be_3)$ with $\eta_{\mu\nu} = \eta(\be_\mu,\be_\nu) =
\diag(+1,-1,-1,-1)$.  Such a basis defines an isomorphism $(V,\eta)
\cong \RR^{1,3}$.

The Clifford algebra $\Cl(V)$ associated to $(V,\eta)$ is the real,
associative, unital algebra generated by $V$ (and the identity $\1$)
subject to the Clifford relation (please notice the sign!)
\begin{equation}
  \label{eq:clifford}
  v^2 = - \eta(v,v) \1 \qquad\forall~v\in V~.
\end{equation}

As a vector space, $\Cl(V) \cong \Lambda V = \bigoplus_{p=0}^4
\wedge^p V$.  If $v \in V$ and $\phi \in \wedge^p V$, their Clifford
product, denoted by $\cdot$, is given by
\begin{equation}
  \label{eq:cliffordproduct}
  v \cdot \phi = v \wedge \phi - \iota_{v^\flat} \phi~,
\end{equation}
where $v^\flat \in V^*$ is the dual covector defined by the inner
product: $v^\flat(w) = \eta(v,w)$, for all $w \in V$.  We will often
drop the superscript $\flat$ if it is unambiguous to do so.  The Clifford
algebra is not commutative:
\begin{equation}
  \label{eq:productclifford}
  \phi \cdot v = (-1)^p \left( v \wedge \phi + \iota_v \phi\right)~.
\end{equation}
Continuing in this way we may derive the Clifford product of $\phi \in
\wedge^p V$ with bivectors:
\begin{equation}
  \label{eq:bivectorproducts}
  \begin{split}
    (v \wedge w) \cdot \phi &= v \wedge w \wedge \phi + \iota_v \iota_w 
    \phi - v \wedge \iota_w \phi + w \wedge \iota_v \phi\\
    (v \wedge w) \cdot \phi &= v \wedge w \wedge \phi + \iota_v \iota_w 
    \phi + v \wedge \iota_w \phi - w \wedge \iota_v \phi~.
  \end{split}
\end{equation}

Let us introduce the volume element $\vol = \be_0 \wedge \be_1 \wedge \be_2
\wedge \be_3 \in \wedge^4 V$.  It obeys
\begin{equation*}
  \vol^2 = -\1 \qquad\text{and}\qquad \vol \cdot \phi = (-1)^p \phi
  \cdot \vol~,
\end{equation*}
for $\phi \in \wedge^p V$.  In particular, it is not central.
Clifford multiplication by the volume element agrees (up to a sign)
with Hodge duality:
\begin{equation}
  \label{eq:Hodge}
  \star \phi = (-1)^{p(p+1)/2} \phi \cdot \vol~,
\end{equation}
for $\phi \in \wedge^pV$.  It follows that $\star^2 = (-1)^{p+1}$ on
$\wedge^p V$.  In particular, it is a complex structure on bivectors,
as expected.

The Lie algebra $\fso(V)$ of $\eta$-skewsymmetric endomorphisms of $V$
is isomorphic, as a vector space, to $\wedge^2 V$.  If $v \wedge w \in
\wedge^2 V$, then the corresponding endomorphism is defined by
\begin{equation}
  \label{eq:soVasL2}
  (v \wedge w) (u) = \iota_{u^\flat} (v \wedge w) = \eta(u,v) w -
  \eta(u,w) v ~.
\end{equation}
We embed $\fso(V)$ in $\Cl(V)$ by sending
\begin{equation}
  \label{eq:soVinClV}
  v \wedge w \mapsto \tfrac14 [v,w] = \tfrac14 (v \cdot w - w \cdot v)~.
\end{equation}
Indeed, one checks that the Clifford commutator
\begin{equation*}
   \left[\tfrac14 [v,w],u\right] = \eta(u,v) w - \eta(u,w) v~,
\end{equation*}
agrees with equation~\eqref{eq:soVasL2}.

\subsection{Clifford module conventions}
\label{sec:cliff-module-conv}

The Clifford algebra $Cl(V)$ is isomorphic, as a real associative
algebra, to the algebra $\Mat_4(\RR)$ of $4\times 4$ real matrices.
Being simple, this algebra has a unique (up to isomorphism) nontrivial
irreducible module, which is real and four-dimensional.  Let $S$
denote the unique (up to isomorphism) irreducible $\Cl(V)$-module, so
that $\Cl(V) \cong \End S$.  Restricting to $\fso(V) \subset \Cl(V)$,
we obtain a representation $\sigma$ of $\fso(V)$ on $S$:
\begin{equation}
  \label{eq:spinrep}
  \sigma(v \wedge w) s = \tfrac14 [v,w] \cdot s~.
\end{equation}

On $S$ we have a symplectic structure $\left<-,-\right>$ realising one
of the canonical anti-involutions of the  Clifford algebra:
\begin{equation}
  \label{eq:symplectic}
  \left<v \cdot s_1, s_2 \right> = -  \left<s_1, v\cdot s_2 \right>~,
\end{equation}
for all $v \in V$ and $s_1,s_2 \in S$.  It follows that it is also
$\fso(V)$-invariant:
\begin{equation}
  \left<\sigma(A) s_1, s_2\right> = - \left<s_1, \sigma(A) s_2\right>~,
\end{equation}
for all $A \in \fso(V)$.  More generally, it follows from repeated
application of equation~\eqref{eq:symplectic}, that if $\phi \in
\wedge^p V$, then
\begin{equation}
  \label{eq:symmetry}
  \left<\phi \cdot s_1, s_2 \right> = (-1)^{p(p+1)/2}  \left<s_1,
    \phi\cdot s_2 \right>~.
\end{equation}
We can therefore decompose $\End S \cong \odot^2 S \oplus \wedge^2 S$
into representations of $\fso(V)$ as
\begin{equation}
  \label{eq:EndSasLambda}
  \odot^2 S \cong  \wedge^1 V \oplus \wedge^2 V  \cong V \oplus
  \fso(V) \qquad\text{and}\qquad
  \wedge^2 S \cong \wedge^0V \oplus \wedge^3 V \oplus \wedge^4 V
  \cong 2 \RR \oplus V~.
\end{equation}

Associated with $s \in S$ there is a vector $\kappa$, called the
\emph{Dirac current} of $s$, that is defined by
\begin{equation}
  \label{eq:DiracCurrent}
  \eta(\kappa,v) = \left<s, v\cdot s\right>~,
\end{equation}
for all $v \in V$.  There is also a \emph{Dirac $2$-form}
$\omega^{(2)}$ defined by
\begin{equation}
  \label{eq:Dirac2form}
  \omega^{(2)}(v,w) = \left<s, v \cdot w \cdot s\right>~.
\end{equation}
(One checks that indeed $\omega^{(2)}(v,w) = - \omega^{(2)}(w,v)$.)
In addition we have a second $2$-form $\widetilde\omega^{(2)}$ and a
$3$-form $\omega^{(3)}$ defined by
\begin{equation}
  \label{eq:otherforms}
  \widetilde\omega^{(2)}  (v,w) = \left<s, v \cdot w \cdot \vol \cdot
    s\right> \qquad\text{and}\qquad \omega^{(3)}(u,v,w) = \left<s, u
    \cdot v \cdot w \cdot \vol \cdot s\right>~.
\end{equation}
It follows that
\begin{equation}
  \widetilde\omega^{(2)} = - \star \omega^{(2)} \qquad\text{and}\qquad
  \omega^{(3)} = - \star \omega^{(1)}~,
\end{equation}
where $\omega^{(1)} = \kappa^\flat$ is the one-form dual to the Dirac current.

\subsubsection{Gamma matrices}
\label{sec:gamma-matrices}

We denote the endomorphism of $S$ corresponding to $\be_\mu \in V$ by
$\Gamma_\mu$ and note that the Clifford relation \eqref{eq:clifford} turns into
the well-known
\begin{equation}
  \label{eq:gammamatrices}
  \Gamma_\mu \Gamma_\nu +   \Gamma_\nu \Gamma_\mu = - 2 \eta_{\mu\nu} \1~,
\end{equation}
where we let $\1$ denote also the identity endomorphism of $S$.  The
vector space isomorphism  $\Cl(V) \cong \Lambda V$ defines a vector
space isomorphism $\End S \cong \Lambda V$ and this in turns defines
the standard $\RR$-basis of $\End S$:
\begin{equation}
  \1 \qquad \Gamma_\mu \qquad \Gamma_{\mu\nu} \qquad \Gamma_\mu
  \Gamma_5 \qquad \Gamma_5~,
\end{equation}
where we have introduced
$\Gamma_5=\Gamma_0 \Gamma_1 \Gamma_2 \Gamma_3$ as the endomorphism
corresponding to the volume element and
$\Gamma_{\mu\nu} = \tfrac12 [\Gamma_\mu,\Gamma_\nu]$.  In the same way
we define the totally skewsymmetric products $\Gamma_{\mu\nu\rho}$ and
$\Gamma_{\mu\nu\rho\sigma}$, which obey
\begin{equation}
  \label{eq:duality}  
    \Gamma_{\mu\nu\rho} = \epsilon_{\mu\nu\rho\sigma}
    \Gamma^\sigma\Gamma_5 \qquad \qquad
    \Gamma_{\mu\nu} \Gamma_5 = \tfrac12
    \epsilon_{\mu\nu\rho\sigma} \Gamma^{\rho\sigma} \qquad\qquad
    \Gamma_5 = - \tfrac1{4!} \epsilon_{\mu\nu\rho\sigma} \Gamma^{\mu\nu\rho\sigma}~,
\end{equation}
where $\epsilon_{0123} = +1$, we raise and lower indices with $\eta$
and where the Einstein summation convention is in force.  Some useful
identities involving $\epsilon_{\mu\nu\rho\sigma}$ are
\begin{equation}
  \label{eq:epseps}
  \tfrac16 \epsilon_{\mu\nu\rho\sigma} \epsilon^{\alpha\nu\rho\sigma} = - \delta_\mu^\alpha \qquad \tfrac12 \epsilon_{\mu\nu\rho\sigma} \epsilon^{\alpha\beta\rho\sigma} = - \left(\delta_\mu^\alpha
    \delta_\nu^\beta - \delta_\mu^\beta \delta_\nu^\alpha \right)~,
\end{equation}
and
\begin{equation}
  \label{eq:epseps2}
  \epsilon_{\mu\nu\rho\sigma} \epsilon^{\alpha\beta\gamma\sigma} = -
  \left(\delta_\mu^\alpha \delta_\nu^\beta \delta_\rho^\gamma - \delta_\mu^\alpha \delta_\nu^\gamma \delta_\rho^\beta + \delta_\mu^\beta \delta_\nu^\gamma \delta_\rho^\alpha - \delta_\mu^\beta \delta_\nu^\alpha \delta_\rho^\gamma +\delta_\mu^\gamma \delta_\nu^\alpha \delta_\rho^\beta - \delta_\mu^\gamma \delta_\nu^\beta \delta_\rho^\alpha \right)~,
\end{equation}
whereas some useful trace-like identities involving the $\Gamma_\mu$
are
\begin{equation}
  \label{eq:traces}
  \Gamma^\nu \Gamma_\mu \Gamma_\nu = 2 \Gamma_\mu
  \qquad\text{and}\qquad
  \Gamma^\rho \Gamma_{\mu\nu} \Gamma_\rho = 0~.
\end{equation}

Let $A \in \fso(V)$ be an $\eta$-skewsymmetric endomorphism of $V$.
Its matrix relative to an $\eta$-orthonormal basis $\be_\mu$ has
entries $A^\nu{}_\mu$ defined by
\begin{equation}
  A \be_\mu = \be_\nu A^\nu{}_\mu~,
\end{equation}
whose corresponding skew-symmetric bilinear form has entries
\begin{equation}
  \eta(\be_\nu, A \be_\mu) = A_{\nu\mu}~.
\end{equation}
This in turn gives rise to a bivector
$\tfrac12 A^{\nu\mu} \be_\mu \wedge \be_\nu$ and the map
$\fso(V) \to \wedge^2V$ thus defined is the inverse to the one in
equation~\eqref{eq:soVasL2}.  From equation~\eqref{eq:spinrep}, we see
that the spin representation $\sigma : \fso(V) \to \End S$ sends $A$
to
\begin{equation}
  \sigma(A) = \tfrac14 A^{\nu\mu} \Gamma_{\mu\nu} = - \tfrac14
  A^{\mu\nu} \Gamma_{\mu\nu}~.
\end{equation}

It is often convenient to introduce the notation $\sbar_1 s_2 =
\left<s_1, s_2\right>$ and hence to write the components of the Dirac
current and the Dirac $2$-form as
\begin{equation}
  \kappa^\mu = \sbar \Gamma^\mu s \qquad\text{and}\qquad
  \omega^{(2)}_{\mu\nu} = \sbar \Gamma_{\mu\nu} s~,
\end{equation}
and similarly for their (negative) duals
\begin{equation}
  \widetilde\omega^{(2)}_{\mu\nu} = \sbar \Gamma_{\mu\nu} \Gamma_5 s
  \qquad\text{and}\qquad
  \omega^{(3)}_{\mu\nu\rho} = \sbar \Gamma_{\mu\nu\rho}\Gamma_5 s~,
\end{equation}
which, using the relations \eqref{eq:duality}, can be expressed as
\begin{equation}
  \widetilde\omega^{(2)}_{\mu\nu} = \tfrac12
  \epsilon_{\mu\nu\rho\sigma} \omega^{(2)}{}^{\rho\sigma} \qquad\text{and}\qquad
  \omega^{(3)}_{\mu\nu\rho} = - \epsilon_{\mu\nu\rho\sigma} \kappa^\sigma~.
\end{equation}

\subsection{Spinorial identities}
\label{sec:spinorial-identities}

Let $s_1,s_2 \in S$.  The rank-one endomorphism $s_2\sbar_1$ defined
by $(s_2\sbar_1)(s) = ( \sbar_1 s ) s_2$ can be expressed in terms of
the standard basis for $\End S$ via the \emph{Fierz identity}
\begin{equation}
  \label{eq:Fierz12}
  s_2 \sbar_1 = \tfrac14 \left( ( \sbar_1 s_2 ) \1 -
    ( \sbar_1\Gamma^\mu s_2 ) \Gamma_\mu - \tfrac12
    ( \sbar_1\Gamma^{\mu\nu} s_2 ) \Gamma_{\mu\nu} -
    ( \sbar_1\Gamma^\mu\Gamma_5 s_2 ) \Gamma_\mu \Gamma_5 -
    ( \sbar_1 \Gamma_5 s_2 ) \Gamma_5 \right)~,
\end{equation}
which specialises when $s_1 = s_2 = s$ to
\begin{equation}
  \label{eq:Fierz}
  s\sbar = -\tfrac14 \kappa - \tfrac14 \omega^{(2)} = -\tfrac14 \left(\kappa^\mu \Gamma_\mu + \tfrac12
    \omega^{(2)}_{\mu\nu} \Gamma^{\mu\nu} \right)~.
\end{equation}

There are a number of algebraic identities relating a spinor $s$, its
Dirac current and Dirac $2$-form and their duals, which are collected in
the following

\begin{proposition}
  \label{prop:clifford}
  Let $s \in S$ and $\kappa$ be its Dirac current, $\omega^{(1)} =
  \kappa^\flat$, $\omega^{(2)}$ its Dirac $2$-form, 
  $\widetilde\omega^{(2)}= - \star \omega^{(2)}$ and
  $\omega^{(3)} = - \star (\kappa^\flat)$.  Then the following
  identities hold:
  \begin{multicols}{2}
    \begin{enumerate}[label=(\alph*)]
    \item $\kappa \cdot s = 0$
    \item $\omega^{(2)} \cdot s = 0$
    \item $\widetilde \omega^{(2)} \cdot s = 0$
    \item $\omega^{(3)} \cdot s = 0$
    \item $\eta(\kappa,\kappa) = 0$
    \item $\left(\omega^{(2)},\omega^{(2)}\right)_\eta = 0$
    \item $\left(\widetilde\omega^{(2)},\widetilde\omega^{(2)}\right)_\eta = 0$
    \item $\left(\omega^{(3)},\omega^{(3)}\right)_\eta = 0$
    \item $\iota_\kappa \omega^{(2)} = 0$
    \item $\iota_\kappa \widetilde\omega^{(2)} = 0$
    \item $\iota_\kappa \omega^{(3)} = 0$
    \item $\omega^{(1)} \wedge \omega^{(2)} = 0$
    \item $\omega^{(1)} \wedge \widetilde\omega^{(2)} = 0$
    \item $\omega^{(1)} \wedge \omega^{(3)} = 0$
    \end{enumerate}
  \end{multicols}
\end{proposition}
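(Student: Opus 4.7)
The plan rests on the Fierz identity \eqref{eq:Fierz}, $s\sbar = -\tfrac14(\kappa + \omega^{(2)})$ viewed as an element of $\Cl(V)\cong\End S$, together with three elementary facts: $\sbar s = 0$ and $\sbar\Gamma_5 s = 0$ by the sign in \eqref{eq:symmetry} applied at degrees $0$ and $4$; $\Gamma_5$ anticommutes with odd and commutes with even Clifford elements; and $\Gamma_5^2 = -1$, which makes $\{s,\Gamma_5 s\}$ linearly independent over $\RR$.

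First I would establish (a) and (b) by testing $s\sbar$ against both $s$ and $\Gamma_5 s$. The first gives $(\kappa + \omega^{(2)})\cdot s = 0$; the second, after pushing $\Gamma_5$ past $\kappa$ and $\omega^{(2)}$ using their parities and invoking invertibility of $\Gamma_5$, produces the independent relation $(-\kappa + \omega^{(2)})\cdot s = 0$. Adding and subtracting yields (b) and (a). Identities (c) and (d) are then immediate from $\widetilde\omega^{(2)}\cdot s = \Gamma_5(\omega^{(2)}\cdot s)$ and $\omega^{(3)}\cdot s = -\Gamma_5(\kappa\cdot s)$, using the Clifford identifications $\widetilde\omega^{(2)} = \omega^{(2)}\cdot\vol$ and $\omega^{(3)} = \kappa\cdot\vol$. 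Separating (a) from (b) via this $\Gamma_5 s$ trick is the main subtlety of the whole argument, since the Fierz identity by itself only delivers their sum.

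Next, the identities (e), (f), (i), and (l) I would extract by \emph{squaring}. The vanishing $\sbar s = \sbar\Gamma_5 s = 0$ gives $(s\sbar)^2 = 0 = (s\sbar\vol)^2$, i.e., $(\kappa+\omega^{(2)})^2 = 0$ and $(\omega^{(3)} + \widetilde\omega^{(2)})^2 = 0$ in $\Cl(V)$. Expanding the first and decomposing by form degree, using $\kappa^2 = -\eta(\kappa,\kappa)\in\wedge^0$, $\kappa\cdot\omega^{(2)} + \omega^{(2)}\cdot\kappa = 2\kappa\wedge\omega^{(2)}\in\wedge^3$, and the grade-decomposition of $(\omega^{(2)})^2$ into $\wedge^0\oplus\wedge^4$ (the $\wedge^2$ part vanishing by reversal-symmetry), produces the $\wedge^3$-identity (l) and a scalar relation $\eta(\kappa,\kappa)+(\omega^{(2)},\omega^{(2)})_\eta = 0$. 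Expanding the second square, using $\omega^{(3)} = \kappa\vol$ and $\widetilde\omega^{(2)} = \omega^{(2)}\vol$ and the anticommutation of $\vol$ with $\kappa$, produces the $\wedge^1$-identity $2\iota_\kappa\omega^{(2)} = 0$ (hence (i)) and the companion scalar $-\eta(\kappa,\kappa)+(\omega^{(2)},\omega^{(2)})_\eta = 0$. The two scalar relations combine to force both $\eta(\kappa,\kappa) = 0$ and $(\omega^{(2)},\omega^{(2)})_\eta = 0$, giving (e) and (f); (g) then follows from (f) because $\star$ is an anti-isometry on $2$-forms in Lorentzian signature, and (h) from (e) by the direct computation $(\omega^{(3)},\omega^{(3)})_\eta = -\eta(\kappa,\kappa)$ via \eqref{eq:epseps}.

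Finally, the remaining identities (j), (k), (m), and (n) reduce to (l), $\kappa^\flat\wedge\kappa^\flat = 0$, (i), and (e) respectively, via the standard Hodge duality formulas $v^\flat\wedge\star\omega = \pm\star(\iota_v\omega)$ and $\iota_v\star\omega = \pm\star(\omega\wedge v^\flat)$, together with $\widetilde\omega^{(2)} = -\star\omega^{(2)}$, $\omega^{(3)} = -\star\omega^{(1)}$, and (for (n)) $\omega^{(1)}\wedge\star\omega^{(1)} = \eta(\kappa,\kappa)\vol$. In summary, after the Fierz/$\Gamma_5 s$ trick and the two squarings, everything remaining is routine grade-separation in $\Cl(V)$ and Hodge-duality bookkeeping.
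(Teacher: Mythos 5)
Your proof is correct, but the two key steps take a genuinely different route from the paper's. For (a) and (b), the paper establishes $\kappa\cdot s=0$ by a direct gamma-matrix computation: it writes $\kappa^\rho\Gamma_\rho s = \Gamma_\rho\,s\sbar\,\Gamma^\rho s$, substitutes the Fierz identity, and invokes the trace identities $\Gamma^\nu\Gamma_\mu\Gamma_\nu=2\Gamma_\mu$ and $\Gamma^\rho\Gamma_{\mu\nu}\Gamma_\rho=0$ to land on $\kappa\cdot s=-\tfrac12\kappa\cdot s$; part (b) then drops out of $(s\sbar)s=0$ once (a) is known. Your $\Gamma_5 s$ trick replaces that trace manipulation with a single degree-parity argument, producing $(\kappa+\omega^{(2)})\cdot s=0$ and $(-\kappa+\omega^{(2)})\cdot s=0$ symmetrically; this is cleaner and avoids the trace identities altogether. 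For (e), (f), (i), (l), the paper proceeds identity-by-identity from (a) and (b) by direct symplectic pairings (e.g.\ $\eta(\kappa,\kappa)=\langle s,\kappa\cdot s\rangle=0$, $\omega^{(2)}(v,\kappa)=\langle s,v\cdot\kappa\cdot s\rangle=0$) and a separate gamma-matrix computation for (l), whereas your squarings $(\kappa+\omega^{(2)})^2=0$ and $(\omega^{(3)}+\widetilde\omega^{(2)})^2=0$ package these four identities into a single grade decomposition in $\Cl(V)$: the $\wedge^3$ and $\wedge^1$ parts give (l) and (i), and the two $\wedge^0$ parts combine to force (e) and (f) separately. The trade-off is that your route leans harder on Lorentzian Hodge-star sign bookkeeping ($\star$ an anti-isometry on $1$- and $2$-forms, $\vol^2=-\1$), which the paper sidesteps by working pointwise; but your argument is arguably more conceptual and yields (e) and (f) simultaneously rather than sequentially. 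Both handle (c), (d), (g), (h) and the wedge/contraction identities (j), (k), (m), (n) by the same $\vol$-twist and Hodge-duality bookkeeping. One small bonus of your squaring argument: it derives the two scalar relations directly from the grade expansion of $(\kappa+\omega^{(2)})(-\kappa+\omega^{(2)})$ without needing the explicit value $(\omega^{(3)},\omega^{(3)})_\eta=-\eta(\kappa,\kappa)$ that you mention at the end, which is a cleaner way to phrase that step.
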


\begin{proof}
  \begin{enumerate}[label=(\alph*)]
  \item This is equivalent to $\kappa^\rho \Gamma_\rho s = 0$.  Using
    the Fierz identity~\eqref{eq:Fierz},
    \begin{equation*}
      \begin{split}
        \kappa^\rho \Gamma_\rho s  &= \Gamma_\rho s \sbar \Gamma^\rho s \\
        &= \Gamma_\rho \left( -\tfrac14 \left(\sbar \Gamma^\mu s
            \Gamma_\mu + \tfrac12 \sbar \Gamma^{\mu\nu} s
            \Gamma_{\mu\nu} \right)\right) \Gamma^\rho s\\
        &= -\tfrac14 ( \sbar \Gamma^\mu s ) \Gamma_\rho \Gamma_\mu
        \Gamma^\rho s -\tfrac18 ( \sbar \Gamma^{\mu\nu} s ) \Gamma_\rho
        \Gamma_{\mu\nu} \Gamma^\rho s\\
        &= -\tfrac12 ( \sbar \Gamma^\mu s ) \Gamma_\mu s\\
        &= -\tfrac12 \kappa^\mu \Gamma_\mu s~,
      \end{split}
    \end{equation*}
    where we have used the trace identities~\eqref{eq:traces}.

  \item Using that $\sbar s = 0$, we see from the Fierz
    identity~\eqref{eq:Fierz} and part (a) that
    \begin{equation*}
      \omega^{(2)}_{\mu\nu} \Gamma^{\mu\nu} s = 0~.
    \end{equation*}

  \item This follows from $\widetilde\omega^{(2)} = \vol \cdot \omega^{(2)}$ and
    part (b).

  \item This follows from $\omega^{(3)} = - \vol \cdot \kappa$ and part (a).

  \item From (a) it follows that $\kappa$ is null:
    \begin{equation*}
      \eta(\kappa,\kappa) = \left<s, \kappa \cdot s \right>= 0~.
    \end{equation*}

  \item Similarly, from (b) it follows that $\omega^{(2)}$ is null:
    \begin{equation*}
      \left(\omega^{(2)},\omega^{(2)}\right)_\eta = \left<s, \omega^{(2)} \cdot s
      \right> = 0~.
    \end{equation*}

  \item This follows from the fact that $\omega^{(2)}$ is null and that
    Hodge duality is an isometry (up to sign).

  \item This follows from the fact that $\kappa$ is null and that
    Hodge duality is an isometry (up to sign).

  \item This is equivalent to $\omega^{(2)}(\kappa, v) = 0$ for all $v$, but
    \begin{equation*}
      \omega^{(2)}(v,\kappa) = \left<s, v \cdot \kappa \cdot s\right> = 0~,
    \end{equation*}
    where we have used (a) above.

  \item This follows from (a) and
    \begin{equation*}
      \widetilde\omega^{(2)}(v,\kappa) = \left<s, v \cdot \kappa \cdot \vol
        \cdot s \right>= - \left<s, v \cdot \vol \cdot \kappa \cdot  s
      \right> = 0~.
    \end{equation*}

  \item Again this follows from (a) and
    \begin{equation*}
      \widetilde\omega^{(2)}(u,v,\kappa) = \left<s, u\cdot v \cdot \kappa \cdot \vol
        \cdot s \right>= - \left<s, u \cdot v \cdot \vol \cdot \kappa \cdot  s
      \right> = 0~.
    \end{equation*}

  \item We prove the equivalent statement $\star (\omega^{(1)} \wedge \omega^{(2)}) = 0$:
    \begin{equation*}
      \begin{split}
        \epsilon_{\mu\nu\rho\sigma} \kappa^\nu \omega^{(2)}{}^{\rho\sigma} &=
        \epsilon_{\mu\nu\rho\sigma} \kappa^\nu \sbar
        \Gamma^{\rho\sigma}s\\
        &= 2 \kappa^\nu \sbar \Gamma_{\mu\nu} \Gamma_5 s\\
        &=2 \kappa^\nu \sbar \Gamma_\mu \Gamma_\nu \Gamma_5 s\\
        &=- 2 \sbar \Gamma_\mu \Gamma_5 \kappa^\nu \Gamma_\nu s = 0~,
      \end{split}
    \end{equation*}
    again using (a) above.

  \item Similar to the previous part, we prove that $\star (\omega^{(1)}
    \wedge \omega^{(2)}) = 0$:
    \begin{equation*}
      \begin{split}
        \epsilon_{\mu\nu\rho\sigma} \kappa^\nu \widetilde\omega^{(2)}{}^{\rho\sigma} &=
        \epsilon_{\mu\nu\rho\sigma} \kappa^\nu \sbar
        \Gamma^{\rho\sigma}\Gamma_5 s\\
        &= -2 \kappa^\nu \sbar \Gamma_{\mu\nu} s\\
        &=- 2 \kappa^\nu \sbar \Gamma_\mu \Gamma_\nu s = 0~,
      \end{split}
    \end{equation*}
    again using (a).

  \item By definition of Hodge star and (e) above,
    \begin{equation*}
      \omega^{(1)}\wedge \omega^{(3)} = - \omega^{(1)} \wedge \star
      \omega^{(1)} = - \eta(\omega^{(1)},\omega^{(1)}) \vol = 0~.
    \end{equation*}
  \end{enumerate}

\end{proof}

Two remarks are worth mentioning.  The first is that from parts (l),
(m) and (n) in the above proposition, it follows that $\omega^{(2)} = \omega^{(1)}
\wedge \theta$, $\widetilde\omega^{(2)} = \omega^{(1)} \wedge \widetilde\theta$
and $\omega^{(3)} = \omega^{(1)} \wedge \theta^{(2)}$ for some covectors
$\theta,\widetilde\theta$ and $2$-form $\theta^{(2)}$ which are
defined only modulo the ideal generated by $\omega^{(1)}$.

A second remark is that it is possible to prove the above proposition
without resorting to the Fierz identity, by exploiting the
representation theory of the spin group. The group $\Spin(V)$ sits
inside the Clifford algebra $\Cl(V)$ and hence $S$ becomes a
$\Spin(V)$-module by restriction. The volume element defines a complex
structure on $S$ which is invariant under the spin group. The identity
component of the spin group is isomorphic to $\SL(2,\CC)$ under which
$S$ is the fundamental $2$-dimensional complex representation. The
orbit structure of $S$ under $\Spin(V)$ is therefore very simple;
namely, there are two orbits: a degenerate orbit consisting of the
zero spinor and an open orbit consisting of all the nonzero spinors.
The stabiliser of a nonzero spinor $s$ is the abelian subgroup $H_s$
consisting of the null rotations in the direction of its Dirac current
$\kappa$, and it is a subgroup of the stabiliser of any object we can
construct from $s$ in a $\Spin(V)$-equivariant fashion: e.g., the
Dirac current and the Dirac $2$-form. Now the $H_s$-invariant
$2$-forms can be seen to be of the form $\kappa \wedge \theta$, for
some ``transverse'' $1$-form $\theta$, and hence the Dirac $2$-form
$\omega^{(2)}$ has this form. By equivariance under $H_s < \Spin(V)$,
the Clifford product of $\omega^{(2)}$ on $s$ must be again
proportional to $s$, but by squaring we see that the constant of
proportionality must be zero. Finally, Clifford multiplication by the
spacelike $\theta$ is invertible, so it must be that $\kappa$
Clifford-annihilates $s$.

\subsection{A further property of the Dirac current}
\label{sec:prop-dirac-curr}

For completeness we discuss a further algebraic properties of the
Dirac current. Recall that if $s \in S$, its Dirac current $\kappa$ is
defined by equation~\eqref{eq:DiracCurrent}. Let us define a symmetric
bilinear map $\kappa: S \otimes S \to V$ by
\begin{equation}
  \kappa(s_1,s_2) = \tfrac12 \left( \kappa_{s_1+s_2} - \kappa_{s_1} -
    \kappa_{s_2}\right),
\end{equation}
where $\kappa_s$ denotes the Dirac current of $s$.  It follows from
the representation theory of $\fso(V)$ that the map $\kappa$ is
surjective onto $V$.  Now consider a linear subspace $S'\subset S$ and
let $V'\subset V$ denote the image of the map $\kappa$ restricted to
$S' \otimes S'$.  For which $S'$ do we still have that $V'=V$?  The
following lemma, which is a modification of the similar result in
\cite{FigueroaO'Farrill:2012fp} for eleven dimensions, shows that this
holds provided $\dim S'> 2$.

\begin{lemma}
  \label{lem:homogeneity}
  Let $S'\subset S$ be a linear subspace with $\dim S' > \tfrac12 \dim
  S$.  Then the restriction of $\kappa$ to $S'\otimes S'$ is surjective
  onto $V$.
\end{lemma}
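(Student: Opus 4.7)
Since $\dim S = \dim V = 4$, the hypothesis leaves only $\dim S' \in \{3,4\}$. The case $\dim S'=4$ is immediate from the decomposition $\odot^2 S \cong V \oplus \fso(V)$ recorded in equation~\eqref{eq:EndSasLambda}, which shows that $\kappa$ is already surjective on $S \otimes S$. So the real content is the case $\dim S'=3$, which I plan to handle by contradiction.

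Suppose $V':=\kappa(S'\otimes S')$ is a proper subspace of $V$ and pick a nonzero $w$ with $\eta(w,V')=0$. Unpacking the definition of the Dirac current gives $\eta(w,\kappa(s_1,s_2)) = \left<s_1, w \cdot s_2\right>$, so this assumption is equivalent to $w \cdot S' \subseteq (S')^{\perp_\omega}$, where $\perp_\omega$ denotes the symplectic complement with respect to $\left<-,-\right>$. Since $\dim(S')^{\perp_\omega}= \dim S - \dim S' = 1$, this is an extremely tight constraint, which I plan to exploit in three steps.

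First, $w$ must be null: otherwise Clifford multiplication by $w$ is invertible on $S$ (because $w^2 = -\eta(w,w)\neq 0$), so $w\cdot S'$ would be three-dimensional and cannot fit inside a one-dimensional subspace. Second, the naive orthogonality $\eta(w,V')=0$ upgrades to the much sharper inclusion $V' \subseteq \RR w$: for every $s\in S'$, the Dirac current $\kappa_s$ lies in $V'$, is $\eta$-orthogonal to $w$, and is null by Proposition~\ref{prop:clifford}(e); and in four-dimensional Lorentzian signature the null vectors orthogonal to a null vector $w$ form the single line $\RR w$ (a direct check in an adapted basis). Polarising gives $\kappa(s_1,s_2)\in\RR w$ for all $s_1,s_2\in S'$, hence $V'\subseteq \RR w$. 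Third, from $V' \subseteq \RR w$ one has $V'^\perp \supseteq w^\perp$, a three-dimensional subspace; picking any $v\in w^\perp$ not proportional to $w$, a short calculation ($\eta(aw+u,aw+u) = \eta(u,u) < 0$ for $u$ in a spacelike complement to $\RR w$ inside $w^\perp$) shows $v$ is spacelike. Then $v$ is again invertible in the Clifford algebra, while $v\in V'^\perp$ forces $v\cdot S' \subseteq (S')^{\perp_\omega}$, so the dimension count from the first step recurs to give $3 \leq 1$, the desired contradiction.

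The nontrivial step in the plan is the middle one: promoting the bare orthogonality $\eta(w,V')=0$ to $V' \subseteq \RR w$. This is what makes Lorentzian signature special; null vectors orthogonal to a null direction collapse onto that direction, in contrast with the Riemannian situation where the same hypothesis would leave $V'$ essentially unconstrained. Once this collapse is combined with the nullity of the Dirac current, the remainder of the argument is just the observation that invertible Clifford multiplication by a non-null vector preserves dimensions.
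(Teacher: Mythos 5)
Your argument is correct and follows essentially the same strategy as the paper: Clifford multiplication by a vector $\eta$-orthogonal to $V'$ maps $S'$ into its smaller symplectic complement and so has nontrivial kernel, which by the Clifford relation forces nullity; the nullity of Dirac currents together with the Lorentzian collapse of null vectors orthogonal to a null direction then squeezes $V'$ into a line, yielding a dimension contradiction. Your case split on $\dim S'$ and the re-application of the Clifford argument to a spacelike $v$ are cosmetic variations of steps the paper carries out more economically (the hypothesis $\dim S' > \tfrac12 \dim S$ runs the nilpotency argument uniformly, and the contradiction follows at once from $\dim(V')^\perp \leq 1$ versus $\dim V' = 1$).
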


\begin{proof}
  Let $S' \subset S$ have $\dim S' > \tfrac12 \dim S$.  Let $V' = \im
  \left.\kappa\right|_{S'\otimes S'}$ and let $v \in (V')^\perp$.  We
  want to show that $v=0$ so that $(V')^\perp = 0$ and hence $V'= V$.
  By definition, $v$ is perpendicular to $\kappa(s_1,s_2)$ for all
  $s_1,s_2 \in S'$; equivalently, $\left<s_1, v \cdot s_2\right> =
  0$.  This means that Clifford multiplication by $v$ maps $S'\to
  (S')^\perp$, where $\perp$ here means the symplectic perpendicular.
  Because of the hypothesis on the dimension of $S'$, $\dim (S')^\perp
  < \dim S'$, so that Clifford multiplication by $v$ has nontrivial
  kernel.  By the Clifford relation \eqref{eq:clifford}, it follows
  that $v$ is null.  In other words, every vector in $(V')^\perp$ is
  null, and this means that $\dim (V')^\perp \leq 1$.  Now for every
  $s \in S'$, $\kappa(s,s)$ is null and perpendicular to the null
  vector $v$, so that one of two situations must occur: either $v=0$
  or else $\kappa(s,s)$ is collinear with $v$.  Suppose for a
  contradiction that $v\neq 0$.  Then $\kappa(s,s)$ is collinear
  with $v$ and, by polarisation, so are $\kappa(s_1,s_2)$ for all
  $s_1,s_2\in S'$.  But this says that $V'$ is one-dimensional,
  contradicting the fact that $\dim (V')^\perp \leq 1$.
\end{proof}

\bibliographystyle{utphys}
\bibliography{4dKSA}

\end{document}